\definecolor{forestgreen}{RGB}{34, 139, 34}
\newtheorem{definition}{Definition}
\newtheorem*{definition*}{Definition}
\crefname{definition}{Def}{Def}
\crefname{definition*}{Def}{Def}
\newtheorem{proposition}{Proposition}
\newtheorem*{proposition*}{Proposition}
\crefname{proposition}{Prop}{Prop}
\crefname{proposition*}{Prop}{Prop}
\newtheorem{corollary}{Corollary}
\newtheorem*{corollary*}{Corollary}
\crefname{corollary}{Corollary}{Corollary}
\crefname{corollary*}{Corollary}{Corollary}
\newtheorem*{lemma*}{Lemma}
\newtheorem{lemma}{Lemma}
\crefname{lemma}{Lemma}{Lemma}
\crefname{lemma*}{Lemma}{Lemma}
\newtheorem{assumption}{Assumption}
\newtheorem*{assumption*}{Assumption}
\newenvironment{proofsketch}{%
    \proof}{\endproof}
\newtheoremstyle{TheoremNum}
{\topsep}{\topsep}              
{\itshape}                      
{}                              
{\bfseries}                     
{.}                             
{ }                             
{\thmname{#1}\thmnote{ \bfseries #3}}
\theoremstyle{TheoremNum}
\newtheorem{numprop}{Proposition}
\DeclareMathOperator*{\argmax}{argmax}
\newcommand*\circled[1]{\tikz[baseline=(char.base)]{
        \node[shape=circle,draw,inner sep=1pt] (char) {#1};}}
\newcommand{\sys}{{NetShaper}}
\newcommand{\nsc}{network side channel}
\newcommand{\nsca}{network side-channel}
\newcommand{\eg}{e.g.,~}
\newcommand{\ie}{i.e.,~}
\newcommand{\medvid}{MedFlix}
\newcommand{\flowmap}{\texttt{FlowMap}}
\newcommand{\prepare}{\texttt{Prepare}}
\newcommand{\ushaper}{\texttt{UShaper}}
\newcommand{\dshaper}{\texttt{DShaper}}
\newcommand{\1}{\mathds{1}}
\newcommand{\istream}{S}
\newcommand{\ostream}{O}
\newcommand{\streamduration}{\tau}
\newcommand{\streamdurationcfg}{\tau}
\newcommand\norm[1]{\|#1\|}
\newcommand{\ssens}{\Delta_{\winlen}}
\newcommand{\qsens}{\Delta_\dpintvl}
\newcommand{\winlen}{W}
\newcommand{\unshapedQ}{Q}
\newcommand{\buffQ}{Q}
\newcommand{\qlen}{L}
\newcommand{\dpintvl}{T}
\newcommand{\qlent}[1]{L_{#1}}
\newcommand{\qlendp}{\tilde{L}}
\newcommand{\qlendpt}[1]{\tilde{L}_{#1}}
\newcommand{\payload}{R}
\newcommand{\dummy}{D}
\newcommand{\numupdates}{\lceil \frac{\winlen}{\dpintvl} \rceil}
\newcommand{\varnumupdates}{N}
\newcommand{\base}{{\bf Base}}
\newcommand{\nsnoshape}{{\bf NS\textsubscript{M}}}
\newcommand{\ns}{{\bf NS}}
\newcommand{\nssim}{{\bf NS\textsubscript{S}}}
\newcommand{\constshape}{{\bf CR}}
\newcommand{\pacer}{{\bf Pacer}}
\newcommand{\citeme}[1]{\textcolor{red} {[{\bf #1}]}}
\newcommand{\am}[1]{\textcolor{cyan}{\bf AM: #1}}
\newcommand{\ml}[1]{\textcolor{orange}{\bf ML: #1}}
\newcommand{\as}[1]{\textcolor{forestgreen}{\bf AS: #1}}
\newcommand{\todo}[1]{{#1}}
\newcommand{\update}[1]{{#1}}
\begin{document}

\title{\Large \bf \sys: A Differentially Private Network Side-Channel Mitigation
System}
\author{Amir Sabzi, Rut Vora, Swati Goswami, Margo Seltzer, Mathias L\'ecuyer,
Aastha Mehta\\
    The University of British Columbia
}
\date{} \thispagestyle{empty}

\maketitle

\begin{abstract}

The widespread adoption of encryption in network protocols has significantly
improved the overall security of many Internet applications. However, these
protocols cannot prevent {\em network side-channel leaks}---leaks of sensitive
information through the sizes and timing of network packets. We present {\sys},
a system that mitigates such leaks based on the principle of traffic shaping.
{\sys}’s traffic shaping provides differential privacy guarantees while adapting
to the prevailing workload and congestion condition, and allows configuring a
tradeoff between privacy guarantees, bandwidth and latency overheads.
Furthermore, {\sys} provides a modular and portable tunnel endpoint design that
can support diverse applications. We present a middlebox-based implementation of
{\sys} and demonstrate its applicability in a video streaming and a web service
application.



\if 0
We propose {\sys}, a system to mitigate {\nsca} leaks in Internet applications
with quantifiable privacy guarantees. {\sys} enables adaptive traffic shaping
with differential privacy guarantees on per-flow information leakage while
respecting congestion control. {\sys} adopts a middlebox architecture and can
support diverse applications.  We present an implementation of {\sys}
architecture on a router integrating with a VPN service, and demonstrate its
applicability in three applications---VoIP, video streaming, and geo-replicated
web services.  {\sys}'s traffic shaping incurs at most \todo{x\%} on bandwidth
overheads, \todo{y\%} on latency overheads, and \todo{z\%} on application
throughput.
\fi
\end{abstract}

\section{Introduction}
\label{sec:intro}

With the proliferation of TLS and VPN, traffic encryption has become the de
facto standard for securing data in transit in Internet applications. Traffic
can be encrypted at various layers, such as HTTPS, QUIC, and IPSec.
%
While these protocols prevent {\em direct} data breaches on the Internet, they
cannot~prevent leaks through {\em indirect} observations
of the encrypted traffic.

Indeed, encryption cannot conceal the shape of an
application’s traffic, \ie the sizes, timing, and number of packets sent
and received by an application. In many applications, these parameters strongly
correlate with sensitive information. For instance, 
traffic shape can reveal video
streams \cite{schuster2017beautyburst}, website visits
\cite{wang2014supersequence, bhat2019varcnn},
the content of VoIP conversations \cite{white2011phonotactic}, and
even users’ medical and financial secrets \cite{chen2010reality}.

In such {\em network side-channel leaks}, an adversary (\eg a malicious or
a compromised ISP) observes the shape of an application's traffic as it
passes through a link under its control and
infers the application's sensitive data from this shape.


Obfuscation techniques, which add ad hoc noise~\cite{luo2011httpos}
or adversarial noise \cite{shan2021dolos, nasr2021defeating, rahman2020mockingbird,
hou2020wf, abusnaina2020dfd, gong2022surakav}
in an application's network traces, do not provide comprehensive protection
against {\nsca} attacks \cite{zhang2019statistical}.
In fact, recent advances in machine learning (ML)
have greatly improved the ability to filter
out noise due to congestion or path variations and infer secrets from noisy data
\cite{schuster2017beautyburst, bhat2019varcnn, hayes2016kfp, sirinam2018df}. For
instance, our own novel classifier based on Temporal Convolution
Networks (TCN) \cite{bai2018tcn} can infer video streams even from short bursts
of noisy
measurements over the Internet (see \S\ref{sec:background}~for details).
Alternatively, a sensitive application could try to improve side-channel
resilience by splitting traffic over multiple network paths
\cite{cadena2020trafficsliver, wang2022leveraging} or by
using dedicated physical links that are not controlled by the adversary.
However, such solutions are inadequate against a powerful adversary
that can monitor a large fraction of the Internet
\cite{beckerle2022splitting} and may incur prohibitive network
administration costs for small users on the Internet.


In contrast, a principled and practical approach to mitigating network
side-channel leaks is {\em traffic shaping}. It involves modifying the victim’s
packet sizes and timing to make the resultant shape independent of secrets, so
that an adversary cannot infer the secrets despite observing the (shaped)
traffic.

Constant shaping involves sending fixed-sized packets at a constant rate, which
is secure but incurs non-trivial bandwidth and/or latency overhead for
applications with variable or bursty workloads \cite{saponas2007devices}.
Variable shaping strategies attempt to adapt traffic shapes to reduce the
overhead
at the cost of some privacy. However, the state-of-the-art (SOTA) variable
shaping strategies rely on ad hoc heuristics that yield weak privacy guarantees
\cite{wang2014supersequence, nithyanand2014glove, wang2017walkietalkie} or
unbounded privacy leaks \cite{gong2020zero, cai2014csbuflo,
lu2018dynaflow, juarez2016wtfpad, cai2014tamaraw}.
Some techniques provide strong guarantees but
require extensive a priori profiling of an applications' traffic to compute
shapes
\cite{mehta2022pacer, zhang2019statistical}.

%
%
%
In addition to a shaping strategy, network side-channel mitigation also
requires a robust implementation of packet padding and transmit scheduling.
Many solutions attempt to protect traffic by controlling shaping from only one
end of a communication (\ie either a client or a server) and provide only
best-effort protection
\cite{luo2011httpos, smith2022qcsd, cherubin2017llama}. Other solutions rely on
trusting third-party mediators (\eg Tor bridges), which implement shaping
between the clients and mediators and between the servers and mediators
\cite{mohajeri2012skypemorph, winter2013scramblesuit}.
In Pacer \cite{mehta2022pacer}, both application endpoints integrate a
shaping system to comprehensively mitigate network side channels. However, Pacer
encumbers application end hosts with non-trivial changes in the network
stack to implement shaping, thus deterring adoption.

In this work, we address two main questions. First, is there a variable traffic
shaping strategy that provides quantifiable and tunable privacy guarantees at
runtime without requiring extensive pre-profiling of application traffic?
Second, can traffic shaping be provided as a generic, portable, and efficient
solution that can be integrated in different network settings and can support
diverse applications?

We present {\sys}, a network side-channel mitigation system that answers both
questions in the affirmative.
First, {\sys} relies on a differential privacy (DP) based traffic shaping
strategy, which provides quantifiable and tunable privacy guarantees.
{\sys} specifies the privacy parameters for a configurable window of
transmission. Moreover, it can configure the parameters independently for each
direction of traffic on a communication link. The DP guarantees can be composed
based on these parameters to achieve bounded privacy leaks for arbitrary
bidirectional traffic.
Overall, applications can tune the shaping based only on the privacy guarantees
they desire and the overheads they can afford, without the need for
profiling their traffic.
While strong privacy guarantees require DP parameters that incur
large overheads, in practice, {\sys} can defeat SOTA attacks with even small
amounts of DP noise, thus incurring low overheads.

Secondly, we present a traffic shaping tunnel with a modular endpoint design
that can conceptually be integrated with any network stack and within any node.
The tunnel implements padding and transmit scheduling of packets while
adhering to the DP guarantees {\em by design}.
{By placing the tunnel endpoint in a middlebox at the edge of the
private network, {\sys} can simultaneously protect
the traffic of multiple applications. Moreover, the middlebox can amortize the
shaping overheads among multiple flows without compromising the privacy for
individual flows.
}

Together, the DP shaping strategy and {\sys}'s tunnel provide effective {\nsca}
mitigation for diverse applications, such as video streaming and web services,
with modest overheads.
To the best of our knowledge, {\sys} is the first system to provide dynamic
traffic shaping with quantifiable and tunable privacy guarantees based on DP.


\textbf{Contributions.}
(i)
We design a new attack classifier based on a
Temporal Convolution Network (TCN) \cite{bai2018tcn}
and demonstrate its ability to infer
videos streams from traffic shapes under noisy network traffic measurements in
the Internet (\S\ref{sec:background}).
(ii) We model network side-channel mitigations as a differential privacy problem
and provide a traffic shaping strategy that offers
($\epsilon$,$\delta$)-differential privacy guarantees (\S\ref{sec:dp}).
(iii) We design a QUIC-based traffic shaping tunnel and present a
middlebox-based implementation of the tunnel, which supports traffic shaping
while adhering to DP guarantees (\S\ref{sec:design}).
(iv) We demonstrate {\sys}'s efficacy in defeating a SOTA classifier
\cite{schuster2017beautyburst} and our new TCN classifier.
We empirically evaluate the tradeoffs between {\sys}'s privacy guarantees
and performance overheads while mitigating {\nsca} leaks in two classes of
applications that have already been used in prior work, namely video streaming
and web service (\S\ref{sec:eval}).
(vi) We present a formal proof of {\sys}'s differential privacy
guarantees~(\S\ref{appendix:dp}).

\section{Background, Motivation, and Overview}
\label{sec:background}


\subsection{Network Side-Channel Attacks}
\label{subsec:attack-bg}

We start by explaining the workings of a network side-channel attack with an
example application.
Consider {\medvid}, a fictitious medical video service that offers videos on
symptoms, treatment procedures, and post-operative care.
The goal of an adversary is to infer the videos streamed by users visiting the
service.
ISPs can aggregate such information to build per-user profiles and subsequently
monetize them.
Additionally, competitors might exploit {\nsc}s to acquire corporate intelligence
without detection.

The adversary performs the attack in two stages. First, the adversary requests
each video from the medical service as a client and collects traces of the
bidirectional network traffic generated while streaming each video. The
adversary may collect multiple traces for each video stream to account for
network variations in the traffic. The adversary then builds a classifier over
the captured traces to identify the video streams.
Prior work has used several features for classification, such as packet sizes,
inter-packet timing, total bytes transferred in a burst of packets, the burst
duration and inter-burst interval, and the direction of packets or bursts
\cite{schuster2017beautyburst}.

We reproduce the Beauty and the Burst (BB) classifier
\cite{schuster2017beautyburst}, a state-of-the-art CNN
classifier for classifying video streams from network traces.
Furthermore, we present a new TCN classifier \cite{bai2018tcn}, which is an
improvement over the BB classifier.
We describe the classifiers in \S\ref{appendix:tcn}.
Here, we evaluate~the efficacy of the two classifiers for a network side-channel
attack.

We set up a video service and a video client as two Amazon AWS VMs placed in
Oregon and Montreal, respectively. The video server hosts a dataset of
\update{100 YouTube videos} at 720p resolution with MPEG-DASH encoding
\cite{mpeg-dash}. The client streams the first \update{5 min} of each video
\update{over HTTPS} and collects the resulting network packet traces using
tcpdump.
We stream each video \update{100} times, thus collecting a total of
\update{10,000} traces.

%
%
The classifiers' goal is to predict the video from a network trace.
%
For each classifier, we transform each packet trace into a sequence of burst
sizes transmitted within 1s windows~and normalize the sequence by dividing
each burst size by the total size of all bursts.
%
We evaluate the performance of both classifiers with three datasets: a small
dataset consisting of 20 videos with their 100 traces each (\ie total
2000 traces), a medium
dataset with 40 videos (4000 traces),  and a large dataset comprising
all \update{100 videos} (\update{10000} traces).
We train the classifiers for 1000 epochs with an 80-20 train-test split.
%
BB's classification accuracy, recall, and precision with the small dataset are
0.85,~0.85, and 0.78, respectively, which drop to
0.61,~0.63, and 0.49, respectively, with the medium dataset, and further
drop to 0.01 each for the large dataset.~TCN's accuracy, recall, and precision
are above 0.99 for all datasets.~TCN performs better than BB because it
is a complex model with residual layers and, hence, is robust to noise in the
traces.
%

%
%

Similarly, advanced ML classifiers are capable of identifying web traffic
\cite{bhat2019varcnn, sirinam2018df}.
%
In general, classifiers will continue to evolve,
increasing the adversary's capabilities to make inferences from noisy
measurements.
Hence, we need principled mitigations that address
current SOTA attacks and achieve quantifiable
leakage, which can be configured based on privacy
requirements and overhead tolerance.

\subsection{Key Ideas}
\label{subsec:key-ideas}

A secure and practical {\nsca} mitigation system must satisfy the following
design goals:
{\bf G1.} Mitigate leaks through all aspects of the shape of transmitted traffic,
{\bf G2.} Provide quantifiable and tunable privacy guarantees for the
communication parties,
{\bf G3.} Minimize overheads incurred while guaranteeing privacy,
{\bf G4.} Support a broad class of applications, and
{\bf G5.} Require minimal changes to applications.

{\sys}'s shaping prevents leaks of the traffic content through sizes and
timing of packets transmitted along each direction between application nodes
(G1).
In addition, {\sys} relies on the following three key ideas.
%

\textbf{Differentially private shaping.}
Unlike constant shaping, variable shaping can adapt traffic shape, potentially
based on runtime workload patterns, and thus significantly reduce shaping
overheads.
{Unfortunately, existing variable shaping techniques either have unbounded
privacy leaks, offer only weak privacy guarantees, or require extensive
profiling of an application's network traces.
}
%
{\sys}'s novel differential privacy (DP) based shaping strategy provides
quantifiable and tunable bounds on privacy leaks, without relying on profiling
of application traffic (G2).
{\sys} shapes an application's traffic in periodic, fixed-length {\em
shaping intervals} and provides DP in the length of the application byte stream
(burst) accumulated within each interval.
The DP guarantees compose over a sequence of multiple intervals and, thus, for
streams of arbitrary length (albeit with
degraded guarantees).

\textbf{Shaping in a middlebox.}
{\sys} uses a tunnel abstraction to implement traffic shaping.
The tunnel shapes application traffic such that an
adversary observing the tunnel traffic cannot infer application secrets.
In principle, a tunnel endpoint could be integrated with the application host
(\eg in a VM isolated from the end-host application) or in a
separate node through which the application's traffic passes.
{\sys} relies on the second approach and implements the tunnel endpoint as a
middlebox, which could be integrated with an existing network element, such as a
router, a VPN gateway, or a firewall.
The middlebox implementation enables securing multiple applications without
requiring modifications on individual end hosts (G4).
Furthermore, it allows pooling multiple flows with the same privacy requirements
in the same tunnel, which helps to amortize the per-flow overhead (G3).

\textbf{Minimal modifications to end applications.}
By default, {\sys} shapes all traffic through a tunnel with a fixed differential
privacy guarantee.
However, an application can explicitly specify different DP parameters to adapt
the privacy guarantee enforced for its traffic,
as well as bandwidth and latency constraints and any prioritization preferences
on a per-flow basis.
This requires only a small modification in the
application; it must transmit a shaping configuration message
to the middlebox.
Thus, {\sys} offers a balance between being fully
application-agnostic and optimizing for privacy or overhead with minimal
support from applications (G2, G5).

\subsection{Threat Model}
\label{subsec:threat-model}

{\sys}'s goal is to hide the content of an application's network traffic.
Hiding the type of traffic \cite{shapira2019flowpic}, the
communication protocol \cite{winter2013scramblesuit}, or the
application identity \cite{dyer2012peekaboo, danezis2009https} are
non-goals, although {\sys} can adapt its shaping strategy to address these goals.
The applications are non-malicious and do not leak their own secrets.

We assume that the application endpoints are inside separate trusted
private networks (\eg each node is behind a VPN gateway node) and the adversary
cannot infiltrate the private network, or the clients and servers within it.
(Thus, we exclude covert attacks~\cite{zhang2011predinteractive} and
colocation-based~attacks~\cite{schuster2017beautyburst,mehta2022pacer}).
The adversary controls network links in the public Internet (\eg ISPs) and can
record, measure, and tamper with the victim application's traffic as it
traverses the links under the adversary's control.
The adversary can precisely record the traffic shape---the sizes, timing,
and direction of packets---between the {gateway nodes}.
In particular, it may have access to observations of arbitrary known
streams to train its attack.
It may also have knowledge about {\sys}, including its
shaping strategy and privacy configurations.

We do not consider threats due to observing the IP addresses of
packets~\cite{hoang2021domain}, although {\sys} can hide IP addresses of
applications behind a shared traffic shaping tunnel.


{\sys} does not address leaks of one application's sensitive data through the
traffic shape of colocated benign applications.
Such leaks can arise, for instance, due to microarchitectural interference among
applications colocated on a host or among their flows if they pass through
shared links.
End hosts could implement orthogonal mitigations against colocated
applications \cite{mehta2022pacer,
shi2011limiting,
varadarajan2014scheduler, braun2015robust}
and combine
{\sys}'s shaping with TDMA scheduling on network
links~\cite{beams2021ifs, vattikonda2012tdma}.

We present a middlebox-based {\sys} implementation
that can be integrated with an organization's trusted gateway router.
%
{\sys}'s trusted computing base (TCB) includes all components in the
organization's private network and the middleboxes.
Bugs, vulnerabilities or side channels in the middleboxes that threaten traffic
confidentiality could be mitigated using orthogonal
techniques, such as software fault isolation~\cite{tan2017sfi}, resource
partitioning \cite{liu2016catalyst}, and constant-time implementation techniques
\cite{
    almeida2016verifying}.

Under these assumptions, {\sys} prevents leaks of application secrets
through the sizes and timing of packets transmitted in
either direction between the application~endpoints.


\if 0
\subsection{QUIC}
\label{subsec:quic}
QUIC~\cite{langley2017quic} is a transport protocol that enables structured and
flow-controlled communication over UDP. Additionally, QUIC offers low-latency
connection establishment, reliable in-order delivery of data, and network path
migration.

A QUIC connection is set up between an initiator and a receiver, which can
transport one or more streams of data. A packet on a QUIC connection is
identified by a connection ID and a packet number, and may
carry variable length frames, each identified by a stream ID and an offset
indicating the number of bytes transmitted on a stream until the frame.

A QUIC packet's payload and a part of its header are encrypted, except
for the connection ID and some flags relevant to decode the packets. Thus, all
data on a connection is protected against direct observation
by an adversary.

Of particular relevance are three types of frames in a QUIC packet: (i) STREAM
frames, which carry the stream data, (ii) ACK frames, which acknowledge the
received frames, and (iii) PADDING frames, which is a special type of frame that
can be used to increase packet sizes and consume congestion window. However,
PADDING frames do not produce acknowledgements to open congestion window, thus
being distinguishable from other frames in a QUIC stream. Prior work has shown
that QUIC PADDING frames are not sufficient to protect against traffic analysis
attacks~\cite{siby2022yougetpadding}. Thus, a sound way to generate padding within
QUIC is to open a separate stream with a special ID and generate STREAM frames
with the necessary amount of padding bytes on that stream. We use this approach
in our implementation (see \S\ref{sec:implementation}).
%
\fi

\subsection{A Primer on Differential Privacy}
\label{subsec:DP-background}

{Finally, we provide a brief primer on DP, the steps
involved in building a DP mechanism, and the key properties of DP that are
relevant in the context of traffic shaping (\S\ref{sec:dp}).}

Developed originally for databases, DP is a technique to provide aggregate
results without revealing information about individual database
records.
Formally, a randomized algorithm $\mathcal{M}$ is $(\varepsilon, \delta)$-DP
if, for all $\mathcal{R} \subseteq \textrm{Range}(\mathcal{M})$ and for all
{\em neighboring} databases $d, d'$ that differ in only one element:
\begin{align}
\label{eq:dp}
P[\mathcal{M}(d)\in \mathcal{R}] \leq e^{\varepsilon}~P[\mathcal{M}(d') \in
\mathcal{R}] + \delta
\end{align}
The parameter $\varepsilon$ represents the {\em privacy loss} of algorithm
$\mathcal{M}$, \ie given a
result of $\mathcal{M}$, the information gain for any adversary on learning
whether the input database is $d$ or $d'$ is at most $e^{\varepsilon}$
\cite{kasiviswanathan2014semantics}.
The $\delta$ is the probability with which $\mathcal{M}$ fails to bound the
privacy loss to $e^{\varepsilon}$.

Building such a randomized DP algorithm $\mathcal{M}$ involves three
main steps: (i) defining neighboring database states, (ii) defining a database
query and determining the sensitivity of the query to changes in neighboring
databases, and (iii) adding noise to the query.
Neighboring databases $d$ and $d'$, as mentioned above, are characterized by the
{\em distance} between the databases, which quantifies the granularity at which
the DP guarantee applies.
Traditionally, this distance is defined as the number of
records that differ between $d$ and $d'$. However, DP also extends to other
neighboring definitions and distance metrics used in specific settings
\cite{chatzikokolakis2013broadening, lecuyer2019certified}.

Given a database query $q$, the sensitivity of the query $\Delta q$ is the max
difference in the result achieved when the query is executed on the neighboring
databases $d$ and $d'$. Intuitively, a larger $\Delta q$ implies higher
probability of an adversary inferring from a result the database on which the
query was executed, thus incurring higher privacy loss.
To mitigate this privacy loss, a DP mechanism therefore adds noise to
the query result to hide the true result and the underlying database. Popular
noise mechanisms are Laplace \cite[\S3.3]{dwork2014algorithmic} and Gaussian
\cite{dong2022gaussian}.

\if 0
The difference between $d$ and $d'$ is called the {\em distance} between the
databases.
It quantifies the granularity at which the DP guarantee applies: given the
result of $\mathcal{M}(d)$, an adversary cannot distinguish whether it was ran
on $d$ or any neighboring database $d'$.
Traditionally, this distance is defined as the number of records that differ
between $d$ and $d'$, and the DP guarantee is over any neighboring databases (at
distance one).
However, DP extends to other neighboring definitions based on distance metrics
for use in specific settings \cite{chatzikokolakis2013broadening,
lecuyer2019certified}, which we leverage in \sys.
In {\sys} we also use a different distance, and hence neighboring, definition to
define DP guarantees over dynamic traffic streams
(\S\ref{subsec:building-blocks}).

\ml{[Introducing the notion of sensitivity, which we need before we can detail]
The most common DP mechanisms, which we use in {\sys}, make a computation DP by
adding noise to the non-private result. This noise needs to be scaled to the
{\em sensitivity} of the computation, which is the worst possible change to the
result when running the mechanism on two neighboring databases $d$ and $d'$
(details in \S\ref{subsec:building-blocks}).
}
\fi

DP provides three properties.
As we will show in \S\ref{subsec:dp-queue-measurements}, these are also of
relevance to {\sys}'s DP traffic shaping.
First, DP is resilient to post-processing: given the result $r$ of any
$(\varepsilon, \delta)$-DP mechanism $\mathcal{M}$, any function $f(r)$
of the result is also $(\varepsilon, \delta)$-DP.
As a result, any computation or decision made on a DP result is still DP with
the same guarantees.
Second, DP is closed under adaptive sequential {\em composition}: the combined
result of two DP mechanisms $\mathcal{M}_1$ and $\mathcal{M}_2$ is also DP,
though with higher losses ($\varepsilon$ and $\delta$).
We use the R\'enyi-DP definition~\cite{mironov2017renyi} to achieve simple but
strong composition results and subsequently convert the results
back to the standard DP definition.
%
Third, DP is robust to auxiliary information: the guarantee from \Cref{eq:dp}
holds regardless~of any side information known to an attacker.
Therefore, the attacker's knowledge of the shaping mechanism does not affect its
privacy guarantees.
That is, an attacker knowing or controlling part of the database cannot extract
more knowledge from a DP result than without this side information.


\section{Differentially Private Traffic Shaping}
\label{sec:dp}

\begin{figure}[t]
    \centering
    \includegraphics[width=\columnwidth]{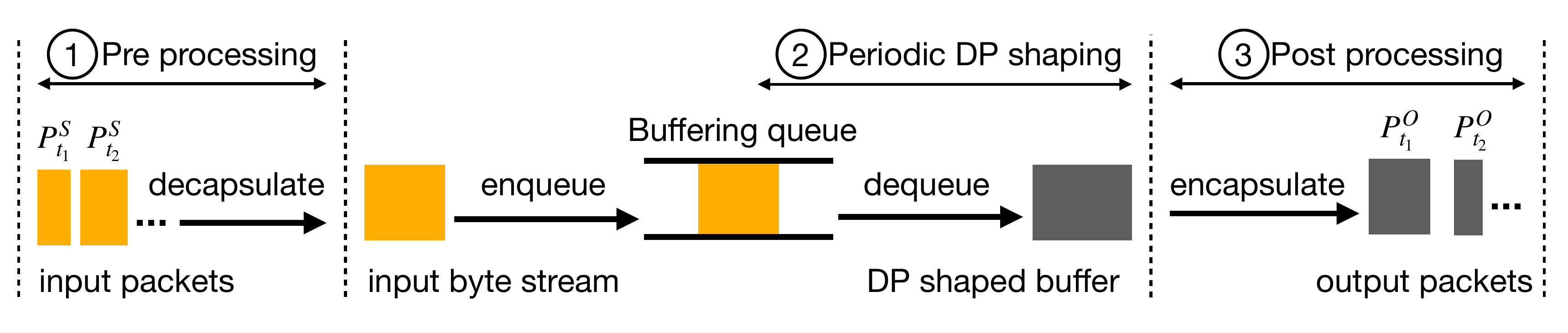}
    \caption{Overview of DP shaping}
    \vspace{-0.4cm}
    \label{fig:dp-overview}
\end{figure}

The goal of differentially private shaping is to dynamically adjust packet
sizes and timing based on the available data stream, while ensuring that the DP
guarantees hold for any information that an adversary
(\S\ref{subsec:threat-model}) can observe.

%
We first formally model an application's input stream as a packet sequence
$\istream = \{{P^{\istream}_1}, {P^{\istream}_2}, {P^{\istream}_3}, \dots \}$,
where ${P_i}^\istream = (l^{\istream}_i, t^{\istream}_i)$ indicates that the
$i$\textsuperscript{th} input packet in $\istream$ has length $l^{\istream}_i$
bytes and is transmitted at timestamp $t^{\istream}_i$.
We call the total duration of a finite stream $\streamduration^S
\triangleq \max t^{\istream}_i - t^{\istream}_1$.
Without shaping, an adversary can precisely observe $\istream$ and infer
the content, which is correlated with the~stream (\S\ref{subsec:attack-bg}).

\Cref{fig:dp-overview} provides a high-level overview of the
differentially-private shaping mechanism. Shaping happens in periodic intervals
of fixed length $\dpintvl$, called the {\em DP shaping intervals}.
\circled{1} As the packets in an application's stream arrive, the payload bytes
extracted from the packets are placed into a {\em buffering queue}, $\buffQ$.
\circled{2} In each periodic interval, the DP shaping algorithm
performs a {\em DP query}: it measures the length of $\buffQ$ with DP,
to determine the number of bytes to transmit in the next interval.
{\sys} then prepares a DP shaped buffer using the
payload bytes from $\buffQ$, and additional dummy bytes if required.
This shaped buffer is enqueued to be sent over the network at the end of
the DP shaping interval, right before the next interval starts.
\circled{3} Finally, data in the shaped buffer may be split into one or more
packets, as part of a post-processing step, and transmitted to the network.

The size of each shaped buffer generated in an interval has ($\varepsilon_{T},
\delta_{T}$)-DP guarantees. The per-interval guarantees
compose over a sequence of multiple intervals, thus providing DP guarantees for
traffic streams of arbitrary lengths. The guarantees degrade as the stream length increases (\Cref{prop:dp})

%
We now discuss the steps of building a DP mechanism for
traffic streams in \S\ref{subsec:building-blocks} and the
privacy guarantees in \S\ref{subsec:dp-queue-measurements}.

\subsection{DP for Traffic Streams}
\label{subsec:building-blocks}

We now discuss the three steps for building a DP mechanism on
traffic streams. Specifically, we present our neighboring definition for
streams, define the DP query on streams that {\sys} runs and bounds its sensitivity,
and show the DP mechanism that we use to make the query DP.



\textbf{Step 1: Neighboring Definition.}
Building a DP mechanism requires a suitable definition of neighboring
streams, which requires a notion of distance between streams.
The neighboring definition has two implications. First, it
determines the sensitivity for a ``query'' subject to the DP mechanism and, in
turn, the amount of noise necessary for ensuring a given DP guarantee. Second,
it defines the granularity of privacy guarantee: intuitively, neighboring streams
are indistinguishable based on only the results of the DP mechanism applied to
them.
Hence, we aim for a neighboring definition for which ``streams have many
neighbors'', and the sensitivity of the DP query we want to run is as low as
possible.

\update{Defining neighbors over streams has two challenges. First, the streams
can be arbitrarily long and the distance between streams would typically
increase with stream length. Secondly, if we consider streams with packet
timestamps at the finest granularity, the distance between the streams may be as
large as the sum of sizes of all packets in both streams. Both these factors
imply that a stream would either have few neighbors (enabling weak privacy), or
the neighboring definition would need to specify a large distance threshold,
requiring a lot of noise for strong DP guarantees.}

To keep the neighboring distance threshold small, we take two steps.
First, we define the notion of a {\em neighboring window}, which is a time
interval of configurable length $\winlen$ over input streams.
For notational convenience, we set $\winlen$ as an integral multiple of the DP
shaping interval $\dpintvl$ and write
$\winlen = k \dpintvl$, although this is not a strict requirement.

\update{Secondly, to measure the distance between streams over a neighboring
window, we consider the total bytes in each stream (burst lengths) transmitted
within coarse-grained time intervals and use the difference in the burst lengths
in the intervals within the window. Intuitively, we need an interval granularity
that is coarse enough to generate similar burst length sequences in more
streams, but is also small enough to bound the accumulation of differences over
time (to bound sensitivity, our next step).}
As will become clear with \Cref{prop:sensitivity}, the DP shaping
interval $\dpintvl$ is the coarsest granularity that we can use.
Hence, considering a neighboring window starting at a timestamp $t_w$, \ie
$[t_w,~t_w + \winlen)$,
we define the following representation of stream $\istream$ over the window
$[t_w,~t_w +\winlen)$ at granularity $\dpintvl$: $S_{t_w, W} = \{L^S_{t_w},
L^S_{t_w+\dpintvl}, L^S_{t_w+2\dpintvl}, \ldots, L^S_{t_w+(k-1)\dpintvl}\}$,
where $L^S_{t} \triangleq \sum_{(l^S_i, t^S_i) \in S} \1\{t^S_i \in [t,
t+\dpintvl)\} l^{\istream}_i$ is the total application bytes accumulated in
$\buffQ$ in the interval $[t, t+\dpintvl)$.

We now present the following neighboring definition:
\begin{definition}
\label{def:neighboring-streams}
Two streams $S$ and $S'$ are neighbors if, for any neighboring window
$[t_w,~t_w +\winlen)$, the L1-norm distance between their representations
$S_{t_w, W}$ and  $S'_{t_w, W}$  is less than $\ssens$.
Formally, $S$ and $S'$ are neighbors if:
\begin{equation}
\max_{t_w} {\norm{S_{t_w, W} - S'_{t_w, W}}}_1 \leq \ssens .
\end{equation}
\end{definition}
We utilize the L1-norm (the sum of absolute values) to quantify the distance
between two traffic streams, as it captures differences in both traffic size and
temporal alignment, at the granularity of $\dpintvl$.
%
%
We will show (in \Cref{prop:sensitivity}) that despite our restriction
of defining neighboring streams at a granularity of $\dpintvl$, and of computing
distances over windows of length $\winlen$, we can quantify {\sys}'s DP
guarantees for streams at any granularity and of any length.

The neighboring window length $\winlen$ and neighboring distance
$\ssens$ are both configuration parameters, which are set before the start of an
application's transmission.
In practice, $\winlen$ would be in the order of milliseconds to seconds.
Subsequently, one would determine $\ssens$ based on the typical
difference of traffic between application streams over windows of length
$\winlen$.
The practical upper bound for $\ssens$ is the NIC's line rate times
window length $\winlen$, but smaller values based on domain knowledge are
often possible.

\if 0
\paragraph{From windows to intervals.}
{Based on \Cref{assumption:window} and \Cref{def:neighboring-streams},}
the window length $\winlen$ affects the maximum number of bytes that can be
accumulated in the buffering queue (thus, the maximum distance between stream
pairs), as well as the transmission delay for the payload bytes enqueued.
Specifically, large windows lead to high-latency bursty traffic.
\shepherd{To reduce latency and burstiness}
\as{comment 1.b: we should explain how does having two windows reduce the
burstiness and why we cannot make W as small as T},
{\sys} further splits the neighboring windows into smaller intervals of length
$\dpintvl$, called the {\em DP shaping intervals}, and performs a {\em DP
measurement} at the beginning of each interval.
%
\todo{Note that this quantization of application traffic into intervals is
essential for modeling the DP mechanism and the privacy, latency, and
bandwidth overheads. However, the model does not make any assumption about the
alignment of the actual traffic stream with the DP intervals.}
\fi


\textbf{Step 2: DP query and sensitivity.}
In {\sys}, a DP query measures the buffering queue length $\qlen$ with DP at
intervals $\dpintvl$. This noisy measurement determines the number of bytes
that must be transmitted in the interval.
To make the measurement of $\qlen$ differentially private, we need to bound the
sensitivity $\qsens$ of the queue length variable $\qlen$.
This sensitivity $\qsens$ is the maximum difference in $\qlen$ that can be
caused by changing one application stream to neighboring one. Formally, consider
two alternative neighboring streams $\istream$ and $\istream'$ passing through
the queue.
Suppose that when transmitting $\istream$ (similarly $\istream'$), the
queue length at time $k$ is denoted by $\qlent{k}$ (respectively $\qlent{k}'$).
Then, assuming w.l.o.g. that $k \geq k'$:
\setlength{\abovedisplayskip}{0pt}
\begin{equation}
    \qsens = \max_{k = 0}^{\streamduration}~\max_{\istream,
        \istream'} | \qlent{k} - \qlent{k}' |
    \label{eqn:ssens}
\end{equation}

Bounding $\qsens$ is still challenging though, as our neighboring definition
only bounds the difference in traffic between two streams over any window of
length upto $\winlen$.
Because of this, when $\streamduration >> \winlen$, differences between what
$\istream$ and $\istream'$ would enqueue in $\buffQ$ can accumulate over time,
and the difference betwen $\qlent{t}$ and $\qlent{t}'$ can grow unbounded over
time.

To bound $\qsens$, {\sys} relies on the key assumption that
the tunnel can always transmit all incoming data from
application streams within any $\winlen$-sized time window. That is,
\begin{assumption}\label{assumption:window}
All bytes enqueued prior to or at time $t$ are transmitted by time $t+W$.
\end{assumption}
To enforce this assumption, {\sys} implements a time to live in the buffering
queue, flushing all bytes older than $\winlen$ from $\buffQ$ (see
\S\ref{sec:design} for more details).
%
Intuitively, this assumption caps the accumulation of traffic differences in
$\buffQ$ to the maximum difference over $\winlen$, \ie $\ssens$.
Since the size of $\buffQ$ cannot differ by more than $\ssens$ when changing a
stream by a neighboring one, the difference between DP query results of $\buffQ$
under two neighboring streams---which is the sensitivity of the DP query,
$\qsens$---is upper-bounded by $\ssens$. Formally:

\begin{proposition}\label{prop:sensitivity}
    {$\sys$} enforces $\qsens \leq \ssens$.
\end{proposition}

\begin{proofsketch}
Consider any two streams $\istream$ and $\istream'$, as in \Cref{eqn:ssens},
and any measurements time $k$.
The proof proceeds in two steps. First, under \Cref{assumption:window},
streams can accumulate queued traffic for at most {$\winlen$}, so two
different streams can create a difference $|\qlent{k} - \qlent{k}'|$ of at
most $\ssens$.
Second, dequeuing can only make two different queues closer: Consider
query time $k$, with queue lengths $\qlent{k} > \qlent{k}'$ (the
opposite case is symmetric).
For a DP noise draw $z$, we have $\qlendpt{k} > \qlendpt{k}'$. Since shaping
sends at least as much data under $\qlendpt{k}$ as under $\qlendpt{k}'$,
but no more than $\qlendpt{k} - \qlendpt{k}'$, after dequeuing we have
$|\qlent{k+1}' - \qlent{k+1}| \leq |\qlent{k}' - \qlent{k}|$.
In summary, the queue difference under two different streams
$\qsens$ can grow to at most $\ssens$ due to data queuing, and dequeuing only
decreases that difference, and hence $\qsens \leq \ssens$.  The complete
proof is in \S\ref{appendix:dp}.
\end{proofsketch}




\if 0
\paragraph{Bounding Sensitivity.}
The second step in providing DP measurements is to bound the sensitivity of the
computation we want to make private. Remember that we aim to measure the state
of traffic to adapt the amount of shaped traffic we send.
Since the sensitivity is the worst case change in result between two neighboring
streams, we need to reason about the worst case change to any measurement we want
to make, when performing this measurement on two adjacent streams.
We defined adjacent streams in terms of the difference of amount of traffic sent
over time, but there could be large differences in packetization, making it
difficult to bound the impact on any measurement.
\sys's key idea is to use the primitive of a {\em buffering queue} to control
the maximum information accessible by a measurement, and hence the sensitivity
of the computation.
Conceptually, {\sys} extracts bytes from the input stream, $\istream$, and
enqueues them in the buffering queue $\buffQ$.
We denote the number of bytes present in $\buffQ$ (\ie length of the queue) by
$\qlen$.

With this primitive in place, we are now ready to define our DP measurement and
analyze its sensitivity. We use subscript $T$ to denote measurement specific
quantities.
As $\qlen$ is the amount of application data waiting in $\buffQ$ to be sent, we
would ideally want {\sys} to move $\qlen$ bytes from $\buffQ$ to the shaped
buffer, for it to be sent over the network.  Hence {\sys} measures $\qlen$, with
DP to provide privacy guarantees. To make the measurement DP we need to bound
$\qsens$, the sensitivity of $\qlen$, which is the maximum difference in the
queue length $\qlen$ that can be caused  by changing one application stream to
neighboring one.
Formally, consider two alternative neighboring streams $\istream$ and
$\istream'$ passing through the queue.
Suppose that when transmitting $\istream$ (similarly $\istream'$), the
queue length at time $t$ is denoted by $\qlent{t}$ (respectively $\qlent{t}'$).
Then, assuming without loss of generality that $\streamduration \geq
\streamduration'$:
\setlength{\abovedisplayskip}{0pt}
\begin{equation}
    \qsens = \max_{t = 0}^{\streamduration}~\max_{\istream,
        \istream'} | \qlent{t} - \qlent{t}' |
    \label{eqn:ssens}
\end{equation}

Bounding $\qsens$ is still challenging thought, as our neighboring definition
only bounds the difference in traffic between two streams over any window of
length $\winlen$.
For long streams with $\streamduration >> \winlen$, differences between
$\istream$ and $\istream'$ can accumulate and grow unbounded.
\fi


\textbf{Step 3: Adding Noise.}
With sensitivity bounded at $\ssens$, we can now query $\qlen$ with DP using
an additive noise mechanism, which entails sampling noise $z$ from a DP
distribution and computing the DP buffer queue length as $\qlendpt{k} \triangleq
\qlent{k} + z$.
Specifically, {\sys} uses the Gaussian mechanism, in which
the noise $z$ is sampled from a centered normal distribution
$z \sim \mathcal{N}\big(\mu,~\sigma^2\big)$, where the variance is parameterized
by $\epsilon_\dpintvl$, $\delta_\dpintvl$, and~$\ssens$:
$\sigma^2 =
(2\ssens^2)/(\varepsilon_\dpintvl^2)\ln(1.25/\delta_\dpintvl)$.
Parameters $\epsilon_\dpintvl$, $\delta_\dpintvl$ determine the amount of
noise added to each DP query~result.

\subsection{Privacy Analysis}
\label{subsec:dp-queue-measurements}

The previous section defines $(\varepsilon_{\dpintvl}, \delta_{\dpintvl})$-DP
guarantees for the traffic transmitted in an individual shaping interval. We now
discuss (i) the guarantees for longer
application streams, (ii) the guarantees on a packet-level sequence derived from
a shaped buffer sequence, and (iii) the privacy implications for streams that
fall outside of the neighboring definition.

\textbf{Guarantees for streams.}
Recall from the previous section that shaping happens at intervals
$\dpintvl$; in each interval we perform a DP query on the buffering queue length
$\qlen$ and create a shaped buffer of length $\qlendp$, which is subsequently
queued for transmitting over the network at the end of the interval.
%
Enqueueing data into the shaped buffer at the end of each shaping interval
creates a sequence of states for the shaped buffer $\{(\qlendp_1, T),
(\qlendp_2, 2T), (\qlendp_3, 3T), \dots \}$.
Although this sequence is not technically observable by an adversary, this is
where we prove our DP guarantees using DP composition over queries
performed during the stream transmission.

\begin{proposition}\label{prop:dp}
For any stream $\istream$ of duration $\streamduration^S \leq
\streamdurationcfg$, {$\sys$}
enforces $(\varepsilon, \delta)$-DP for the sequence $\{(\qlendp_1, T),
(\qlendp_2, 2T), (\qlendp_3, 3T), \dots \}$, with $\varepsilon, \delta
\triangleq \textrm{DP\_compose}(\varepsilon_T, \delta_T, \lceil
\frac{\streamdurationcfg}{\dpintvl} \rceil)$.
\end{proposition}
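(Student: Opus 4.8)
The plan is to recognize the sequence $\{(\qlendp_k,k\dpintvl)\}$ as the output of an \emph{adaptive sequential composition} of per-interval Gaussian mechanisms, and then invoke DP's composition property (through R\'enyi-DP, as the paper does) on top of \Cref{prop:sensitivity}. First I would dispose of the timestamps: the values $k\dpintvl$ are deterministic interval boundaries, independent of the stream, so they carry no privacy cost, and it suffices to reason about the sequence of lengths $\{\qlendp_k\}$. Next, since every admissible stream has duration $\streamduration^S \le \streamdurationcfg$ and, by the buffering-queue time-to-live (\Cref{assumption:window}), all of its bytes are transmitted within a bounded horizon, only finitely many shaping intervals issue a data-dependent query; their number is at most $\varnumupdates \triangleq \lceil \streamdurationcfg/\dpintvl\rceil$, while any later interval queries a fixed (empty) queue, is therefore data-independent, i.e.\ $(0,0)$-DP, and drops out of the composition. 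So it suffices to bound the composition of $\varnumupdates$ data-dependent rounds.

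Then I would set up the composition explicitly. Write the sequence-generating mechanism as $\mathcal{M}=(\mathcal{M}_1,\dots,\mathcal{M}_{\varnumupdates})$, where $\mathcal{M}_k$ receives the stream $\istream$ together with the previously released lengths $\qlendp_1,\dots,\qlendp_{k-1}$ and outputs $\qlendpt{k}=\qlent{k}+z_k$ with $z_k\sim\mathcal{N}(\mu,\sigma^2)$, $\sigma^2=(2\ssens^2/\varepsilon_\dpintvl^2)\ln(1.25/\delta_\dpintvl)$. Conditioned on a fixed released prefix $\qlendp_{1:k-1}$, the bytes dequeued in each earlier interval are a deterministic function of the released values (payload bytes up to the released length, dummy bytes otherwise), so $\qlent{k}$ is a well-defined function of $\istream$. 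By \Cref{prop:sensitivity}, for any two neighboring streams the queue-length difference never exceeds $\qsens \le \ssens$; moreover the argument behind that bound---sending no fewer bytes under $\qlendpt{k}'$ and no more under $\qlendpt{k}$, so dequeuing cannot widen the gap, with the gap re-capped at every interval by the windowed L1 distance of \Cref{def:neighboring-streams} under \Cref{assumption:window}---holds along every execution and hence for every fixed released prefix. Consequently $\mathcal{M}_k(\cdot\mid \qlendp_{1:k-1})$ is a Gaussian mechanism on a query of L1-sensitivity at most $\ssens$, which by the standard Gaussian-mechanism analysis with the stated $\sigma^2$ is $(\varepsilon_\dpintvl,\delta_\dpintvl)$-DP with respect to the neighboring relation on streams.

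Finally I would compose: convert each per-interval $(\varepsilon_\dpintvl,\delta_\dpintvl)$ guarantee to a R\'enyi-DP guarantee (closed-form for the Gaussian mechanism), sum the R\'enyi parameters over the $\varnumupdates$ adaptive rounds by R\'enyi-DP's linear composition, and convert the result back to an $(\varepsilon,\delta)$ pair---this conversion pipeline is exactly what $\textrm{DP\_compose}(\varepsilon_\dpintvl,\delta_\dpintvl,\varnumupdates)$ denotes. Adaptive sequential composition of DP then yields that $\mathcal{M}$, i.e.\ the full sequence $\{(\qlendp_k,k\dpintvl)\}$, is $(\varepsilon,\delta)$-DP, and since $\streamduration^S\le\streamdurationcfg$ guarantees $\lceil\streamduration^S/\dpintvl\rceil\le\varnumupdates$, composing over $\varnumupdates$ rounds is an upper bound for every admissible stream, which is the claimed statement.

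The main obstacle I anticipate is the adaptive step: making rigorous that the per-interval sensitivity bound of \Cref{prop:sensitivity} survives conditioning on an \emph{arbitrary released prefix} $\qlendp_{1:k-1}$, rather than on a particular noise realization as in that proposition's coupling argument. The resolution is that the number of bytes leaving the queue in an interval depends only on the released DP value for that interval, so fixing the released prefix fixes the dequeue history identically for two neighboring streams; the "dequeuing cannot increase the gap, and new arrivals keep it within $\ssens$ by \Cref{assumption:window}" reasoning then applies verbatim, supplying exactly the per-round, per-history $(\varepsilon_\dpintvl,\delta_\dpintvl)$-DP hypothesis that adaptive composition requires. A minor secondary point is the bookkeeping around $\varnumupdates=\lceil\streamdurationcfg/\dpintvl\rceil$ and the claim that intervals beyond the configured horizon contribute no privacy loss.
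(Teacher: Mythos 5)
Your proof is correct and takes essentially the same route as the paper's: discard the deterministic timestamps, treat each interval's DP query as a Gaussian mechanism whose sensitivity is bounded by $\ssens$ via \Cref{prop:sensitivity}, and apply R\'enyi-DP adaptive composition over $\lceil\streamdurationcfg/\dpintvl\rceil$ rounds. You go further than the paper in explicitly surfacing the adaptivity issue---that the per-round sensitivity bound must hold conditioned on an arbitrary released prefix, whereas \Cref{prop:sensitivity} is proved under a coupling of the \emph{noise draws}---and this is a genuine refinement the paper's terse proof leaves implicit. One small imprecision in your fix: fixing the released prefix does \emph{not} make the dequeue history ``identical for two neighboring streams,'' since the amount actually dequeued is capped at the true queue length $\qlent{k-1}$, which can differ between streams; the correct statement is that, for a fixed released value, the dequeue map $\qlent{k-1}\mapsto\qlent{k-1}-\min(\max(0,\qlendp_{k-1}),\qlent{k-1})$ is $1$-Lipschitz and monotone, so it cannot widen the gap between queues, and the windowed enqueue bound of \Cref{assumption:window} and \Cref{def:neighboring-streams} then caps the gap at $\ssens$ exactly as in the coupled argument.
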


\begin{proof}
Consider two neighboring streams $\istream$ and $\istream'$. By design, the
times at which shaped buffers are queued are independent of application data, so
$\{(\qlendp'_1, T'), (\qlendp'_2, 2T'), (\qlendp'_3, 3T'), \dots \} =
\{(\qlendp_1, T), (\qlendp_2, 2T), (\qlendp_3, 3T), \dots \}$, and we can
restrict our considerations to the sequences $\{\qlendp_1, \qlendp_2, \qlendp_3,
\dots \}$ and $\{\qlendp'_1, \qlendp'_2, \qlendp'_3, \dots \}$.
By Prop. \ref{prop:sensitivity}, the sensitivity of each measurement is at most
 $\ssens$.
By the Gaussian DP mechanism, the measured queue size $\qlendp$ in each
interval is $(\varepsilon_{T}, \delta_{T})$-DP.
Using DP composition over the $\lceil \frac{\streamdurationcfg}{\dpintvl}
\rceil$ DP queries made during any duration $\streamdurationcfg$
 yields the ($\varepsilon, \delta$)-DP guarantee.
\end{proof}
We use R\'enyi-DP composition on the Gaussian mechanism for
$\textrm{DP\_compose()}$. {\sys} provides a DP guarantee for streams of any
length $t$, with the guarantee degrading gracefully as DP composition (of
order $\sqrt{\streamdurationcfg}$ as the length grows).


\textbf{Guarantees for packet sequences.}
Data from the shaped buffer is sent over the network, and transmitted as a
packet sequence denoted by $\ostream = \{{P^O_1}, {P^O_2}, {P^O_3} \dots \}$.
These packets are a post-processing of the DP-shaped buffer. As long as the
packets are generated independently of any secret data, they preserve the
DP guarantees of shaping due to the post-processing property of DP.
This yields the following result, directly implied by \Cref{prop:dp}
and DP post-processing:
\begin{corollary}
\label{cor:dp}
For any stream $\istream$ of duration $\streamduration^S \leq
\streamdurationcfg$, {$\sys$}
enforces $(\varepsilon, \delta)$-DP for its output packet sequence
\ostream,
with $\varepsilon, \delta \triangleq \textrm{DP\_compose}(\varepsilon_T,
\delta_T, \lceil \frac{\streamdurationcfg}{\dpintvl} \rceil)$.
\end{corollary}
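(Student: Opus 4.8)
The plan is to obtain \Cref{cor:dp} as an immediate consequence of \Cref{prop:dp} together with the post-processing property of DP stated in \S\ref{subsec:DP-background}. \Cref{prop:dp} already establishes that, for any stream $\istream$ with $\streamduration^S \leq \streamdurationcfg$, the shaped-buffer sequence $\{(\qlendp_1, T), (\qlendp_2, 2T), (\qlendp_3, 3T), \dots\}$ is $(\varepsilon, \delta)$-DP with $\varepsilon, \delta = \textrm{DP\_compose}(\varepsilon_T, \delta_T, \lceil \streamdurationcfg/\dpintvl \rceil)$. So it suffices to exhibit the output packet sequence $\ostream$ as the image of this sequence under a map that reads no secret information.

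First I would argue that the packetization step---step \circled{3} in \Cref{fig:dp-overview}---is a (possibly randomized) function $f$ whose only data-dependent input is the shaped-buffer sequence. Concretely, the lengths and contents of the packets in $\ostream$ are determined by splitting each shaped buffer $\qlendp_i$ into MTU-sized (or congestion-window-limited) chunks, and each packet's transmit time lies within the $i$-th interval and is governed by the pacing/congestion-control logic; crucially, by design none of these decisions inspects the application payload bytes or any secret, since the dummy bytes are indistinguishable from payload in the buffer and the schedule of interval boundaries is fixed in advance (as used in the proof of \Cref{prop:dp}). Hence $\ostream = f\big(\{(\qlendp_1, T), (\qlendp_2, 2T), \dots\}\big)$ for an $f$ independent of the secret. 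Invoking post-processing then yields that the distribution of $\ostream$ is $(\varepsilon, \delta)$-DP for the same $\varepsilon, \delta = \textrm{DP\_compose}(\varepsilon_T, \delta_T, \lceil \streamdurationcfg/\dpintvl \rceil)$, which is the claim.

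The main obstacle is not the DP algebra, which is routine, but justifying the ``generated independently of any secret data'' hypothesis for $f$: I would point to the design (\S\ref{sec:design}) to confirm that packet sizing and scheduling within an interval depend only on the already-privatized buffer length $\qlendp_i$, network-level signals (RTT, loss), and the pre-configured parameters $\dpintvl, \winlen, \varepsilon_\dpintvl, \delta_\dpintvl$. In particular, congestion feedback---which may be influenced by the adversary's own network---must be treated as auxiliary information, to which DP is robust (third property in \S\ref{subsec:DP-background}); this is what prevents the adversary from ``choosing'' a post-processing that leaks. Once that design fact is in hand, the corollary follows in one line from \Cref{prop:dp} and DP post-processing.
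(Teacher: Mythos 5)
Your proposal matches the paper's argument exactly: the corollary is obtained from \Cref{prop:dp} by DP post-processing, with the only substantive step being to verify that the packetization map reads only the shaped-buffer sequence (and public/network-level state), never the secret data. Your additional gloss that congestion feedback is handled via DP's robustness to auxiliary information is a reasonable elaboration, but the core route---$\ostream = f(\{(\qlendp_i, iT)\}_i)$ for a secret-independent $f$, then apply post-processing---is the same one-line reduction the paper uses.
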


\textbf{Privacy for non-neighboring streams.}
Finally, if the distance between two streams is larger than $\ssens$, \eg say
$k\ssens$ for some factor $k$, {\sys} still provides a (degraded) DP guarantee
through group privacy \cite[Theorem 2.2]{dwork2014algorithmic} applied to the
Gaussian mechanism.
Namely, a DP query in each shaping interval $\dpintvl$ for the non-neighboring
stream provides $(k\epsilon_\dpintvl, \delta_\dpintvl)$-DP, and the guarantees
can be extended for the stream duration $\streamduration$ by applying DP
composition to this new value.

\textbf{Interpretation of the guarantees.}
\update{
\sys's shaping algorithm provides a ($\varepsilon_{\dpintvl},
\delta_{\dpintvl}$)-DP guarantee on the volume of application
traffic enqueued (in the buffering queue) in a fixed-length interval
(and, by post-processing, the volume of traffic observable on the network).
}
%
Under perfect timing for the DP shaping interval $\dpintvl$, the DP guarantee
ensures that two neighboring streams, \ie their contents, are indistinguishable
with the probability as defined in \Cref{eq:dp}. Smaller
$\varepsilon_{\dpintvl}$ and $\delta_{\dpintvl}$ implies higher noise in
shaping, which increases the uncertainty about the original stream in the shaped
traffic.

Secondly, to enforce the DP guarantees, the DP noise of the Gaussian
mechanism does not depend on the total number of flows. Intuitively, the DP
guarantee is shared among all the streams multiplexed through the buffering
queue simultaneously for a fixed amount of noise, and the overhead (\ie noise
added) gets amortized among the streams.

%
%



\begin{figure}[t]
    \centering
    \includegraphics[width=\columnwidth]{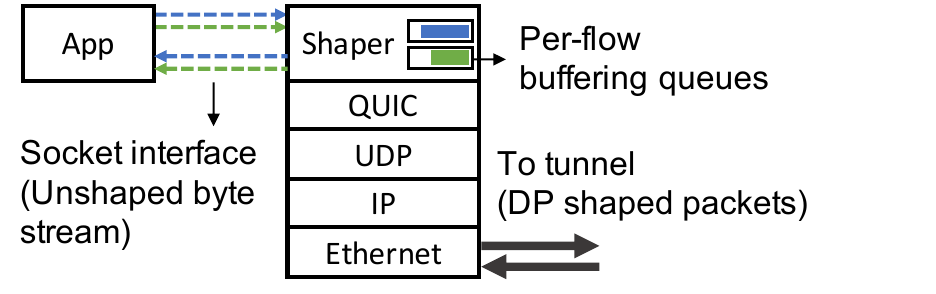}
    \caption{Overview of tunnel design (one endpoint)
    }
    \vspace{-0.4cm}
    \label{fig:minesvpn-overview}
\end{figure}

\textbf{Privacy vs Performance.}
\sys's shaping mechanism introduces several parameters which impact privacy and
performance, specifically latency and bandwidth overheads. These parameters
include $\varepsilon_\dpintvl$, $\delta_\dpintvl$, $\ssens$, $\winlen$, and
$\dpintvl$.
%
Larger $\ssens$ requires lower $\varepsilon_{\dpintvl}$ for stronger privacy,
which implies noisier measurements.
Noisy measurements imply more overhead---when the noise is positive, dummy bytes
need to be sent, incurring higher bandwidth overhead; when the noise is
negative, fewer bytes are sent and the unsent bytes accumulated in the buffering
queue incur a latency overhead.
This is the privacy-overhead trade-off we expect from DP.

Additionally, the parameters $\dpintvl$ and $\winlen$ have a more subtle impact
on the privacy-overhead trade-off.
Traffic is shaped in intervals of $\dpintvl$; thus, $\dpintvl$ impacts the
latency and burstiness of the traffic.
A smaller $\dpintvl$ provides lower transmission latency and smaller bursts per
interval. However, it also requires more DP queries and, thus, incurs higher
privacy loss when transmitting the complete stream of a given length
$\streamduration$.
In practice, one would set $\dpintvl$ to the maximum value that can minimize
privacy loss while providing tolerable latency.

A large $\winlen$ for a fixed $\ssens$ implies that
neighboring streams can differ by at most $\ssens$ over a longer window size
$\winlen$, which weakens the neighboring definition and, hence, the privacy
guarantees.
While smaller $\winlen$ is desirable, the lower bound is $\dpintvl$.
Recall that, to bound $\qsens$ (\Cref{assumption:window}),
{\sys} must drop any bytes left in the buffering queue for longer than $\winlen$.
Since data leaves the buffering queue in each shaping interval after a
DP query, setting $\winlen = \dpintvl$ would lead to immediate dropping of the
bytes from the buffering queue that aren't transmitted in response to the
DP query. This would happen \update{in each interval where the DP query samples
a negative noise value}.
These data drops would degrade {\sys}'s performance in terms of both
throughput and latency. Hence, we need $\winlen >> \dpintvl$ to ensure that data
has time to leave the buffering queue before it is too old.
Typically, one would set $\winlen$ based on application domain knowledge, such
as the maximum size of a web request or the fact that videos often consist of a
sequence of segments requested at 5s intervals.

We analyze the impact of different choices for these parameters on
the privacy guarantees and overheads in \S\ref{sec:eval}.


\if 0

Specifically, large windows lead to high-latency bursty traffic.
\shepherd{To reduce latency and burstiness}\as{comment 1.b: we should explain how does having two windows recduce the burstiness and why we cannot make W as small as T}

\ml{[TODO: with at least this info taken from other places while moving things]
, {\sys} further splits windows into
smaller intervals of length $\dpintvl$ and samples noise at the beginning of
each interval.
\todo{Note that this quantization of application traffic into intervals is
essential for modeling the DP mechanism and the privacy, latency, and
bandwidth overheads. However, the model does not make any assumption about the
alignment of the actual traffic stream with the DP intervals.}
The privacy loss over a window is now defined by applying DP composition
on the privacy loss of individual intervals.
\am{Should the text from the beginning of this paragraph until the previous
sentence appear earlier as a justification for why we have separate $\winlen$
and $\dpintvl$?}
The privacy loss ($\varepsilon_{\winlen}$) and bandwidth overheads of the DP
shaping (represented by $\sigma_\dpintvl$, \ie the standard deviation of the
noise distribution function) depend on
$\winlen$, $\ssens$,
and the number of intervals $\varnumupdates = \numupdates$ in $\winlen$.
Additionally, the latency overheads depend on $\dpintvl$.
Note that for a specific value of $\ssens$ and $\varnumupdates$, the DP guarantee
$\varepsilon_{\winlen}$ is fully specified by $\sigma_\dpintvl$, and remains the same even at
different time scales.
That is, scaling $\winlen$ and $\dpintvl$ proportionally does not change the
privacy and overhead costs.
}
\ml{[moved from above when I suggested removing]
In practice, setting $\winlen$ larger than $\streamduration_{max}$, the total
length of the longest stream to be protected would yield $\ssens$ larger than
the max distance of any stream pairs and incur unnecessary overheads.
On the other hand, the upper bound for $\ssens$ is the NIC's line rate times
window length $\winlen$. While the upper bound value would ensure that all
streams that can possibly be sent are neighbors with each other, yielding a
strong DP guarantee and making any stream indistinguishable, it would also incur
high overheads (\S\ref{}).
\am{Which section should we reference here?}
\ml{I don't know... I thought we evaluated the impact of W on overheads, but if
not we shouldn't ref anything}
\todo{While smaller $\winlen$ implies smaller $\ssens$ (which would yield lower
overheads) $\winlen$ is lower-bounded by the DP shaping interval $\dpintvl$,
as will be clear in the subsequent paragraphs.}
Typically, one would set $\winlen$ based on application
domain knowledge, such as the fact that videos often consist of a sequence of
segments requested at fixed intervals (\eg 5s) or the maximum size of a web
request.
}

\fi

\if 0
\subsection{Privacy Guarantees}
\label{subsec:dp-proof}
{\sys} provides ($\varepsilon_{\dpintvl}, \delta_{\dpintvl}$)-DP guarantees for
each DP measurement interval. Using this guarantee, one can then reason about DP
guarantees over a sequence of multiple intervals using DP composition methods.
In particular, for a transmission duration of length $D$, which consists of
$\lceil \frac{D}{\dpintvl} \rceil$ DP shaping intervals, {\sys} would provide
($\varepsilon_{D}, \delta_{D}$)-DP, with $\varepsilon_{D}, \delta_{D} \triangleq
DP\_compose(\varepsilon_{\dpintvl}, \delta_{\dpintvl}, \lceil \frac{D}{\dpintvl}
\rceil)$. We use R\'enyi-DP composition on the Gaussian mechanism for
$\textrm{DP\_compose()}$.
\am{Does this require proof?}

\if 0
At a high level, analyzing the  ($\varepsilon_{\winlen}, \delta_{\winlen}$)-DP
guarantee of the overall shaping mechanism requires two steps: (i) showing that
the difference in the buffering queue length is bounded for neighboring streams
for all transmissions of the streams (Prop. \ref{prop:sensitivity}), and (ii)
composing the DP cost of each measurement over the intervals defining a window
of length $\winlen$ (Prop. \ref{prop:dp}).

We first show that the sensitivity of each measurement $\qsens$ is at most the
window sensitivity $\ssens$:
\begin{proposition}\label{prop:sensitivity}
    {$\sys$} enforces $\qsens \leq \ssens$.
\end{proposition}

\begin{proofsketch}
  Consider any two streams $S_j$ and $S_j'$, as in \Cref{eqn:ssens}.
  The proof proceeds in two steps. First, under \Cref{assumption:window},
  streams can accumulate queued traffic for at most {$\winlen$}, so two
  different streams can create a difference $|\qlent{k} - \qlent{k}'|$ of at
  most $\ssens$.
  Second, dequeuing can only make two different queues closer: Consider
  measurement time $k$, with queue lengths $\qlent{k} > \qlent{k}'$ (the
  opposite case is symmetric).
  For a DP noise draw $z$, we have $\qlendpt{k} > \qlendpt{k}'$. Since shaping
  sends at least as much data under $\qlendpt{k}$ as under $\qlendpt{k}'$,
  but no more than $\qlendpt{k} - \qlendpt{k}'$, after dequeuing we have
  $|\qlent{k+1}' - \qlent{k+1}| \leq |\qlent{k}' - \qlent{k}|$.
  In summary, the maximum queue difference under two different streams
  $\qsens$ can grow to at most $\ssens$ due to data queuing, and dequeuing only
  decreases that difference, and hence $\qsens \leq \ssens$.  The complete proof
  is in \S\ref{appendix:dp}.
\end{proofsketch}

We can then reason about DP guarantees over intervals of length $\dpintvl$ in
order to achieve the privacy loss for the entire window of length $\winlen$.
\ml{[TODO Mathias] make notation consistent: here we want only the stream length
and then use W as a numerical application. This will show that we can compose over time}
Formally, we have:
\begin{proposition}\label{prop:dp}
  {$\sys$} enforces $(\varepsilon_{\winlen}, \delta_{\winlen})$-DP, with
  $\varepsilon_{\winlen}, \delta_{\winlen} \triangleq
  \textrm{DP\_compose}(\varepsilon_T, \delta_T, \numupdates)$.
\end{proposition}

\begin{proof}
By Prop. \ref{prop:sensitivity}, the sensitivity of each measurement is at most
  $\ssens$.
By the Gaussian DP mechanism, the measured queue size $\qlendpt{k}$ in each
interval $k$ of length $\dpintvl$ is $(\varepsilon_{T}, \delta_{T})$-DP.
Using DP composition over $\numupdates$ $(\varepsilon_{T}, \delta_{T})$-DP
measurements, and the fact that $\ostream$ is a post-processing of DP
measurements, yields the ($\varepsilon_{\winlen}, \delta_{\winlen}$)-DP over
any $\winlen$ length window.
\end{proof}
We use R\'enyi-DP composition on the Gaussian mechanism for
$\textrm{DP\_compose()}$.
Note that the overhead (\ie noise added) due to
DP does not depend on the number of streams: the overhead is the same regardless
of the number of streams transmitted through the buffering queue simultaneously.
\fi

\paragraph{Summary.}
By buffering all data in a queue, and periodically deciding the size of the data
to send over the network with a DP measurement, \sys's shaping algorithm
makes the shape of traffic (volume of data sent over time) DP with regards to
the application's original traffic sequence.
Thus, as long as no observable characteristics of the traffic directly depend on
application secrets (the original traffic sequence), the observable outbound traffic
is DP.


\ml{\paragraph{Interpretation} Let's add something on interpreting the guarantees and what they mean here: hidding any marginal change of size sensitivity (presence + content); composition over time; group composition for larger sensitivity.}
\ml{[This is from the intro]
\am{Clarify that our shaping will provide DP on transmission sizes only. For
timing we rely on fixed intervals.}
The adversary can observe sizes, inter-packet intervals, and directions of
packets in sequences of arbitrary lengths.
Given these observations, the specific DP guarantee that {\sys} provides is that
the adversary cannot identify (i) the traffic content (\eg video streams, web
pages), and (ii) the presence of any one flow between two application
endpoints.
}
\ml{
{\sys}'s DP guarantees \todo{cover} all (overlapping) windows up to size
$\winlen$, and composes over larger windows.
}
\ml{Here we want to discuss group composition and the neighboring definition in general}

\fi

\if 0
\subsection{DP shaping building blocks}
\label{subsec:infromation-bottleneck}
We define a source application stream $S$ as a sequence of packets
$\{P_{t_1}^{l_1}, P_{t_2}^{l_2}, P_{t_3}^{l_3}, \dots \}^S$
where $l_i$ and $t_i$ respectively indicate the length in bytes and timestamp of
the $i$\textsuperscript{th} packet of the stream $S$.
%
Without shaping, an adversary can observe this precise stream and infer
the content, which is correlated with this stream.

In {\sys}, we first introduce an information bottleneck in the form of a
buffering queue, {$\unshapedQ$}, to control the information accessible by an
eavesdropper. The buffering queue has three operations: \texttt{enqueue(size)},
\texttt{dequeue(size)}, \texttt{get\_size()}.
Conceptually, {\sys} decapsulates all application traffic of incoming stream
$S$, and enqueues it in the {$\unshapedQ$}.
The shaping mechanism in {\sys} periodically retrieves the queue size and
determines the amount of data to dequeue from $\unshapedQ$ in order to transmit
it as shaped traffic.
The shaped traffic is encapsulated into a new sequence of packets, which we
denote as:
\begin{equation}\label{equ:stream-segs}
    O = \{P_{t_1}^O, P_{t_2}^O, P_{t_3}^O, \dots\}
\end{equation}
where $P_{t_i}^O$ is a packet sent in the shaped tunnel.

In order to ensure that the observable stream $O$ preserves the privacy of the
original stream $S$, {\sys} ensures that $O$ is Differentially Private.
To enforce DP, {\sys} ensures that any input to $O$ that depends on sensitive
data (the stream $S$) is measured with DP.
\Cref{fig:dp-overview} shows the end-to-end traffic shaping of {\sys}.
After decapsulating incoming packets, the incoming traffic is stored in a
buffering queue. In fixed periods, the \texttt{dequeue} operation is performed,
ensuring that the timing of the outgoing traffic remains independent of the
incoming traffic. The size of the \texttt{dequeue} is determined by our
differential privacy mechanism, guaranteeing that the size of the outgoing
traffic remains DP.
Furthermore, the encapsulation and packetization of outgoing data can be
characterized as a post-processing step of a differentially private mechanism
and therefore is DP.
\fi

\if 0
\subsection{DP shaping mechanism}
\label{subsec:dp-shaping}

\ml{I feel like this all belong to \am{design} with the DP call as a noisy black
box.} \am{Yes, this is covered in \S\ref{subsec:design-overview}.}

\Cref{alg:middle-box-all} represents the differentially private shaping
mechanism.  The algorithm is executed periodically with an interval of $T$
seconds.  Here, we explain one round of the algorithm.
\begin{enumerate}
  \item The DP mechanism reads the current size of the queue, $Q_t$.
  \item Then, it adds a noise from a Gaussian distribution with an average of
  zero and scale of ${\sigma}$ to the current size. The noisy measurement is
  represented by $D^S_t$ in the algorithm, and $\sigma$ is the parameter that
  determines the privacy loss of our mechanism.
  \item To avoid unpredictable behaviors such as the occurrence of negative
  values in the noisy measurement, we have incorporated a minimum and maximum
  size threshold for the noisy measurements, which are adjustable parameters
  within the algorithm.
  \item If $D^S_t > Q_t$, the data in {\unshapedQ} will be padded to $D^S_t$
  bytes and subsequently transmitted to the receiver.
  Conversely, if the size of the data in {\unshapedQ} is less than or equal to
  $D^S_t$, the entirety of the $D^S_t$ bytes will be sent.  In algorithm
  \ref{alg:middle-box-all}, the padding size and real data size are represented
  with $D^P_t$ and $D^R_t$ respectively.
\end{enumerate}
\fi

\if 0
\noindent
\am{Outline:\%\%\%\%\%\%\%\%}
\begin{itemize}
    \item \S 3.1: DP background
    \begin{itemize}
        \item ($\varepsilon, \delta$)-DP definition
        \item Components for building a DP mechanism: neighboring definition,
        query on the dataset and the sensitivity for that query given the
        neighboring definition, noise mechanism
        \item DP properties relevant for {\sys}: post processing, composition,
        robustness to auxiliary knowledge
    \end{itemize}
    \item \S 3.2: Building blocks for a DP mechanism on traffic streams
    \begin{itemize}
        \item Neighboring definition: define window $W$, assumption 1,
        definition 1
        \item Query on streams: call this DP query here, define buffering
        queue abstraction, DP interval $T$, explain why $T < W$.
        \item Define the noising mechanism: the additive gaussian noise
        mechanism from shaping overview
    \end{itemize}
    \item \S 3.3: Workflow of {\sys}'s DP shaping and privacy guarantees
    \begin{itemize}
        \item DP workflow: shaping overview (move notations to 3.2)
        \item DP guarantees: ($\varepsilon_W, \delta_W$)-DP through a
        composition of a series of ($\varepsilon_T, \delta_T$)-DP querys.
        \item How we use DP properties: post processing, composition, and
        robustness to aux. knowledge?
    \end{itemize}
    \item \S 3.4: Proof sketch: Mostly fine, may only require notational
    clarification.
\end{itemize}
\am{End of Outline:\%\%\%\%\%\%\%\%}
\fi

\section{Traffic Shaping Tunnel}
\label{sec:design}

%
A tunnel must address three requirements.
First, it must satisfy DP guarantees. For this, the tunnel~must complete DP
queries and prepare shaped packets within each interval, and it
must be able to transmit all payload bytes generated from an application within
a finite window length (as defined in the DP strategy).
Secondly, the payload and dummy bytes in the shaped packets must be
indistinguishable to an adversary. For this, the payload and dummy bytes must be
transmitted through a shared transport layer so that they are identically
acknowledged by the receiver and subject to congestion control and
loss recovery mechanisms.
Finally, the tunnel must provide similar levels of reliability,
congestion control, and loss recovery as expected by the application.

\if 0
A traffic shaping tunnel must address three requirements.
{\bf R1.} Given a sequence of packets whose sizes and timing reveal the payload,
the tunnel must produce a packet sequence whose sizes and timing do not reveal
information about the payload.
{\bf R2.} It must protect against an adversary observing packets
along the entire path between the tunnel endpoints.
{\bf R3.} It must provide similar levels of reliability, congestion control, and
loss recovery as the original packet sequence.

R1 requires that the tunnel implements DP shaping correctly. Specifically,
it must complete DP decision making and preparation of a shaped buffer within
each interval (as defined in the DP shaping strategy). Moreover, it must be able
to transmit all payload bytes generated from an application within a finite
window length (defined in the DP shaping strategy).
R2 and R3 require that a tunnel implements padding in the outbound packets above
the transport layer so that it is delivered and acknowledged by the destination
and retransmitted upon loss in the same way as application payload.

One way to address all the requirements is to tunnel the transport layer
protocol between the application endpoints through {\sys}'s tunnel.
However, tunneling UDP through any protocol can be inefficient,
and tunneling
TCP through TCP can cause a TCP meltdown \cite{honda2005tcpovertcp,
tcp-meltdown}.
Tunneling TCP through UDP is insecure: TCP between the application end hosts
handles retransmission of lost payload bytes only,
not of any dummy bytes injected between the tunnel endpoints, making padding
observable.
\fi


\Cref{fig:minesvpn-overview} shows the design of one endpoint of {\sys}’s
traffic shaping tunnel. A similar endpoint is deployed on the other end of
the tunnel.
%
{The shape of the traffic in the tunnel can be configured independently
in each direction. The privacy loss in bidirectional streams is the DP
composition of the privacy loss in each direction.}

A tunnel endpoint consists of a shaping layer (Shaper) on top of QUIC, which
in turn runs on top of a standard UDP stack.
The tunnel endpoints establish a
bidirectional QUIC connection and generate DP-sized transmit buffers in fixed
intervals, which carry payload bytes from one or more application flows. In the
absence of application payload, a tunnel endpoint transmits dummy bytes, which
are discarded at the other endpoint. QUIC encrypts all outbound packets.


{\sys} adopts a transport-layer proxy architecture: each application
terminates a connection with its local tunnel endpoint. The application byte
stream is sent to the remote application over three piecewise connections: (i)
between the application and its local tunnel endpoint, (ii) between the
tunnel endpoints, and (iii) between the remote tunnel endpoint and the remote
application.
This ensures only one active congestion control and reliable delivery mechanism
in the tunnel and that all bytes are subject to identical mechanisms\footnote{
We discard tunneling TCP through TCP as it causes TCP
meltdown~\cite{honda2005tcpovertcp, tcp-meltdown}, or TCP through UDP as it is
unsafe.
(TCP between the application hosts would retransmit lost payload bytes
only, not any dummy bytes injected between the tunnel endpoints, making the
dummy bytes observable.)
}.


The application and the tunnel endpoint shown in \Cref{fig:minesvpn-overview}
could either be colocated on the same host or located on separate hosts.
In each case, the traffic between the application and the tunnel endpoint is
assumed to be unobservable to an adversary.
Our design (\S\ref{subsec:design-overview}) does not distinguish between the two
configurations.
Our implementation (\S\ref{sec:implementation}) assumes that the tunnel endpoint
is located on a separate middlebox. We discuss security in
\S\ref{subsec:impl-security} and alternate deployments in
\S\ref{subsec:design-discussion}.

\if 0
The tunnel endpoints could be integrated with the application hosts or with a
gateway at the edge of an organization’s network.
\todo{Although our implementation places tunnel endpoints on a middlebox, for
the purposes of our design, we do not distinguish between the locations of
application endpoints and tunnel endpoints; they could be located on the
same node, or they could be on different nodes as in our implementation.}
The security of the tunnel design relies on the key assumption that the traffic
between an application endpoint and the tunnel endpoint is unobservable to an
adversary.
\fi

\subsection{Tunnel Design and Operations}
\label{subsec:design-overview}

\textbf{Tunnel setup and teardown.}
\update{Before applications can communicate with each other, a {\sys} tunnel
must be set up between their local tunnel endpoints.
The initiator application sends a configuration message to its local tunnel
endpoint with the source and destination IP addresses and ports, a reliability
flag, and a privacy descriptor.}
The reliability flag indicates if the tunnel should provide reliable delivery
semantics or not. The privacy descriptor indicates the DP parameters to be used
for shaping the tunnel traffic.

Upon receiving a configuration message, the Shaper establishes a QUIC
connection with the remote tunnel endpoint and configures the reliability
semantics and privacy parameters
for each direction.
It also initializes \todo{three types of bidirectional~streams in the tunnel:
control, dummy, and data streams}.
One {\em control stream} is used to transmit
messages related to the establishment and termination of a connection
between the application endpoints. A {\em dummy stream} transmits padding
in QUIC packets in the form of STREAM frames\footnote{We do not use QUIC's
PADDING frames as they do not elicit acknowledgements and hence are
distinguishable from STREAM frames \cite{rfc9000}.}.
\todo{The tunnel pre-configures a finite number of data streams, which carry
payload bytes from one or more application flows.}

{When the tunnel is inactive for a period of time, one of the tunnel
endpoints initiates a termination sequence and closes all open QUIC streams and
the tunnel connection.}

\textbf{Connection establishment and termination.}
Once a tunnel is ready, applications can establish and terminate connections
with each other, which is mediated by the tunnel.
When the initiator application runs a connection establishment handshake with
its local tunnel endpoint, the Shaper maps the application flow to
a per-flow buffering queue and one of the inactive QUIC data streams in the
tunnel, and notifies the remote tunnel endpoint.
The remote tunnel endpoint establishes a connection with the receiver
application and maps the receiver application's flow with the data stream.
The connection termination handshake is handled similarly by the tunnel
endpoints. The messages for connection establishment and termination are
transmitted over the control stream in the tunnel and shaped according to the
tunnel's parameters.

\textbf{Outbound traffic shaping.}
The Shaper accumulates the outbound bytes of an application flow in a
buffering queue before it transmits them in packets whose sizes and timing
follow a distribution that guarantees DP.
Within a tunnel, the Shaper transmits bytes from all active flows into a
differentially-private packet sequence.
At periodic intervals, {called DP shaping intervals,} it performs a
DP query on the per-flow queues to determine the
number of bytes $\qlendp$ to be~transmitted according to the tunnel's DP
parameters. It prepares a {\em shaped buffer} consisting of $\payload$ payload
bytes and $\dummy$ dummy bytes, where $\payload$ is the minimum of $\qlendp$ and
the application bytes available in the buffering~queues, and $\dummy = \qlendp -
\payload$, which may lie between 0 and $\qlendp$. The Shaper then passes
the buffer with the position and length of the padding to QUIC.

QUIC transforms the shaped buffer into one or more STREAM frames based on the
congestion window, the flow window of the receiver endpoint, and the MTU
(maximum transmission unit). It
places the padding bytes into a dummy STREAM frame. QUIC packages the frames
into packets, whose length is at most MTU
minus the length of the headers and whose payload is
encrypted. QUIC forwards the packets to the UDP layer, which subsequently
transmits the prepared packets as quickly as it can, given the line rate of the
NIC.

To enforce \Cref{assumption:window}, the Shaper tracks the expiry time of
each byte enqueued in $\buffQ$ based on the arrival time and the neighboring
window length $\winlen$ configuration. The Shaper drops the untransmitted bytes
in the queue upon their expiry.

%


\textbf{Inbound traffic processing.}
A tunnel endpoint receives shaped packets from the tunnel and applies inverse
processing on each packet.
QUIC receives the packet and
sends an ACK to the sender. Subsequently, it decrypts the packet, discards the
dummy frame, and forwards the payload bytes from the remaining STREAM frames to
the application.

\subsection{Middlebox Implementation}
\label{sec:implementation}

\begin{figure}[t]
    \centering
    \includegraphics[width=\columnwidth]{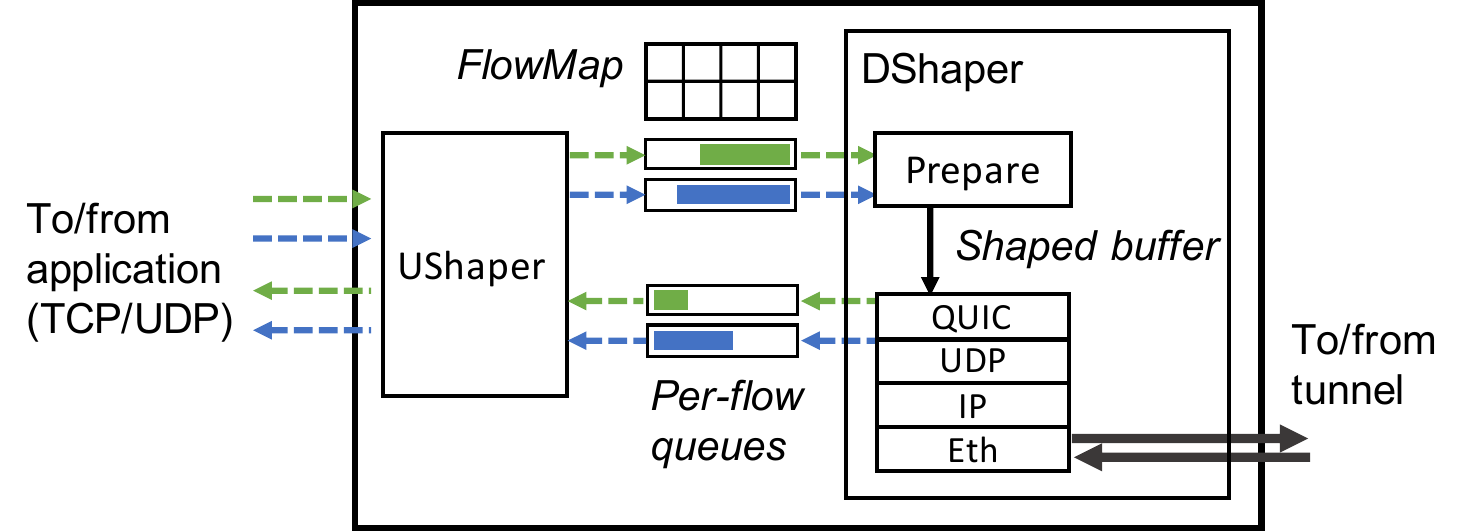}
    \caption{{\sys} middlebox design}
    \vspace{-0.4cm}
    \label{fig:minesvpn-impl}
\end{figure}

We present a middlebox-based {\sys} implementation.
%
%
For ease of implementation, our prototype requires applications to explicitly
connect to the middleboxes.
In principle, {\sys} can transparently proxy application
connections.

In our implementation (see \Cref{fig:minesvpn-impl}), a middlebox consists of
two userspace processes. The
{\ushaper} mediates {\em unshaped} traffic between the applications and the
middleboxes. The {\dshaper} handles {\em DP shaped} traffic within the tunnel.

\textbf{UShaper.}
The {\ushaper} implements a transport server (or client) for interfacing with
each local client (or server, respectively) application.
For managing multiple flows, it shares a {\flowmap} table with the {\dshaper},
which consists of an entry for each end-to-end flow. Each entry maps the
piecewise connections with
a pair of transmit and receive queues to carry the local application's byte
stream, and shaping configurations (\eg privacy descriptor) provided by an
application at the time of flow registration.

The {\ushaper} receives the outbound traffic from a sender application
and enqueues the byte stream into a per-flow transmit queue shared with the
{\dshaper}.
It also dequeues bytes from a per-flow receive queue, repackages them
into transport packets and sends them to the receiver application.

\begin{figure}[t]
    \centering
    \includegraphics[width=\columnwidth]{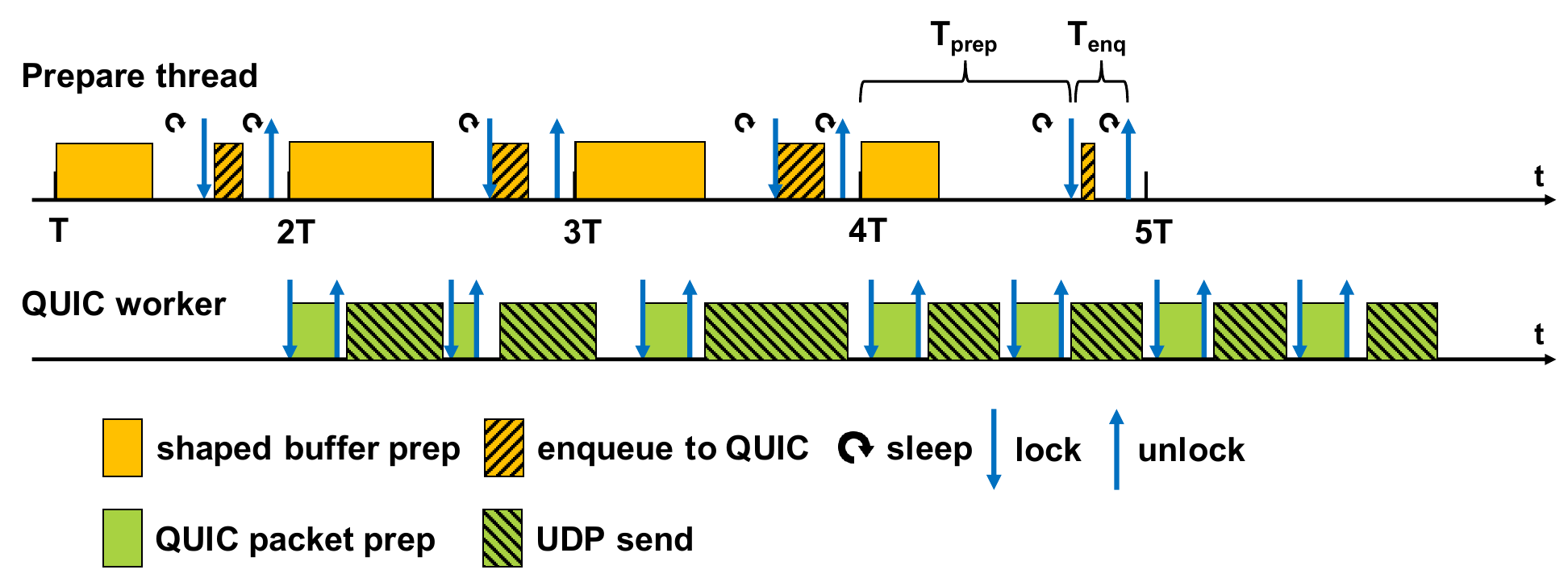}
    \caption{{\dshaper} schedule}
    \vspace{-0.4cm}
    \label{fig:middlebox-schedule}
\end{figure}

\textbf{DShaper.}
The {\dshaper} consists of a {\prepare} thread and a QUIC
worker thread.
The {\prepare} thread instantiates a QUIC client/server to establish
a tunnel with the remote middlebox and implements the DP shaping logic.
On the transmit side, {\prepare} {prepares shaped buffers based on DP
queries on the transmit queues}
and then submits shaped buffers to the QUIC worker for transmission.
On the receive size, the QUIC worker transmits ACK frames to the sender
and then decrypts the QUIC packets, extracts the STREAM frames, and copies
bytes (including dummy bytes) from each frame into the appropriate per-flow
receive queue.

\textbf{Ensuring secret-independent shaping.}
To enforce DP guarantees, {\dshaper} should {\em transmit} exactly $\qlendp$
bytes in each DP shaping interval $\dpintvl$.
%
\if 0
Let us first understand the factors that might prevent the middlebox from
guaranteeing this property.
Even though an application is physically isolated from the middlebox and can
encrypt its data (\eg using end-to-end TLS), its flow control behavior could be
secret-dependent and could affect the middlebox's execution.
For instance, the presence or absence of payload traffic from an application
can affect the time {\dshaper} requires to prepare the shaped buffers.
\fi
%
This would require ensuring:
{\bf P1.} {\prepare} computes $\qlendp$, allocates and prepares a buffer
of length
$\qlendp$, and passes the buffer to the QUIC worker within $\dpintvl$,
{\bf P2.} the QUIC worker prepares encrypted packets from the buffer and sends
them to UDP, such that the total payload size of the QUIC packets prepared in
$\dpintvl$ is $\qlendp$, and
{\bf P3.} the~UDP stack transmits packets totaling to $\qlendp$ payload bytes
to the NIC in $\dpintvl$.
Enforcing all these properties would require a constant-time implementation for
each step, which is non-trivial, or a strict time-triggered schedule
for each step, which would significantly reduce link utilization and increase
packet latencies.

\todo{Thanks to DP post processing, however, it suffices to ensure the property
P1,
and {\bf P4.} that the QUIC worker transforms the shaped buffer into network
packets independently of the
application data.
No other constraint on the sizes and timing of network packets is required to preserve DP.
}

Satisfying P1 involves one challenge. Although the application is physically
isolated from the middlebox, its flow control behavior could be
secret-dependent and could affect the middlebox's execution.
For instance, the presence or absence of payload traffic from an application
can affect the time {\dshaper} requires to prepare the shaped buffers.

Thus, {\dshaper} satisfies P1 as follows (see \Cref{fig:middlebox-schedule} for
reference).
First, {\prepare} {guarantees that $\qlendp$ is computed with a DP query
in each interval}.
Secondly, {\prepare} guarantees that a shaped buffer of length $\qlendp$ is {\em
prepared} within a fixed time $\dpintvl_{prep}$ within each interval.
Thirdly, {\prepare} locks the shaped buffer for a fixed time,
$\dpintvl_{enq}$, during which it enqueues the buffer for a QUIC
worker.
This ensures that~the buffer is completely enqueued before QUIC starts
transmitting it and that QUIC receives the buffer only at fixed delays.

We empirically profile the time taken by {\prepare} for preparing and enqueueing
shaped buffers for various DP lengths. We set $\dpintvl_{prep}$ and
$\dpintvl_{enq}$ to maximum values determined from profiling, and $\dpintvl$ to
the sum of these maximum values, \ie $\dpintvl_{max}$.
If {\prepare} takes time less than
$\dpintvl_{prep}$ (or $\dpintvl_{enq}$, respectively) to prepare (or enqueue) a
shaped buffer, it sleeps until the end of the interval before moving to the next
phase.

To satisfy P4,
{\prepare} and QUIC worker threads run on separate cores
sharing only the shaped buffers.
{\ushaper} runs on yet a different core and shares the {\flowmap} and the
per-flow transmit queues containing unshaped traffic only with {\prepare}. It
shares the per-flow receive queues with the QUIC worker, but they contain only
shaped frames from the QUIC worker.
\update{Finally, we assume that QUIC encrypts and decrypts shaped
buffers in constant-time.}
With this, the execution of the QUIC worker becomes independent from {\prepare}
and secret-independent overall.
%
Consequently, the QUIC worker and the UDP stack can packetize the shaped buffers
and transmit the packets at link speed, and any variance in packet
transmit times constitute post-processing noise.

\textbf{Limitations.}
Our prototype has two limitations in enforcing secret-independent timing.
First, our QUIC implementation uses
standard OpenSSL, which may not provide constant-time crypto. However, QUIC can
be modified to adopt a constant-time crypto library~\cite{hacl,libsecp256k1} to
overcome this limitation.
Secondly, it is difficult to find the true maximum values of
$\dpintvl_{prep}$ and $\dpintvl_{enq}$ on general-purpose desktops. If
{\prepare}'s execution exceeds the profiled max values, it violates the
theoretical DP guarantees. However, we note that it is difficult to practically
exploit these violations for inferring traffic secrets.

\if 0
A remaining concern could be leaks via internal side channels in the middlebox
that cause {\dshaper} to fail to prepare the expected amount of data within a
scheduled interval.
For instance, {\dshaper}'s execution could be influenced by microarchitectural
state (\eg caches, memory and PCI buses, write buffers, interrupts) based on the
application's flow control.

We have not been able to exploit such side channels to identify traffic content.
Nevertheless, such side channels could be eliminated via resource partitioning,
performance isolation, and constant-time implementation techniques
\cite{liu2016catalyst, coppens2009practical, zhang2011predinteractive,
    almeida2016verifying}.
\fi

\if 0
A remaining concern could be leaks via internal side channels in the middlebox
that cause {\sys} to fail to transmit the expected amount of data within a
scheduled interval. Even though an application is physically isolated from the
middlebox and may encrypt its data (\eg using TLS for the end-to-end
connection), the application’s flow control could be secret-dependent and could
affect {\sys}'s execution.

\todo{For instance, flow control affects the number of payload bytes available
for transmission and, consequently, the amount of padding that may be added to a
segment. Processing payload and dummy bytes could take different amounts of
time.} Secondly, the execution of the Shaper could be influenced by interrupts
or microarchitectural state (\eg caches, memory and PCI buses, internal write
buffers) based on the presence or absence of payload traffic from the
application.
\fi

\if 0
\subsection{Scheduling Across Tunnels}
{When transmitting traffic on multiple tunnels, {\sys} must ensure that the
unshaped traffic of one tunnel is not leaked to another tunnel. For this, {\sys}
must isolate the tunnels from each other in the middlebox.
Thus, {\sys} partitions the middlebox cores into three groups, each core group
hosting the {\ushaper} process, the {\prepare} threads, and the QUIC worker
threads from different tunnels.
Furthermore, {\sys} uses a TDMA schedule among the {\prepare} threads, while
{padding} each thread's execution to a secret-independent time.
Since each {\prepare} thread enqueues shaped buffers at secret-independent
times, the QUIC workers can subsequently package the buffers into packets and
transmit the packets across multiple tunnels following any arbitrary schedule.}
{Determining optimal TDMA schedules and their adaptation to the changing
number of active tunnels is left to future work.}
\am{Not implemented, remove?}
\fi

\subsection{{Security Analysis}}
\label{subsec:impl-security}
{\sys} provides the following security property: an adversary cannot infer
application secrets from observing tunnel traffic. This
property is ensured by a combination of a secure shaping strategy, the tunnel design,
and implementation.

{\bf S1. Secure shaping strategy.} The tunnel transmits traffic in
differentially private-sized bursts
in fixed intervals. Thus, the overall shape is DP. The proof of DP is in
\S\ref{appendix:dp}.

{\bf S2. Secure tunnel design.}
(i) The privacy guarantees of a tunnel are configured before the start of
application transmission and do not change during the tunnel's lifetime.
(ii)~The tunnel mediates control between the end hosts, \eg by
transmitting custom connection establishment and termination messages. These
messages are subject to the same DP shaping as the payload traffic
(\S\ref{subsec:design-overview}).
(iii) The payload and dummy bytes in network packets are indistinguishable
because all payload and dummy bytes are packaged into QUIC packets and
encrypted uniformly. Moreover, QUIC handles acknowledgements, congestion
control, and loss recovery for both payload and dummy bytes uniformly
(\S\ref{subsec:design-overview}).

{\bf S3. Secure middlebox implementation.}
{(i)} The unshaped traffic between an end host and its local middlebox is
not visible to an adversary.
(ii) {\dshaper} follows the tunnel design in transmitting payload and dummy
bytes.
%
%
(iii) The time required for {\prepare} to prepare and enqueue shaped
buffers is masked to secret-independent times. The packetization of~buffers in
QUIC is secret-independent~(\S\ref{sec:implementation}) and
thus retains DP guarantees after post-processing.
Any delays in transmitting the buffers can arise only due to congestion or
packet losses in the tunnel network, which are secret-independent events.

\subsection{Deployment and Maintenance}
\label{subsec:design-discussion}
{\sys}'s tunnel endpoint design and implementation are both very modular and
portable.
The middlebox components are compatible with all application layer protocols
(\eg HTTPS, QUIC-TLS) and network stacks (\eg TCP, UDP, QUIC stacks).
The {\ushaper} could also be implemented as a standard SOCKS5 proxy~\cite{torpt}.

A tunnel endpoint could be integrated with any node along an application's
network path as long as the application traffic is unobservable until egress
from the tunnel.
By integrating with a trusted VPN gateway of an organization,
network administrators could manage ``long-term'' tunnels between the
organization's campuses and support multiple applications without modifying
individual end hosts. For instance, separate tunnels may be configured
per-application according to the organizational needs, or
configurations may be adapted based on coarse-grained changes in the traffic
patterns through the~day.

Alternatively, by integrating with end user devices, \eg with VPN clients,
users could instantiate a new bidirectional tunnel with a service before
each network activity, choose a different configuration for each tunnel
instance, and close the tunnel after completion of the activity. A key
requirement would be to secure the tunnel endpoint's execution from any internal
side channels on the end host, which would be now more prevalent than in the
middlebox setup.

\section{Evaluation}
\label{sec:eval}

Our evaluation answers the following questions.
(i) How well does {\sys} mitigate state-of-the-art network side-channel attacks?
{(ii) What are the overheads associated with varying DP relevant
configuration parameters?}
(iii) What are the packet latency overheads and the peak line rate and
throughput sustained by our {\sys} middlebox?
(iv) What are {\sys}'s overall costs on privacy, bandwidth, client latency, and
server throughput for different classes of applications?
(v) How do {\sys}'s privacy guarantees and performance overheads compare to
prior techniques?

\if 0
(i) What are the trade-offs in the choice of various configuration parameters
that affect packet latencies, bandwidth overheads, and privacy guarantees?
(\Cref{subsec:privacy-microbenchmarks}
(ii) What are the optimal packet latencies, line rate, and number of concurrent
tunnels that can be supported by a MinesVPN middlebox?
(\Cref{subsec:perf-microbenchmarks})
(iii) Does MinesVPN mitigate network side-channel attacks in the applications?
(\Cref{subsec:attack-eval})
(iv) What are privacy guarantees, and the overheads on bandwidth, client latency
and server throughput due to {\sys}'s shaping on different classes of
applications---video and web?  (\Cref{subsec:eval-video,subsec:eval-web})
(v) How do {\sys}'s privacy guarantees and performance overheads compare to
prior techniques? (\Cref{subsec:eval-video,subsec:eval-web})
\fi

For our experiments, we use four AMD Ryzen 7 7700X desktops each with eight 4.5
GHz CPUs, 32 GB RAM, 1 TB storage, and one Marvell AQC113CS-B1-C 10Gbps NIC. We
simulate client and server applications on two of the desktops and {\sys}'s
middleboxes on the other two desktops.
The middleboxes are connected to each other via an additional Intel X550-T2
10Gbps NIC on each desktop.
The client and server desktops are connected to one of the two
middleboxes each via the Marvell NICs, overall forming a linear topology.

We implemented the {\ushaper} and {\dshaper} processes in
\update{1100} and \update{1800} lines of C++ code, respectively, and deployed
them on Ubuntu OS 22.04.02 (kernel version 5.19).
{\sys} relies on the MSQUIC implementation of QUIC, libmsquic v2.1.8, which
includes OpenSSL for traffic encryption, contributing an additional
\update{180713} LoC to {\sys}'s TCB.

For rapid evaluation, we  built a simulator,
which implements the {\prepare} thread's DP logic.
The simulator transforms  an application's original packet sequence (from
tcpdump) into a sequence of burst sizes within fixed-length
intervals, and outputs a sequence of transmit sizes corresponding to DP
queries.
{We confirmed that the bandwidth overheads from the simulator closely match
the overheads observed on the testbed. Thus, we report privacy and
bandwidth overhead results from the simulator and latency and
throughput results from the testbed, unless specified otherwise.}
%

We use two applications for case studies, a video streaming service and a
medical web service, which we describe below. Both applications are hosted on an
Nginx 1.23.4 web server and the datasets are stored on the host file system.

\textbf{Video service.}
The video streaming service is used to serve \update{100 YouTube videos} in 720p
resolution with \update{5 min to 130.3 min (median 12.6 min)} durations and
\update{2.7 MB to 1.4 GB (median 73.7 MB)} sizes.
We implement a custom video streaming client in Python, which
uses one TCP connection for a single stream and requests individual 5s segments
from the service synchronously.
Unless specified otherwise, we stream the first 5 min of videos in our
experiments. The configuration is in line with that used in prior
work~\cite{schuster2017beautyburst} and reasonable, given the popularity of
short videos~\cite{videostats}.


\if 0
\paragraph{VoIP.}
\am{May go depending on the results. Replace with multi-tier web service
instead.}
We use FlexiSIP~\cite{flexisip} as a VoIP server and
Linphone~\cite{linphone} application as a client, both of which are hosted in
separate containers on Docker \todo{XXX version} inside a VM with \todo{Ubuntu
18.04 LTS and XXX kernel}. We use the DAPS dataset~\cite{daps} of \todo{X}
audio clips from \todo{source}~\cite{audio-dataset}. The audio clips are
encoded using \todo{G.722} and have durations ranging from \todo{X-Y min
(median: Z min)} and sizes ranging from \todo{X-Y KB (median: Z KB)}.
\fi

\textbf{Web service.}
The web service is used to serve a corpus of {96} static HTML pages of a medical
website, whose sizes range from \update{54 KB to 147 KB (median: 90 KB)}.
{As a client, we use modified wrk2 \cite{wrk2} that issues concurrent
asynchronous HTTPS GET requests at a specified rate.}

For all experiments, we use one baseline setup and one of three {\sys}
configurations.
In the baseline setup ({\base}), the client is directly connected to the server.
In the simulator setup ({\nssim}), we generated sequences of shaped burst sizes
using DP shaping.
With {\sys}, the traffic between the client and the server passes through two
middleboxes, each implementing {\ushaper} and {\dshaper}.
We consider two configurations of the middleboxes:
(i) {\nsnoshape}: {\dshaper} does not implement shaping (\ie neither DP noise
sampling nor the fixed length loop interval $\dpintvl_{max}$), allowing us to
measure the system overheads due to the middlebox implementation, and
(ii) {\ns}: {\dshaper} implements the full shaping mechanism.
{By default, each middlebox is configured with 128 pairs of per-flow
transmit and receive queues (unless specified otherwise). We configure the queue
sizes for the max data that can be transmitted at line rate for a given
DP shaping~interval. We configure a max cutoff for the shaped buffer length
based on the max burst length of the application and the number of active
flows. This implies that the number of flows is public.}

In addition, we compare with two other shaping strategies: constant-rate shaping
({\constshape}), which is the most secure shaping strategy, and Pacer
({\pacer}), a SOTA system that shapes traffic on a per-request basis.
For {\constshape}, we configure the peak load corresponding to the largest
object sizes in our applications, which involves transmitting 1.7MB in 5s for
videos and 57KB in 50ms for web.
For Pacer, we pad all web pages to the largest page size, \ie 147KB in our
dataset. For videos, Pacer pads a segment at $i$\textsuperscript{th} index
in a video stream to the largest segment size at that index across all videos in
the dataset.

\begin{figure}[t]
    \centering
    \includegraphics[width=0.85\columnwidth]{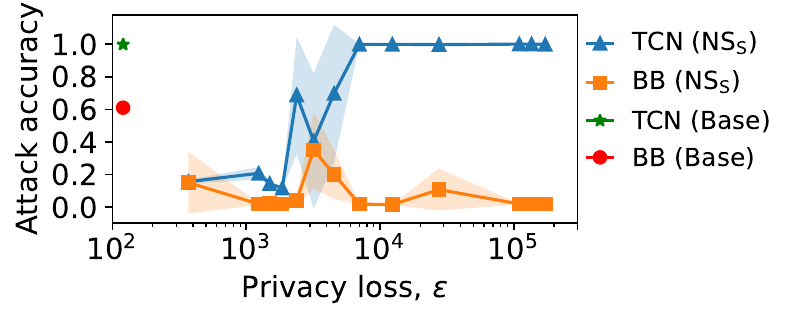}
    \caption{\update{Classifier accuracy on shaped traces.}
    }
    \vspace{-0.4cm}
    \label{fig:empirical-privacy}
\end{figure}

\subsection{{\sys} Defeats Attack Classifiers}
\label{subsec:attack-eval}
We start with an empirical evaluation of the privacy offered by {\sys}'s traffic
shaping. Recall that the traffic shaping depends on several DP parameters: the
window length $\winlen$, the sensitivity $\qsens$, the length of the DP
interval $\dpintvl$, and the privacy loss $\varepsilon$.
We evaluated the classifiers from \S\ref{subsec:attack-bg} on shaped traffic generated using various values of these DP relevant parameters.
We present the classifiers' performance based on~only one set of values for
$\winlen$, $\qsens$, and $\dpintvl$, while varying $\varepsilon$ between
\update{[200, 200000]}.
Our goal is to provide intuition about what values of $\varepsilon$ are
sufficient to thwart a side-channel~attack.

We set (i) $\winlen = 5s$ to align with the 5s video segments that comprise the
videos, (ii) \update{$\qsens = 2.5 MB$}, \update{which covers 99th \%ile}
of the distances in our dataset (\S\ref{sec:eval-extended}), and
(iii) $\dpintvl = 1s$, which
leads to composing the privacy loss over $\varnumupdates = 5$ DP queries on
the buffering queues. The 1s interval provides a reasonable trade-off between
privacy loss, and bandwidth and latency overheads, which we
discuss in the subsequent sections.
%

We used 40 videos of 5 min duration each. We streamed each video 100 times
through our testbed without shaping~and collected the resulting tcpdump
traces. For each value of \update{$\varepsilon$}, we transformed each unshaped
trace into a shaped trace using our simulator to generate a total of 4000
shaped traces.
We also shaped traces for {\constshape} and {\pacer}.


\update{
We train and test BB and TCN on shaped traces as in \S\ref{subsec:attack-bg} and
report the average and standard deviation of the accuracy of each classifier
over three runs.
}
Recall from \S\ref{subsec:attack-bg}, the BB and TCN accuracy on unshaped
traffic ({\base}) is 0.61 and 0.99, respectively.
\update{For {\constshape} and {\pacer}, the accuracy of both BB and TCN
classifiers is 0.025. This is expected since both strategies transform all
unshaped traces into a single shape.
}

\Cref{fig:empirical-privacy} shows the classifiers' performance with {\sys}.
While BB does not perform well for nearly all values of \update{$\varepsilon$},
TCN can be thwarted for \update{$\varepsilon$} upto 1000.
%
{\em While even \update{{$\varepsilon \approx 200$}} is too
large to offer meaningful theoretical privacy guarantees, it is sufficient to
defeat SOTA attacks.}

%

\begin{figure}[t]
    \centering
    \begin{subfigure}{0.49\columnwidth}
        \centering
        \includegraphics[width=\textwidth]{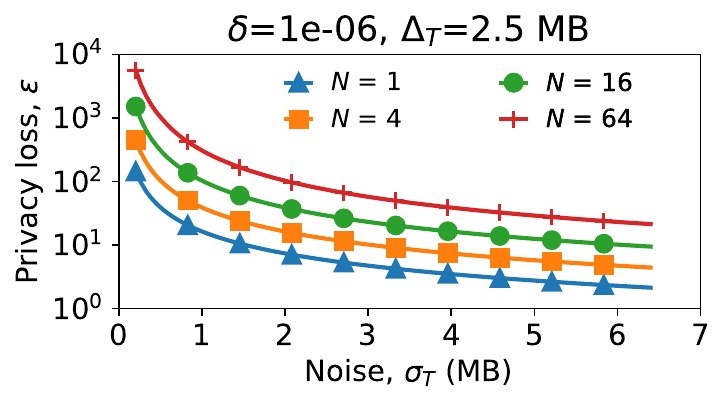}
        \caption{}
        \label{subfig:high-sensitivity-epsilon-sigma}
    \end{subfigure}
    \hfill
    %
    \begin{subfigure}{0.49\columnwidth}
        \centering
        \includegraphics[width=\textwidth]{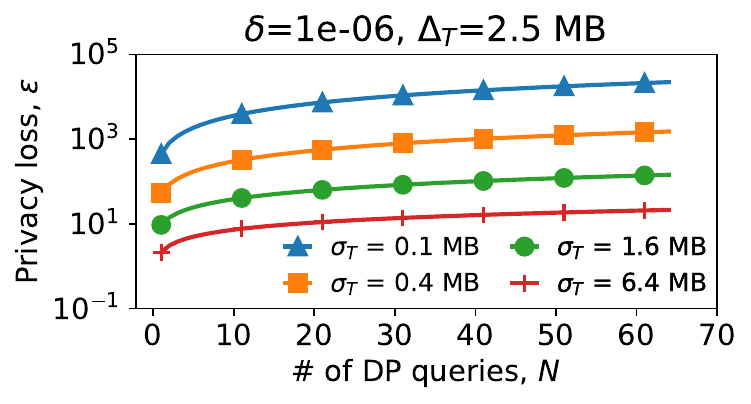}
        \caption{}
        \label{subfig:high-sensitivity-epsilon-queries}
    \end{subfigure}
    \caption{
      \update{Privacy loss vs
      (a) noise and (b) \# of DP queries.}
    }
    \vspace{-0.4cm}
    \label{fig:privacy-microbenchmarks}
\end{figure}

\subsection{Impact of Privacy Parameters}
\label{subsec:privacy-microbenchmarks}

We now  evaluate how $\varepsilon$ varies with $\qsens$,
$\sigma_\dpintvl$, and $\varnumupdates$.
Due to space constraints, we present plots for a fixed value of $\qsens
= 2.5 MB$ and defer other plots to \S\ref{sec:eval-extended}.
All analyses use $\delta = 10^{-6}$.

\Cref{subfig:high-sensitivity-epsilon-sigma} shows the tradeoff between
$\varepsilon$ and $\sigma_\dpintvl$ over four different values of
$\varnumupdates$.
Intuitively, a larger
$\sigma_\dpintvl$ implies higher bandwidth overhead due to DP shaping.
To retain a total privacy loss $\varepsilon = 1$ with at most 4 DP
queries, we need to add noise with $\sigma_\dpintvl = 18 MB$ for each DP
query. In contrast, $\varepsilon = 200$ with 4 DP
queries (approx. configuration that defeats the classifiers
in \S\ref{subsec:attack-eval}) only requires $\sigma_\dpintvl < 0.3 MB$.
%
\Cref{subfig:high-sensitivity-epsilon-queries} shows that the
total privacy loss escalates with an increase in the number of DP queries.
While fewer queries within a window (thus larger decision intervals) help to
lower the total privacy loss, the tradeoff is the higher latency overhead. We
discuss this tradeoff, as well as reduction in bandwidth overheads with
concurrent flows in \S\ref{subsec:eval-video} and \S\ref{subsec:eval-web}.

\if 0
Figures \ref{subfig:high-sensitivity-epsilon-sigma} and
\ref{subfig:low-sensitivity-epsilon-sigma} demonstrate the trade-off between
noise overhead and privacy loss composed over four different configurations of
total number of DP queries and for $\ssens$~of 10MB and 100KB, respectively.

For a fixed value of number of updates, figures
\ref{subfig:high-sensitivity-epsilon-sensitivity} and
\ref{subfig:low-sensitvity-epsilon-sensitivity} show the increase in privacy
loss with increasing values of sensitivity when maintaining four different noise
overhead levels.
If there is a significant variation in the buffering queue sizes across
different streams, more noise is required to preserve the same level of privacy
loss in the shaping mechanism.

Each DP query increases privacy loss for application streams. Figures
\ref{subfig:high-sensitivity-epsilon-queries} and
\ref{subfig:low-sensitivity-epsilon-queries} show the total privacy loss with
increasing number of queries for maintaining different noise overhead
levels.
Note that the privacy loss increases faster for larger values of sensitivity.
\fi

Using these plots, an application can choose suitable values of $\winlen$ and
\update{$\qsens$} to determine the tradeoff between \update{$\varepsilon$} and
$\sigma_{\dpintvl}$.
For our web application serving static HTML, we recommend $\winlen =
1s$, since web page downloads in our AWS setup (\S\ref{subsec:attack-bg})
finished within 1s, and \update{$\qsens = 60 KB$}, which covers all distances.
\S\ref{subsec:attack-eval} explained the choices for our video application.
\update{
Using \update{$\varepsilon$} and $\dpintvl$, we can further determine the
aggregate privacy loss over longer traffic streams using R\'enyi-DP composition.
For instance, with \update{$\varepsilon = 1$}, $\winlen = 5s$, and $\dpintvl =
1s$, the total privacy loss for a 5 min video, which generates 300 DP
queries at 1s intervals, is 8.92; the total loss for a 1 hr~video is 38.8.
}
We emphasize that {\em \update{$\qsens$ and $\varepsilon$} should be selected
using plots like \Cref{fig:privacy-microbenchmarks},
independently of the application's dataset.}

\if 0
Using these plots, an application can choose suitable values of
$\ssens$ and $\winlen$ to achieve the desired trade-off between
$\varepsilon_{\winlen}$ and $\delta_{\winlen}$.
For our video streaming application, for instance, we recommend configuring
$\winlen = 5s$, because it aligns with the 5s video segments that make up the
videos, and $\ssens = 1 MB$.
Looking at the distribution of the difference between every pair of video
segments in our dataset (\Cref{fig:sensitivity-comparison}), this covers 97\% of
the video streams in our dataset.
Similarly, for our web application serving static HTML, we configure $\winlen =
1s$, since web page downloads in our AWS setup (\S\ref{subsec:attack-bg})
finished within 1s, and $\ssens = 185 KB$, which covers 95\% of the web pages in
our dataset.
We emphasize that {\em $\ssens$ and $\varepsilon_\winlen$ should be selected
using trade-off plots similar to \Cref{fig:privacy-microbenchmarks} and
independently of the application's dataset.}
\fi

\if 0
We use video streaming and web services as representative examples of network
applications with high and low sensitivities, respectively.
\Cref{fig:sensitivity-comparison} shows the difference between queue sizes measured in our
middlebox during the transmission of video and web traffic.
An ideal sensitivity value should be greater than the majority of potential
differences between queue sizes.
The whiskers show the minimum and maximum queue sizes, the box
shows the first and third quartiles and the dashed line shows the median of the
queue sizes.
Although, the sensitivity should not be adjusted based on a particular dataset,
\Cref{fig:sensitivity-comparison} illustrates that approximately 97\% of video
traces in our dataset can be considered neighbors when using a sensitivity of 1
MB. Similarly, for the web dataset, a sensitivity of 185 KB covers more than 95\% of
web traces.
\fi


\subsection{Performance Microbenchmarks}
\label{subsec:perf-microbenchmarks}
We now turn our attention to experiments to determine the overheads on
per-packet latencies and the peak line rate and throughput sustainable by a
{\sys} middlebox.

\if 0
\begin{table}[t]
\centering
\begin{tabular}{llll}
    \toprule
    \textbf{Request size} & {\base~(avg)} & {\nsnoshape~(avg)} & \textbf{\%
    overhead} \\
    \midrule
    1.46 KB & 29999.87 & 29342.54 & 2.19
    \\
    14.6 KB & 27400.33 & 25967.91 & 5.23
    \\
    146 KB & 8039.04 & 787 & 2.07
    \\
    1460 KB & 804.6 & 798.34 & 0.78
    \\
    \bottomrule
\end{tabular}
\caption{Peak throughput (requests/s) of {\base} vs {\nsnoshape} \am{Replace
with a plot showing \#clients vs xput and \#clients vs latency
    for just the smallest and largest file sizes.}}
\label{tab:xput-nomask}
\end{table}
\fi

\begin{figure}[t]
    \centering
    \begin{subfigure}{0.49\columnwidth}
    \centering
    \includegraphics[width=\textwidth]{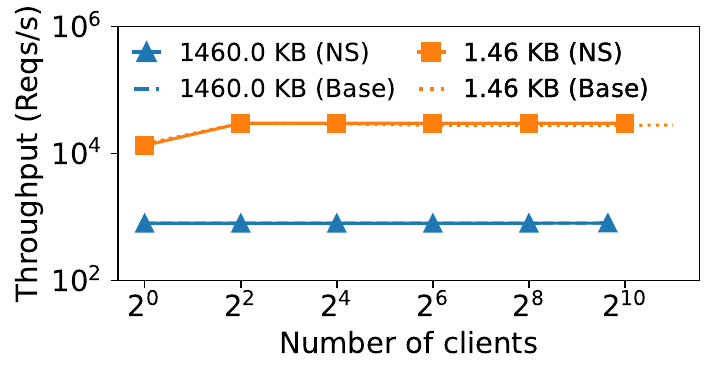}
    \caption{\# clients vs throughput}
    \label{subfig:clients-vs-xput}
    \end{subfigure}
    \hfill
    \begin{subfigure}{0.49\columnwidth}
    \centering
    \includegraphics[width=\textwidth]{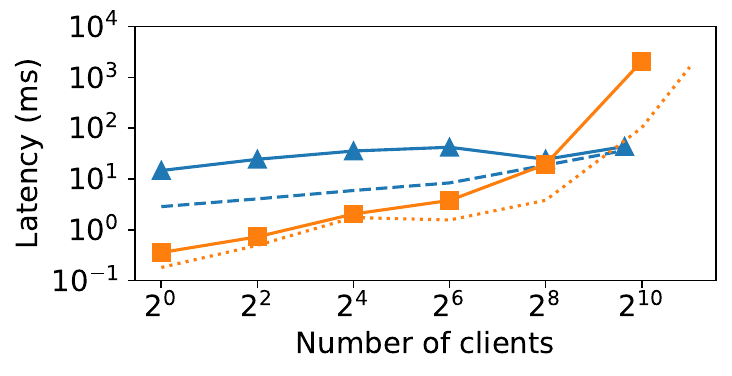}
    \caption{\# clients vs latency}
    \label{subfig:clients-vs-latency}
    \end{subfigure}
    \caption{Throughput and latency overhead due to middleboxes without shaping.
    }
    \vspace{-0.4cm}
    \label{fig:xput-latency-microbench}
\end{figure}

\textbf{Middlebox throughput.}
\if 0
For line rate, we use iperf3 \citeme{iperf3} to generate a large number of
asynchronous requests from the client to the server and measure the peak load
sustained. \todo{We report the average line rate measured across three X min
runs of iperf3.} The {\base} sustains a line rate of \todo{9.4 Gbps} and, with
\todo{9.39 Gbps}, {\nsnoshape} can match the {\base} line rate. \am{Do we need
this?}
\fi
We measure the peak throughput (requests/s) attained by a server application,
the response latency experienced by the clients, and the impact on this
throughput and latency due to the middleboxes.
We restrict Nginx to one worker thread and one core on the server desktop.
We evaluate using two object sizes: 1.4 KB (one MTU) and 1.4 MB.
We identify the peak request rate that can be handled by a single wrk2 client,
then increase the clients until we find the peak throughput the server can
provide.
We then vary the number of concurrent clients while generating the peak request
load sustainable by the server to find the maximum number of concurrent clients
that the server can handle and to measure the impact on the response latency.

We run experiments for \update{3 min} and discard the measurements from the
first minute to eliminate startup effects.
\Cref{subfig:clients-vs-xput} shows the average of the peak throughput
observed across \update{3 runs}.
The standard deviation is \update{below 1\%} in all cases. For 1.4 KB and 1.4 MB
objects, the {\base} server achieves a peak throughput of \update{30K req/s}
\update(64 clients) and \update{800 req/s} (800 clients), respectively.
{\nsnoshape} matches the peak throughput and the max concurrent clients
sustained by {\base}.

\if 0
\begin{table}[t]
    \centering
    \begin{tabular}{llll}
        \toprule
        \textbf{Component} & {Avg.} & {Std} & Max\\
        \midrule
        Baseline RTT & \todo{X} & \todo{X} & \todo{X}
        \\
        Shaped buffer prepare & \todo{X} & \todo{X} & \todo{X}
        \\
        {QUIC enqueue} & \todo{X} & \todo{X} & \todo{X}
        \\
        Receive path & \todo{X} & \todo{X} & \todo{X}
        \\
        Kernel overhead & \todo{X} & \todo{X} & \todo{X}
        \\
        \bottomrule
    \end{tabular}
    \caption{Breakdown of response latency (ms)}
    \label{tab:lat-breakdown}
\end{table}
\fi

\textbf{Latency.}
\Cref{subfig:clients-vs-latency} shows the average and standard deviation of
the response latencies over a \update{2 min} run.
The ping latency between each pair of directly connected desktops is
\update{0.56 $\pm$ 0.18 ms.}
%
This overhead comes from the fact that
each packet traverses four additional network stacks (across two middleboxes) in
each direction. This also involves data copy operations between the kernel and
user space. The data copy overhead is proportional to the object size; thus
{\nsnoshape}'s latency overhead increases with the larger response sizes.

The kernel data copy overheads are not fundamental to {\sys}'s design.
By using kernel bypass techniques or tools like DPDK \cite{dpdk}, {\sys} can
reduce the latency overhead.

\textbf{Shaping interval, preparation, and enqueue times.}
We further profile the middlebox execution to measure the max latencies of
the two components in the {\prepare} loop (\S\ref{sec:implementation}): the
preparation of the shaped buffer and queuing of the buffer to QUIC worker.
These measures determine the maximum
durations for preparing and
enqueueing shaped buffers ($\dpintvl_{prep}$ and $\dpintvl_{enq}$,
respectively), and the minimum value for the shaping interval
$\dpintvl$. We profile the delays with the middlebox configured
with \update{128 queues}, thus supporting a maximum of 128 concurrent clients.
One can profile the delays for a different
number of queues and concurrent clients.

Based on our measurements, we set \update{$\dpintvl_{prep} = 6 ms$ and
$\dpintvl_{enq} = 1 ms$}. The smallest
value for $\dpintvl$ that we can configure is \update{10ms}.

\textbf{Throughput and latency with shaping enabled.}
We now re-run the microbenchmarks with {\ns} configuration. We use three
different configurations for $\dpintvl$: \update{10ms, 50ms, and 100ms}.
We use 128 concurrent clients. The middlebox can sustain the peak throughput of
\update{30K req/s with 1.4KB objects} and \update{700 req/s with 1.4MB} objects
for each configuration of $\dpintvl$. For 1.4KB objects, the average and standard
deviation of the response latency with the three configurations are as follows:
\update{(i) $\dpintvl = 10ms$: 30.47 $\pm$ 3.89 ms}, \update{(ii) $\dpintvl =
50ms$: 51.39 $\pm$ 14.64 ms, (iii) $\dpintvl = 100ms$: 77.49 $\pm$ 28.96 ms}.
For 1.4MB objects, the latencies are as follows:
\update{(i) $\dpintvl = 10ms$: 41.31 $\pm$ 10.84 ms}, \update{(ii) $\dpintvl =
50ms$: 76.96 $\pm$ 21.12 ms}, \update{(iii) $\dpintvl = 100ms$: 127.48 $\pm$
45.69 ms}.
\update{The latency is dominated by the $\dpintvl$ configuration.}
The high variance in the latency is due to shaping.
If a request arrives just after the decision loop has prepared a buffer in the
current iteration, the request will be delayed by at least one iteration of the
loop.
Moreover, a negative sampling of DP noise may lead to a smaller shaped buffer
than the available payload bytes in the buffering queues, thus delaying the
requests by one or more intervals. This effect is particularly enhanced in a
workload close to the line rate. Thus, {\sys} can perform well within about 12-15\%
of the line rate.

\textbf{CPU utilization.}
\update{The CPU utilization is 3-10\% for the {\prepare} core and depends on the
DP shaping interval; the utilization is 8-70\% for the QUIC worker core,
which depends on the network I/O. The {\ushaper} core utilizes 100\% of the CPU as
it polls for packets from {\prepare}. As such, the {\prepare} and QUIC worker
cores would be able to support additional tunnel instances by time-sharing their
core. By using a polling
interval, we could reduce the CPU utilization of {\ushaper} to support
additional requests at the cost of additional latency.
In general, multiple tunnels can time-share the same physical cores, as long as
each core runs the same type of thread, to suffice property P4 mentioned in
\S\ref{sec:implementation}.
}

\subsection{Case Study: Video Streaming}
\label{subsec:eval-video}

\begin{figure}[t]
  \centering
  \begin{subfigure}[b]{0.49\columnwidth}
      \centering
      \includegraphics[width=\textwidth]{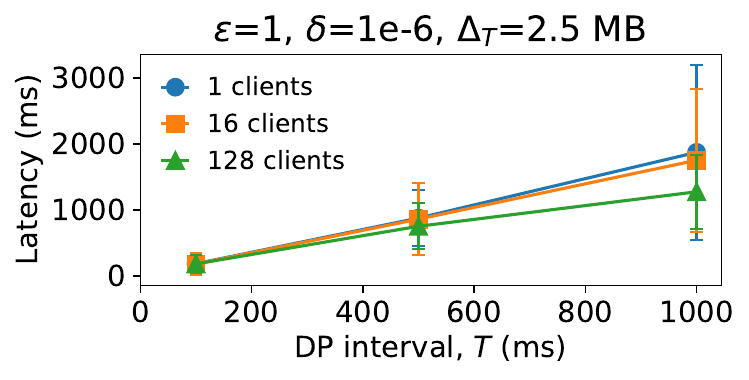}
      \caption{Latency vs DP interval}
      \label{fig:video-lat-vs-dpInt}
  \end{subfigure}
  \hfill
  \begin{subfigure}[b]{0.49\columnwidth}
      \centering
      \includegraphics[width=\textwidth]{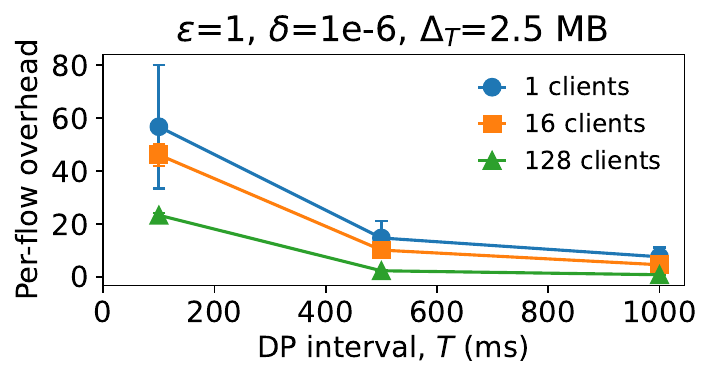}
      \caption{BW overhead vs DP interval}
      \label{fig:video-overhead-vs-dpInt}
  \end{subfigure}
  \begin{subfigure}[b]{0.49\columnwidth}
      \centering
      \includegraphics[width=\textwidth]{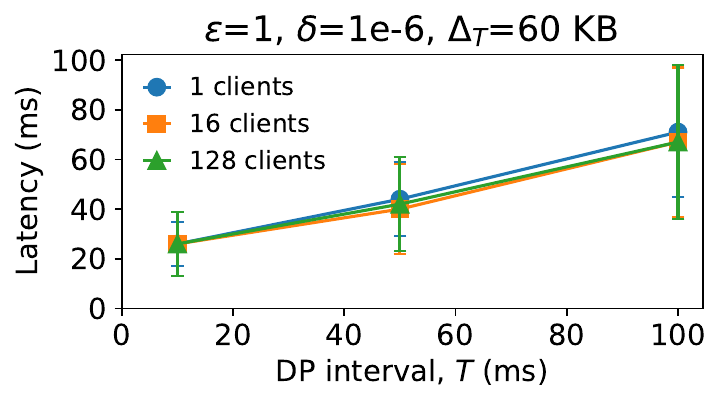}
      \caption{Latency vs DP interval}
      \label{fig:web-lat-vs-dpInt}
  \end{subfigure}
  \hfill
  \begin{subfigure}[b]{0.49\columnwidth}
      \centering
      \includegraphics[width=\textwidth]{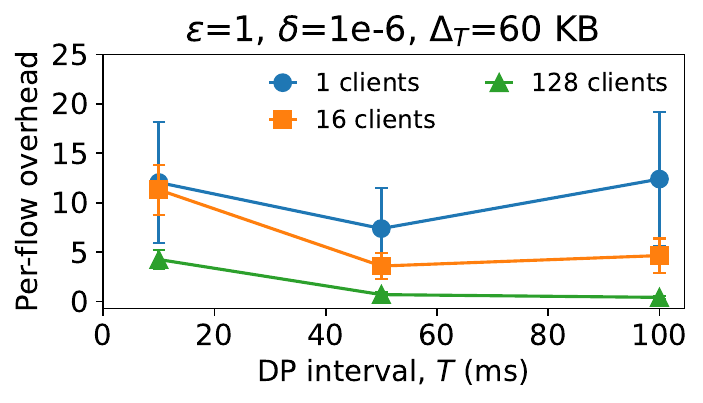}
      \caption{BW overhead vs DP interval}
      \label{fig:web-overhead-vs-dpInt}
  \end{subfigure}
  \caption{Latency and bandwidth overhead for different values of DP interval for video streaming (a, b) and for web (c, d).}
  \begin{subfigure}[b]{0.485\columnwidth}
      \centering
      \includegraphics[width=\textwidth]{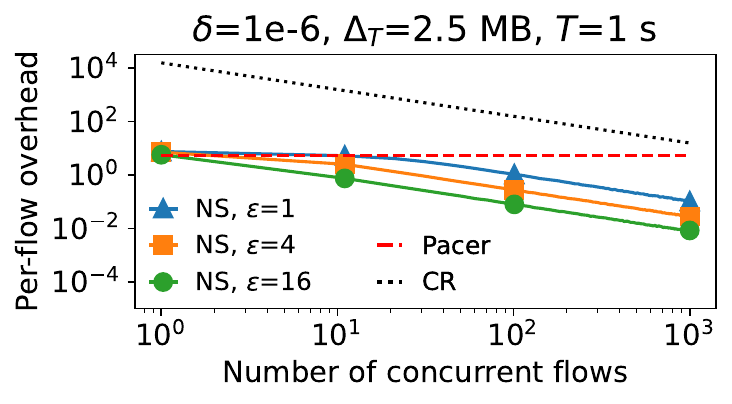}
      \caption{Video streaming.}
      \label{fig:video-overheads-compare}
  \end{subfigure}
  \hfill
  \begin{subfigure}[b]{0.485\columnwidth}
      \centering
      \includegraphics[width=\textwidth]{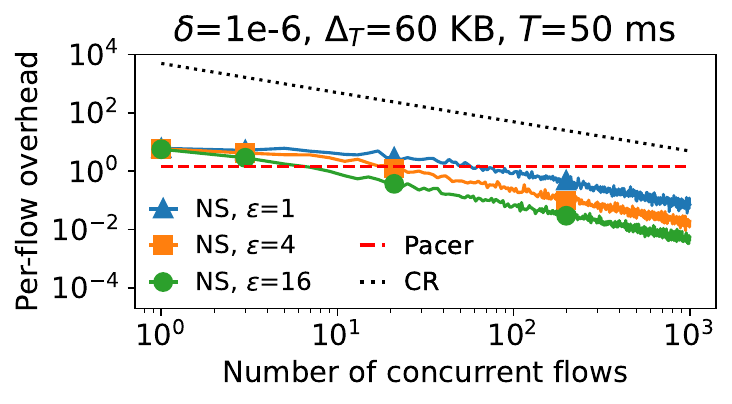}
      \caption{Web service.}
      \label{fig:web-overheads-compare}
  \end{subfigure}
  \caption{Comparison of {\sys}, constant shaping, Pacer.}
  \vspace{-0.4cm}
\end{figure}
Next, we examine the effect of different privacy settings on bandwidth and
latency overheads for video streaming clients.

We run experiments with three values of $\dpintvl$
for the server: 100ms, 500ms, and 1s, and max per-flow DP length cutoff of
1.21 MB, 1.22MB, and 1.7 MB, respectively. We use
\update{$\qsens = 2.5 MB$} and
\update{$\varepsilon = 1$}.
For all experiments, we set the DP parameters for client request traffic as
follows: \update{$\qsens = 200$}~bytes, $\winlen = 1s$, $\dpintvl = 10ms$,
\update{$\varepsilon = 1$} and max per-flow DP length cutoff of 206 bytes.
We run experiments with 1, 16, and 128 video
clients; each client requests one video randomly selected from the dataset.
For each set of configurations, we measure the average response latency for
individual video segments with the testbed as well as the per-flow relative
bandwidth overhead for the video streams in the simulator.

\textbf{Latency and bandwidth overhead.}
\Cref{fig:video-lat-vs-dpInt,fig:video-overhead-vs-dpInt} respectively show the
average segment download latency and the average per-flow relative bandwidth
overhead as a function of different intervals and for varying number of clients.
The {\base} segment download latency is \update{2.86 $\pm$ 1.41ms.} The latency
variance is due to variances in the segment sizes.
The relative bandwidth overhead of a video is the
number of dummy bytes transmitted normalized to the size of the unshaped video
stream.
The error bars show the standard deviation in latency and bandwidth overhead.

First, \Cref{fig:video-lat-vs-dpInt} shows that, for all values of $\dpintvl$,
the video segments can be downloaded well within 5s, which is the time to play
each segment and request the next segment from the server.
\update{The high variance is due to negative DP noise, which delays
payload transmission.}
Secondly, the results show the trade-off between latency and bandwidth. A larger
DP interval implies higher download latency but fewer queries, thus
yielding a lower bandwidth overhead.
Thirdly, with multiple concurrent clients, the bandwidth overhead is
amortized, while the average download latency only depends on $\dpintvl$.
Overall, {\sys} can secure video streams with low bandwidth overheads
and no impact on the streaming experience.

\subsection{Case Study: Web Service}\label{subsec:eval-web}
%
%
We perform similar experiments as \S\ref{subsec:eval-video} with our web
service. For the server responses, we use \update{$\qsens = 60 KB$}, $\winlen =
1s$, and \update{$\varepsilon = 1$}. We use $\dpintvl$ of {10ms, 50ms, and
100ms}, and per-flow DP length cutoffs of 60.8KB, 60.8KB, and 110.9KB,
respectively.
For the client requests, we use the same configs as in
\S\ref{subsec:eval-video}. We run \update{3 min} experiments with 1, 16, and 128
wrk2 clients with a total load of \update{1600 req/s};
each client requests random web pages from the dataset. We discard the numbers
of the first minute.

\textbf{Latency and bandwidth overhead.}
%
\Cref{fig:web-lat-vs-dpInt,fig:web-overhead-vs-dpInt} respectively show the
average response latency and the average per-web page relative bandwidth
overhead, across all web page requests.
The {\base} latency is \update{0.225 $\pm$ 0.3 ms}. (The high variance is due to
the time precision in wrk2 being restricted to 1ms.)
The web workload is more sporadic than video streaming, thus web page download
latencies have higher variance than video segment download latency.
The absolute latency overhead for {\ns} depends on the choice of $\dpintvl$. The
relative overhead depends on the underlying network latency, which unlike our
testbed, is in the order of 10s to 100s of milliseconds in the Internet.
%
Interestingly, the bandwidth overhead for web traffic first reduces with
increasing DP shaping interval from 10ms to 50ms, but then increases again
with an interval of 100ms. This is because, for small web pages, the DP interval
of 100ms is larger than the total time required to download web pages. As a
result, additional overhead is incurred due to the padding of traffic in the
100ms intervals.

\subsection{Comparison with other techniques}\label{subsec:eval-comparison}
%
%

\Cref{fig:video-overheads-compare,fig:web-overheads-compare} show the per-flow
relative bandwidth overhead of {\ns},
{\constshape}, and {\pacer} for video and web
applications, respectively, for varying number of concurrent flows.



For both video and web traffic, {\ns} incurs three orders of magnitude lower
overhead than {\constshape}, which requires continuously transmitting traffic at
the peak server load (configured for 1000 clients).
For video and web traffic, {\ns} requires 11 flows and more than 40 flows,
respectively to achieve lower overhead {\pacer}.
Pacer shapes server traffic only upon receiving a client request and does not
shape client traffic. Thus, it leaks the timing and shape of client requests,
which could potentially reveal information about the server responses
\cite{chen2010reality}.
{\sys} shapes traffic in both directions, which incurs higher overhead at the
cost of stronger privacy than Pacer.

\if 0
\paragraph{Bandwidth overhead vs privacy loss.}
We compare the bandwidth overhead for web traffic due to {\ns}, {\constshape},
and Pacer in \Cref{fig:web-overheads-compare}.
\todo{We use $\winlen = 1 s$, $\dpintvl = 0.1 s$, $\ssens = 150 KB$, and the
same values of $\varepsilon_{\winlen}$ as in \S\ref{subsec:eval-video}, and we
measure the per-flow overhead until each web page is downloaded.}

\todo{Here, {\ns} incurs higher overheads than Pacer because Pacer only pads all
web pages to the largest page length in a service's dataset, whereas {\ns}
performs traffic shaping without any prior knowledge of the page sizes.} \am{Not
convinced about the argument here. $\ssens$ is chosen somewhat based on the
dataset; the overheads would vary based on $\ssens$.}

\fi

\smallskip\noindent
\textbf{Evaluation summary.}
{
Our evaluation provides four insights.
(i) There is a huge gap between the theoretical DP guarantees and the
privacy configurations required to defeat SOTA attacks.
(ii) The latency overhead is dominated by the choice of DP shaping interval,
(iii) {\sys}'s middlebox can match about 88\% of the 10Gbps NIC line rate; a
single core of
{\ushaper} can match the peak throughput of a single core server,
(iv) {\sys}'s cost is in the two additional cores for {\prepare} and QUIC worker,
which helps to avoid any secret-dependent interference in shaping and keep low DP
shaping loop lengths. By optimising the implementation, we could use a
single middlebox to support larger workloads.
}

\vspace{-0.1cm}
\section{Related Work}
\label{sec:related}


\textbf{Traffic shaping for web.}
Prior work used traffic shaping~for defending against website
fingerprinting attacks.
%
Walkie-Talkie \cite{wang2017walkietalkie}, Supersequence
\noindent \cite{wang2014supersequence}, and Glove \cite{nithyanand2014glove} use
clustering techniques to group objects of a corpus and then shape the
traffic of all objects within each cluster to conform to a similar pattern.
Traffic morphing \cite{wright2009morphing} makes the traffic of one page look
like that of another.
These techniques compute traffic shapes that envelope or resemble the network
traces of individual objects. Hence, they require a large number of
traces to account for network variations.
{\sys} dynamically adapts traffic shapes based on the
prevailing network~conditions.

Cs-BuFLO~\cite{cai2014csbuflo}, Tamaraw~\cite{cai2014tamaraw}, and
DynaFlow~\cite{lu2018dynaflow}, determine traffic shape directly at runtime.
They pad object sizes to values that are correlated
with the original object sizes, such as the next
multiple or power of a configurable constant.
These defenses provide privacy akin to k-anonymity, but have no control on the
the size of the anonymous cluster.
Tamaraw\cite{cai2014tamaraw} formalizes the privacy guarantees.
{\sys}'s DP guarantees are strictly stronger than Tamaraw's (proof
in~\S\ref{appendix:tamaraw}).
%


\textbf{Differential privacy over streams.}
Dwork et al.~\cite{dwork2010dpcontinual} study DP on streams, in which
neighboring streams differ in at most one element and the query
counts over the stream prefix (without forgetting
old information) under a fixed DP budget for the entire stream.
{\sys} requires a stronger neighboring definition to model application data
streams (Def. \ref{def:neighboring-streams}), but is able to forget the past by
dropping stale data from our queue (\Cref{assumption:window}) and compose
privacy loss over time.

{\sys}'s neighboring definition is closer to that of user-level DP over streams
in \cite{dwork2010pan}. Instead of counting discrete change,
however, we use the L1 distance which enables coarsening.
Pan-privacy considers an adversary that can compromise the internals of
the algorithms (\eg our buffering queue). This makes the design of algorithms
challenging and costly.
Instead, we consider the buffering queue private and focus on
practical algorithms to study the privacy/overheads trade-off.


Kellaris et al.~\cite{kellaris2014differentially} introduce a notion of DP,
called $w$-event privacy, for streams of length $w$.
Neighboring stream pairs are those whose individual events are pairwise
neighbors within a window of upto $w$.
{\sys}'s neighboring definition accounts for the maximum stream distance over
{\em any} window of length $\winlen$, which is a better match for the streams
we consider.

Zhang et al.~\cite{zhang2019statistical} generate differentially private shapes
for video streams using Fourier Perturbation Algorithm (FPA)
\cite{rastogi2010differentially}.
FPA transforms a finite time series of bursts, into a series of DP shaped bursts
of the same length.
FPA requires the entire stream's profile upfront, and cannot guarantee complete
transmission of an input stream within the shaped trace.
{\sys}'s DP guarantees simply compose over burst interval sequences, thus
allowing shaping of streams of arbitrary lengths with
quantifiable privacy and overheads.



\textbf{Adversarial defenses.}
Adversarial defenses~\cite{rahman2020mockingbird, hou2020wf, abusnaina2020dfd,
shan2021dolos, nasr2021defeating, gong2022surakav}
generate targeted and low-overhead noise to defeat specific classifiers.
{\sys} provides provable and configurable privacy against both SOTA as
well as future classifiers.

\if 0
\paragraph{Mitigations for contention-based network side channels.}
An adversary may be colocated with a victim in the same network and may share
links with a victim’s traffic (e.g., a tenant VM colocated on a Cloud server or
rack). In such a scenario, the adversary can induce contention with the victim’s
traffic on a shared link and use the resulting variations in its own traffic
shape to infer the victim’s traffic shape and ultimately the victim’s secrets.
Such leaks can be mitigated using TDMA
scheduling on the shared links \cite{vattikonda2012tdma, beams2021ifs},
which eliminates the
adversary’s ability to observe the victim’s traffic. {\sys} focuses
on an orthogonal threat model where the adversary is privileged and owns (a
part) of the victim’s network path (\eg ISP). It may be infeasible for a tenant
to achieve isolation from the adversary (ISP); thus, {\sys} leverages traffic
shaping to protect the victim’s traffic in the face of adversarial observations.
\fi

\textbf{Network side-channel mitigation systems.}
{\sys}'s shaping tunnel is conceptually similar to Pacer's~\cite{mehta2022pacer}
cloaked tunnel.
However, Pacer mitigates leaks of a Cloud tenant's secrets to a
colocated adversary through contention at shared network links.
Pacer's cloaked tunnel controls the transmit time of
TCP packets in accordance with the shaping schedule and congestion control
signals. Thus, Pacer requires
non-trivial changes to the network stack on the end hosts.
%
{\sys} protects applications that are behind private networks but communicate
using the public Internet.
{\sys}'s tunnel endpoints can be placed at the interface of
the private-public networks, \eg in a middlebox, thus supporting
multiple applications without modifying end hosts.
Moreover, by shaping above the transport layer, {\sys} needs to control only the
precise timing for generation of bursts of DP length and not the subsequent
transmission to the network.


Ditto~\cite{meier2022ditto} and {\sys} propose shaping traffic at network nodes
separate from end hosts.
{\sys} proposes a hardware-independent, modular and portable
middlebox architecture that can be integrated with end hosts, routers,
gateways, or even programmable switches as in Ditto.


\textbf{Systems with other goals.}
Censorship circumvention systems~\cite{mohajeri2012skypemorph,
winter2013scramblesuit, barradas2017deltashaper,
barradas2020poking, rosen2021balboa}
rely on traffic obfuscation, scrambling, and transformations of a sensitive
application’s shape to that of a non-sensitive application. These techniques
prevent identification of original protocols by deep packet inspection,
but do not prevent inference of secrets from traffic shapes.

\if 0
QCSD~\cite{smith2022qcsd} is a client-side only defense framework for website
fingerprinting attacks. QCSD instructs the server about sending payload or dummy
data using flow control signals in the headers of the QUIC packets sent from the
client. Moreover, it uses QUIC's PADDING frames to transmit dummy data.
Unlike the STREAM frames, PADDING frames do not generate
acknowledgements \cite{rfc9000} and, thus, are distinguishable on the network.
Overall, QCSD is a best-effort approach and cannot provide {\sys}'s guarantees.
\fi

\if 0
\paragraph{Limitations of QUIC features.}
An alternative design approach is to utilize QUIC padding frames instead of
introducing a separate dummy stream to the connection. However, as stated in the
QUIC specification~\cite{rfc9000}, padding frames are not acknowledged by the
receiver. This poses a significant challenge for us since we require all events
observable by potential adversaries, including ACK messages, to be independent
of the content of streams.
MASQUE (Multiplexed Application Substrate over QUIC Encryption) is a framework
that facilitates the simultaneous execution of multiple networking applications
within an HTTP/3 connection and enables a QUIC client to negotiate proxying
capabilities with an HTTP/3 server.
While MASQUE currently lacks a stable specification and implementation, it holds
potential for providing negotiation capabilities for {\sys} in the future.
\fi

Karaoke~\cite{lazar2018karaoke} and Vuvuzela~\cite{van2015vuvuzela} are
anonymous messaging systems that use~DP to hide participants in a conversation,
but use constant-rate traffic among the participants.
\update{
AnoA~\cite{backes2013anoa} is a framework to analyze anonymity
properties of anonymous communication
protocols. AnoA supports DP based quantification for various
properties, such as sender anonymity and sender unlinkability.
}
{\sys}'s differentially-private traffic shaping hides the traffic content. In
principle, {\sys} could be combined with an anonymous communication system to
provide both content privacy and anonymity with~DP.


\vspace{-0.2cm}
\section{Conclusion}
\label{sec:conclusion}

{\sys} is a provably secure network side-channel mitigation system that provides
quantifiable and tunable privacy guarantees in traffic shaping.
\todo{We believe that {\sys} can eliminate the arms race in {\nsca} attacks and
defenses and can provide a portable and configurable framework for deploying
mitigations for applications with diverse traffic characteristics and in
different settings.}
{\sys}'s DP based traffic shaping strategy as well as its
modular and portable tunnel design can be extended to mitigate leaks
in multi-node systems, but we leave the details to future work.

\vspace{-0.2cm}
\section{Acknowledgments}
We thank the reviewers and our shepherd for their constructive
feedback.
This work was supported by the Natural Sciences and Engineering Research
Council of Canada (NSERC) [RGPIN-2021-02961,
DGDND-2021-02961], the Department of National Defense (DND) [MN3-011], a
Google Research Scholar award, the Digital Research Alliance of Canada, and AWS
(through UBC Cloud Innovation Center).

{
\footnotesize
\bibliographystyle{abbrv}
\bibliography{paper}
}

\appendix

\section{Attack classifiers}
\label{appendix:tcn}

{\small
\textbf{Beauty and Burst.}
The Beauty and the Burst classifier (BB) \cite{schuster2017beautyburst} is a CNN
(convolutional neural network) consisting of three convolution layers, a max
pooling layer, and two dense layers. We use a dropout of 0.5, 0.7, and 0.5
between the hidden layers of the network. We train the classifier with an Adam
optimizer, a categorical cross-entropy function, a learning rate of 0.01, with a
batch size of 64, and for 1000~epochs.
}

\begin{figure}[t]
    \centering
    \includegraphics[width=\columnwidth]{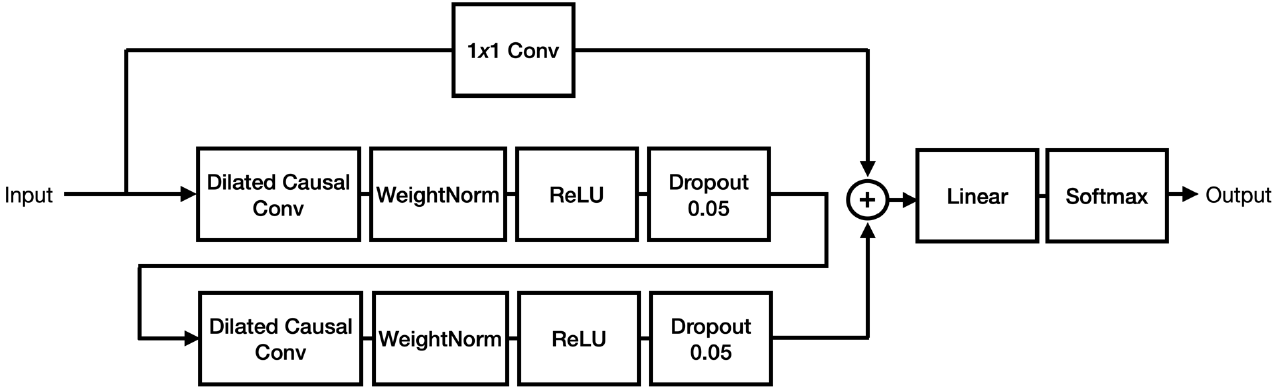}
    \caption{TCN classifier}
    \vspace{-0.4cm}
    \label{fig:tcn}
\end{figure}

{\small
\textbf{Temporal Convolution Network.}
%
While CNNs are generally effective in sequence modelling, they look at future
samples in a sequence and a very limited history of past samples to decide the
output of the current sample. Consequently, they require a large number of traces
and long traces for effective training and prediction.

Temporal Convolutional Networks (TCNs) \cite{bai2018tcn} overcome these problems
of CNNs by utilizing a one-dimensional fully-convolutional network equipped with
causal dilated convolutions, which allows them to examine deep into the past to
produce an output for the sequence at any given moment.

\Cref{fig:tcn} shows the architecture of our TCN classifier, which follows the
architecture proposed by Bai et al. \cite{bai2018tcn}. It
consists of two dilated causal convolutional layers, followed by weight
normalization and dropout layers with a dropout probability of 0.05. We train
the classifier for 1000 epochs.
}


\if 0
Convolutional neural networks have been shown to be effective in sequence
modelling for decades [1]. However, there are two problems with using a
convolutional neural network as a sequence modeller. First, convolutional layers
applied to a sequence are not inherently causal, meaning that they look into
future samples of a sequence to decide the output for the current sample.
Secondly, in contrast to recurrent neural networks(RNNs) [2], convolutional
neural networks lack a deep effective history size of past samples in the
sequence (i.e. their effective history is bounded to the number of samples that
kernel can cover from the past). To address these problems, Bai et al. [3]
proposed a new architecture called Temporal Convolutional Network (TCN). The TCN
utilizes a one-dimensional fully-convolutional network [4] equipped with causal
dilated convolutions [5], allowing it to examine deep into the past to produce
an output for the sequence at any given moment.  To further stabilize deep and
large TCNs, they added a generic residual block from input to output. The
architecture is shown in the following figure (it will be re-plotted for the
paper).
\fi
\section{Extended evaluation of privacy vs overheads}
\label{sec:eval-extended}

\begin{figure}[t]
    \centering
    \begin{subfigure}{0.49\columnwidth}
      \includegraphics[width=\textwidth]{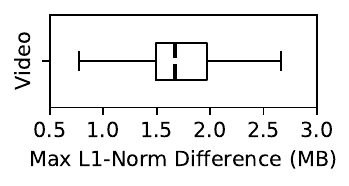}
    \end{subfigure}
    \hfill
    \begin{subfigure}{0.49\columnwidth}
      \includegraphics[width=\textwidth]{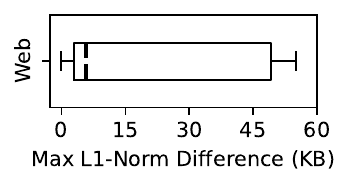}
    \end{subfigure}
    \caption{\update{Distribution of the difference in buffering queue lengths
    for application stream pairs.}
    }
    \vspace{-0.2cm}
    \label{fig:sensitivity-comparison}
\end{figure}

\begin{figure}[t]
    \centering
    \begin{subfigure}{0.49\columnwidth}
        \includegraphics[width=\textwidth]{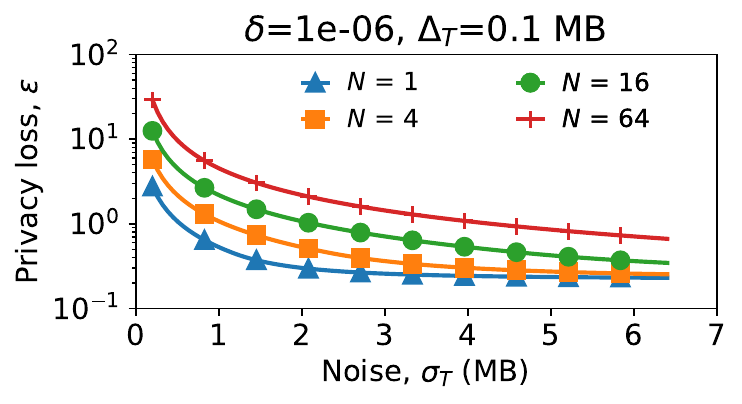}
        \caption{Noise vs privacy loss}
        \label{subfig:low-sens-epsilon-sigma}
    \end{subfigure}
    \hfill
    \begin{subfigure}{0.49\columnwidth}
        \includegraphics[width=\textwidth]{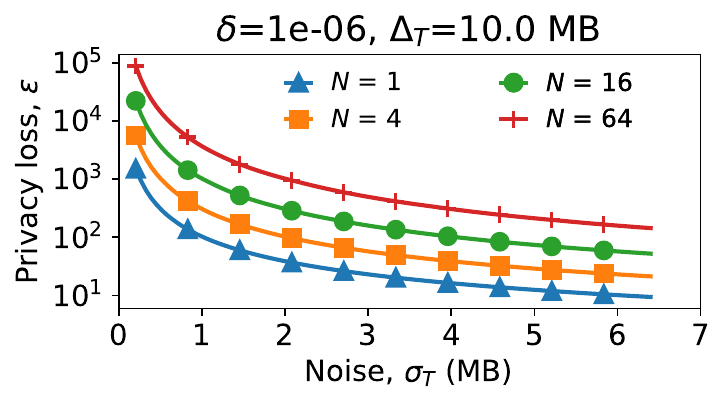}
        \caption{Noise vs privacy loss}
        \label{subfig:high-sens-epsilon-sigma}
    \end{subfigure}
    \hfill
    %
\begin{subfigure}{0.49\columnwidth}
    \includegraphics[width=\textwidth]{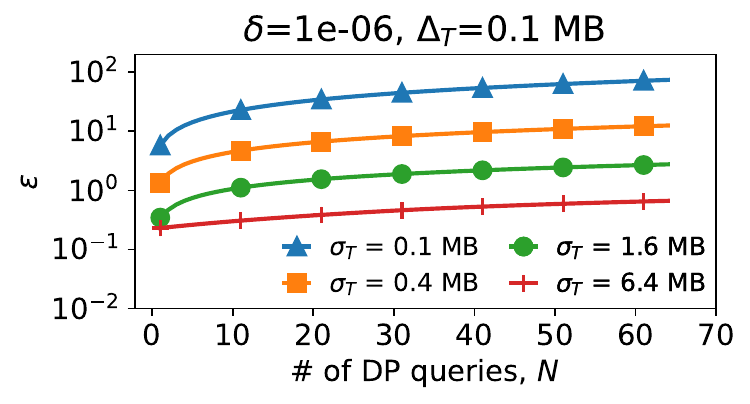}
    \caption{\# DP queries vs privacy loss}
    \label{subfig:low-sens-epsilon-queries}
\end{subfigure}
\hfill
\begin{subfigure}{0.49\columnwidth}
    \includegraphics[width=\textwidth]{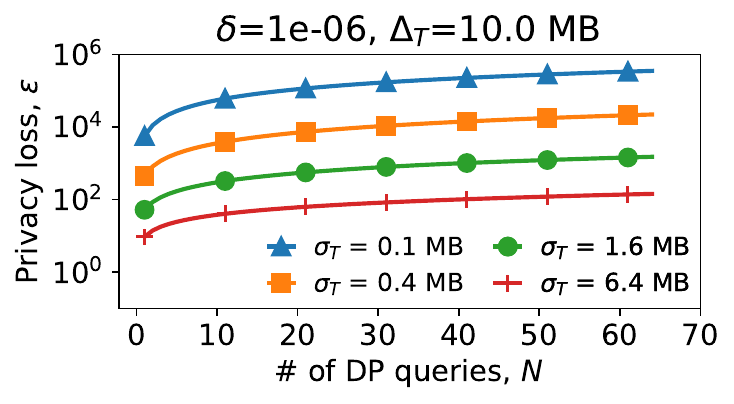}
    \caption{\# DP queries vs privacy loss}
    \label{subfig:high-sens-epsilon-queries}
\end{subfigure}
    %
    %
\caption{
    \update{Privacy loss vs noise and \# queries for different~$\qsens$.}
}
\vspace{-0.4cm}
\label{fig:privacy-microbenchmarks-extended}
\end{figure}

{\small
\textbf{Distribution of queue length differences.}
\Cref{fig:sensitivity-comparison} shows the distribution of the
buffering queue length differences generated for each pair of an
application's streams. We use $\winlen$ of 5s
for video streams and 1s for web pages.
The dashed lines show the median, which is \update{1.63 MB} and
\update{6 KB} for videos and web pages, respectively.

\textbf{Privacy loss vs overheads for other \update{$\qsens$}.}
\Cref{fig:privacy-microbenchmarks-extended} shows similar results as
\Cref{fig:privacy-microbenchmarks} for \update{$\qsens$} of 0.1 MB and 10 MB.
}

{\small
\textbf{Bandwidth overheads with a fixed cutoff.}
In \S\ref{sec:eval}, we explained that {\sys} applies a cutoff to the DP
shaped buffer length (if the sampled noise is very large) that depends on the
number of active flows. This reveals the number of active flows at any given
time.
To hide the number of flows, we can set
the cutoff to a fixed~value (\eg based on the
maximum flows that the system must support).

\Cref{subfig:bw-vs-dpInt-video-no-max} presents results similar to those of
\Cref{fig:video-overhead-vs-dpInt}, but with the cutoff for the shaped buffer
length set to a fixed value of 1.21 GB, 1.22 GB, and 1.7 GB.
\Cref{subfig:bw-vs-dpInt-web-no-max} presents results similar to those of
\Cref{fig:web-overhead-vs-dpInt} but with cutoffs of 60.8~MB, 60.8~MB, and 110.9
MB.
\Cref{subfig:bw-comparison-video-no-max,subfig:bw-comparison-web-no-max} show
the results similar to those of
\Cref{fig:video-overheads-compare,fig:web-overheads-compare}.
%
The fixed cutoffs correspond to 1000 flows, which lead to a significant increase
in {\sys}'s overheads.
However, the overheads quickly amortize with several concurrent flows.
}

\begin{figure}[t]
    \centering
    \begin{subfigure}{0.49\columnwidth}
        \centering
        \includegraphics[width=\textwidth]{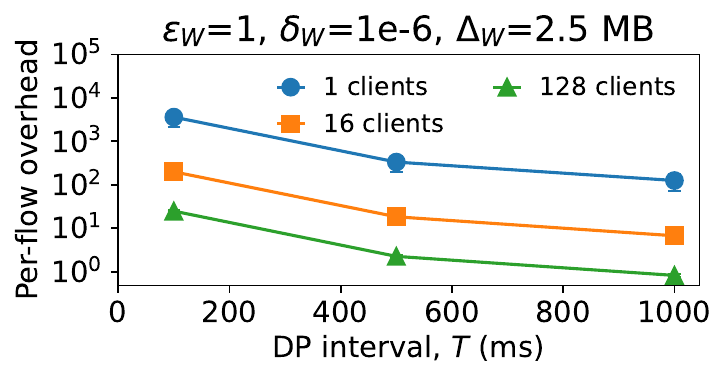}
        \caption{Video streaming}
        \label{subfig:bw-vs-dpInt-video-no-max}
    \end{subfigure}
    \hfill
        %
    \begin{subfigure}{0.49\columnwidth}
        \centering
        \includegraphics[width=\textwidth]{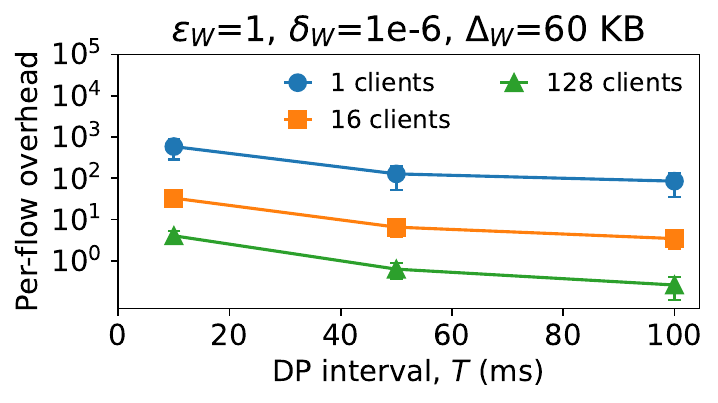}
        \caption{Web service}
        \label{subfig:bw-vs-dpInt-web-no-max}
    \end{subfigure}

    \centering
    \begin{subfigure}{0.49\columnwidth}
        \centering
        \includegraphics[width=\textwidth]{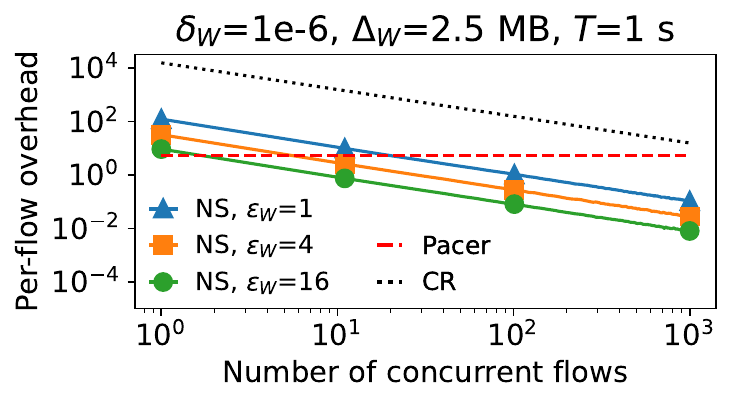}
        \caption{Video streaming}
        \label{subfig:bw-comparison-video-no-max}
    \end{subfigure}
    \hfill
        %
    \begin{subfigure}{0.49\columnwidth}
        \centering
        \includegraphics[width=\textwidth]{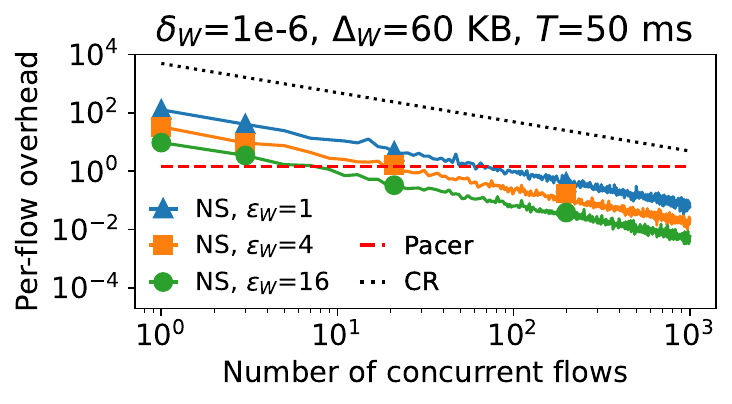}
        \caption{Web service}
        \label{subfig:bw-comparison-web-no-max}
    \end{subfigure}
    \caption{
        \update{Bandwidth overheads with a fixed cutoff.}
    }
\vspace{-0.4cm}
\end{figure}

\section{Proof of {\sys}'s DP based Shaping}
\label{appendix:dp}





{\small
\begin{numprop}[\ref{prop:sensitivity}]
  {$\sys$} enforces $\qsens \leq \ssens$.
\end{numprop}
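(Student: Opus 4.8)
The plan is to fix two neighboring streams $\istream,\istream'$ (as in \Cref{def:neighboring-streams}) and a single measurement index $k$, and to bound the coupled difference $|\qlent{k} - \qlent{k}'| \le \ssens$ directly; since \Cref{eqn:ssens} defines $\qsens$ as the maximum of this quantity over $k$ and over neighboring pairs, this suffices. First I would make precise the sense in which the queue lengths are deterministic ``functions of the stream'': each shaped buffer has size $\qlendpt{j} = \qlent{j} + z_j$ (clipped to the configured min/max thresholds), where $z_j$ is the Gaussian draw in interval $j$. Because the validity of each per-interval Gaussian mechanism requires a sensitivity bound that holds for \emph{every} realization of the noise, I fix an arbitrary noise history $z_1, z_2, \dots$ and run both streams through the tunnel with the \emph{same} draws; for a fixed noise history, $\qlent{k}$ and $\qlent{k}'$ are deterministic, and it is this coupled difference I must bound.

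\textbf{Dequeuing is a contraction.} The core step is a monotonicity argument on a single interval. In interval $j$ the tunnel transmits $\mathrm{sent}_j = \min\!\big(\qlent{j},\ \mathrm{clip}(\qlent{j}+z_j)\big)$ payload bytes, so the residual queue right after dequeuing is $r_j = \qlent{j} - \mathrm{sent}_j$. Viewed as a function of $\qlent{j}$ with $z_j$ fixed, $\mathrm{sent}_j$ is non-decreasing and $1$-Lipschitz, since it is the pointwise minimum of the identity map and the non-decreasing $1$-Lipschitz map $\qlent{j}\mapsto\mathrm{clip}(\qlent{j}+z_j)$. Hence, taking w.l.o.g.\ $\qlent{j} \ge \qlent{j}'$, we get $0 \le \mathrm{sent}_j - \mathrm{sent}_j' \le \qlent{j} - \qlent{j}'$, and therefore $0 \le r_j - r_j' \le \qlent{j} - \qlent{j}'$: after dequeuing, the signed queue difference keeps its sign and never grows in magnitude. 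This formalizes the ``dequeuing only brings the queues closer'' claim of the proof sketch, and it also lets me compare either real execution against a ``ghost'' execution driven by the same arrivals and noise.

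\textbf{From a bounded window to $\ssens$.} It remains to control the arrivals. Between the queries at $j$ and $j{+}1$ the queue receives the interval's new bytes (total $a_j$ for $\istream$, $a_j'$ for $\istream'$, i.e.\ the quantities $L^{\istream}_{j\dpintvl}$ of \Cref{def:neighboring-streams}) and drops any bytes older than $\winlen$; TTL drops only decrease the queue. By \Cref{assumption:window}, every byte present in $\buffQ$ at time $k\dpintvl$ arrived within the last $\winlen$, i.e.\ within the $\winlen/\dpintvl$ intervals ending at $k$. To exploit this cleanly I would compare each stream with the ghost execution that starts from an \emph{empty} queue at time $(k - \winlen/\dpintvl)\dpintvl$ and thereafter sees the same arrivals and noise: \Cref{assumption:window} forces the real and ghost queues to coincide at time $k\dpintvl$ (the entire prefix before the window has been flushed), so $\qlent{k}$ and $\qlent{k}'$ depend only on the in-window arrivals and noise. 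Feeding the per-interval arrival differences $|a_j - a_j'|$ through the contraction of the previous paragraph and telescoping, the accumulated difference at time $k$ is at most $\sum_{j} |a_j - a_j'|$ taken over exactly the intervals of that $\winlen$-window, which is $\norm{\istream_{\cdot,\winlen} - \istream'_{\cdot,\winlen}}_1 \le \ssens$ by \Cref{def:neighboring-streams}. Taking the maximum over $k$ and over neighboring $\istream,\istream'$ gives $\qsens \le \ssens$.

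\textbf{Main obstacle.} I expect the delicate part to be the last step: reconciling the asymmetric, state-dependent TTL drops with the contraction so that bytes older than $\winlen$ provably contribute nothing to the difference while the in-window arrival differences do not over-accumulate. The ghost-queue comparison above is the tool for this — it reduces the claim to a finite telescoping sum over one window, after which it is just the contraction lemma applied repeatedly — but the boundary cases (the shaped buffer hitting the clip thresholds, the queue emptying mid-interval, a byte being TTL-dropped rather than sent) will each need to be checked against the $1$-Lipschitz/monotonicity property. Those checks are routine once the contraction lemma and the ghost reduction are in place.
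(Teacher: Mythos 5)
Your proposal follows the same two-part structure as the paper's proof: first establish that the coupled dequeue step is a monotone $1$-Lipschitz contraction on queue lengths, then exploit Assumption~\ref{assumption:window} to localize the accumulated queue-length difference to a single $W$-window, where Definition~\ref{def:neighboring-streams} caps it at $\ssens$. The contraction argument matches the paper's (same-sign plus $|\payload_k - \payload_k'| \le |\qlent{k-1}-\qlent{k-1}'|$) modulo phrasing. The interesting divergence is in the localization step. The paper telescopes $d_k \le d_0 + \sum_{0 \le t < T_k}|P^S_t - P^{S'}_t|$ over the \emph{entire} stream prefix and then asserts the part of the sum over $[0,\,T_k{-}W)$ vanishes by Assumption~\ref{assumption:window}; that assumption constrains queue drain times, not the arrival differences of neighboring streams, so the step is asserted rather than derived. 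Your ghost-queue reduction --- compare the real execution against an execution started empty at $T_k-W$, show coincidence at $T_k$, then telescope the contraction over the single window to bound the ghost difference by $\ssens$ --- is a more precise way to carry out what the paper intends, and if made rigorous it genuinely tightens that step.

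The weak point of the proposal is exactly the claim you flag as "delicate": Assumption~\ref{assumption:window} alone does \emph{not} force the real and ghost queues to coincide at $T_k$. The assumption guarantees no pre-window byte survives to $T_k$, but the number of in-window bytes left over can still depend on the pre-window backlog, because under FIFO the backlog competes for send capacity. With a send rule of the form $\mathrm{sent} = \min(Q, C)$ for a fixed cap $C$, the real queue (with a backlog) keeps more in-window bytes than the ghost, and the queues do not coincide. What saves the argument is the specific additive form of NetShaper's send, $\mathrm{sent}_j = \max\bigl(0, \min(\qlent{j}, \qlent{j}+z_j)\bigr)$: writing $\qlent{j} = q_j + n_j$ where $q_j$ is the surviving pre-window backlog and $n_j$ the in-window bytes, the number of in-window bytes dequeued in interval $j$ is $\max\bigl(0, \min(n_j, n_j + z_j)\bigr)$ in \emph{both} the real and ghost executions --- the $q_j$ term shifts the DP target by exactly the amount absorbed by the backlog and cancels. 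You need to prove this invariant by induction across the window's intervals; it is the content that makes coincidence true, and the proposal leaves it implicit. Finally, the $B_{\min}/B_{\max}$ clipping you introduce into $\mathrm{sent}_j$ is not part of the proposition the paper proves and actually \emph{breaks} the cancellation: a backlogged queue hitting the cap $B_{\max}$ sends strictly fewer in-window bytes than the ghost, so the coincidence fails. Either drop clipping from the statement, or the ghost reduction needs a replacement.
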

\noindent To prove \Cref{prop:sensitivity}, we show that at any DP query time
$k$, any neighboring streams $\istream, \istream'$ can only create a queue size
difference such that $|\qlent{k} - \qlent{k}'| \leq \ssens$. This is formalized
in the following Lemma:
\begin{lemma}\label{lemma:sensitivity-time}
Assume two neighboring traffic streams $\istream$ and $\istream'$
($\|\istream-\istream'\|_1 \leq \ssens$), are transmitted through {\sys} and
get the same randomness draws $z_k$ (they are parallel worlds with an identical
  \sys, but the streams differ).
Then, at any DP query time $k$, the length of the buffering queue for
  $\istream$ and $\istream'$ are $\ssens$-close.
Formally:

\begin{equation}\label{equ:composition}
        \forall k \geq 0 : |\qlent{k} - \qlent{k}'| \leq \ssens
\end{equation}
\end{lemma}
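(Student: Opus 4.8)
The plan is to prove \Cref{lemma:sensitivity-time} by induction on the DP query index $k$, using throughout the coupling built into the statement: both ``worlds'' run the same \sys{} with the same noise draws $z_k$ and differ only in the input stream. The base case is immediate, since before the first query both buffering queues are empty; even if one starts counting once the first interval's payload has been enqueued, the gap is then at most $|L^{\istream}_{0}-L^{\istream'}_{0}|$, which \Cref{def:neighboring-streams} already bounds by $\ssens$.

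For the inductive step I would decompose the transition from query $k$ to query $k+1$ into three sub-steps: (a) the DP-driven dequeue at query $k$; (b) the expiry of bytes that have sat in $\buffQ$ longer than $\winlen$, i.e.\ the TTL enforcing \Cref{assumption:window}; and (c) the enqueueing of the interval-$k$ payload. Sub-step (a) is the easy part and is where the coupling pays off. Assuming w.l.o.g.\ $\qlent{k}\ge\qlent{k}'$, the shared draw gives $\qlendpt{k}=\qlent{k}+z_k\ge\qlent{k}'+z_k=\qlendpt{k}'$, so the total bytes transmitted, $s_k=\max(0,\qlendpt{k})$ and $s_k'=\max(0,\qlendpt{k}')$, satisfy $0\le s_k-s_k'\le\qlendpt{k}-\qlendpt{k}'=\qlent{k}-\qlent{k}'$; writing the real payload actually dequeued as $r_k=\min(s_k,\qlent{k})$, a short case analysis on the signs of $\qlendpt{k}$ and $\qlendpt{k}'$ gives $0\le(\qlent{k}-r_k)-(\qlent{k}'-r_k')\le\qlent{k}-\qlent{k}'$, i.e.\ dequeuing never enlarges the gap and never flips its orientation.

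The delicate part is combining (b) and (c). A bare induction on $|\qlent{k}-\qlent{k}'|$ does not close, because each interval's arrival difference can be as large as $\ssens$ and, with no forgetting, these would accumulate across the stream. This is exactly where \Cref{assumption:window} does the work: it guarantees that at query $k$ only the $\winlen/\dpintvl$ most recent intervals can still have payload in $\buffQ$, since anything enqueued by time $k\dpintvl-\winlen$ has already been transmitted or dropped. I would therefore strengthen the induction to a generation-level invariant, tracking the residual $q_{k,j}$ (resp.\ $q_{k,j}'$) of each interval-$j$ payload still queued at query $k$: using first-in-first-out dequeuing together with the coupling of (a), show that $q_{k,j}=q_{k,j}'=0$ whenever $j\le k-\winlen/\dpintvl$ and that the surviving per-generation residuals stay close enough between the two worlds that $|\qlent{k}-\qlent{k}'|\le\sum_{j=k-\winlen/\dpintvl}^{k-1}|L^{\istream}_{j\dpintvl}-L^{\istream'}_{j\dpintvl}|$. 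The right-hand side is the L1 distance between the representations of $\istream$ and $\istream'$ over the aligned window $[(k-\winlen/\dpintvl)\dpintvl,\,k\dpintvl)$, hence at most $\ssens$ by \Cref{def:neighboring-streams}, which closes the induction. I expect the main obstacle to be pinning down the precise generation-level invariant and checking that it survives the coupled dequeue of sub-step (a) --- one has to locate where the FIFO ``waterline'' falls in each world and work through the sign cases for $\qlendpt{k}$ and $\qlendpt{k}'$ --- while the exact boundary conventions for when arrivals, queries and expirations occur within an interval are immaterial. \Cref{prop:sensitivity} then follows at once, since $\qsens=\max_k\max_{\istream,\istream'}|\qlent{k}-\qlent{k}'|$ is a worst case taken over neighboring streams under a common noise history, exactly the quantity \Cref{lemma:sensitivity-time} bounds.
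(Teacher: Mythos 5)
Your plan shares the first half of the paper's argument: couple the two worlds on the noise draws and show that the dequeue step can neither reverse the sign of $\qlent{k}-\qlent{k}'$ nor increase its magnitude. Where you diverge is the forgetting step. The paper's proof writes the recurrence $|\qlent{k}-\qlent{k}'|\leq|\qlent{k-1}-\qlent{k-1}'|+\sum_{\dpintvl_{k-1}\leq t<\dpintvl_k}|P^S_t-P^{S'}_t|$, telescopes it back to $k=0$ to obtain $|\qlent{k}-\qlent{k}'|\leq\sum_{0\leq t<\dpintvl_k}|P^S_t-P^{S'}_t|$, splits the right-hand side at $\dpintvl_k-\winlen$, and asserts that the older piece is $0$ by \Cref{assumption:window} while the recent piece is $\leq\ssens$ by \Cref{def:neighboring-streams}. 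Your objection --- that the bare telescope accumulates arrival differences with no decay --- is well placed: \Cref{assumption:window} constrains the \emph{queue} (old bytes have been transmitted or dropped), not the \emph{streams} (neighbors can differ by up to $\ssens$ in every window, so $\sum_{t<\dpintvl_k-\winlen}|P^S_t-P^{S'}_t|$ is not zero in general). The generation-level bookkeeping you propose is exactly the machinery needed to move the ``old bytes are gone'' fact \emph{inside} the recurrence instead of invoking it after the telescope, so your route is both different from and tighter than the paper's as written.

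The step you flag as the main obstacle is, however, genuinely unresolved in your sketch. A coordinatewise bound $|q_{k,j}-q'_{k,j}|\leq|L^S_{j\dpintvl}-L^{S'}_{j\dpintvl}|$ does not follow from coupled FIFO dequeues: the two worlds can exhaust different generations at different queries (the waterline sits in generation $j$ in one world and $j+1$ in the other), so slack must be carried across coordinates rather than bounded one $j$ at a time. What you actually need is the aggregate invariant $\sum_{j}|q_{k,j}-q'_{k,j}|\leq\sum_{j}|L^S_{j\dpintvl}-L^{S'}_{j\dpintvl}|$ over the live window, proved by induction: show that a coupled dequeue never increases the left-hand side (this is where the sign/magnitude lemma for $\qlendpt{k}$ versus $\qlendpt{k}'$ does real work), that enqueueing interval $k$ adds at most $|L^S_{k\dpintvl}-L^{S'}_{k\dpintvl}|$, and that TTL expiry removes only coordinates $j\leq k-\winlen/\dpintvl$, which the invariant must separately certify are already zero. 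Until that invariant is actually established, your proposal has relocated rather than discharged the hard step, so I would treat it as the lemma to prove, not a fact to cite.
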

\begin{proof}
{\sys} performs a DP query for the size of the buffering queue at intervals of
length $T$.
While transmitting $\istream$, the queue length at the end of the
$k$\textsuperscript{th} interval $\dpintvl_{k}$ is a function of three
variables:
%
(i) The buffering queue length $\qlent{k-1}$ at the end of
$\dpintvl_{k-1}$.
(ii) The total number of payload bytes that have been dequeued from the
buffering queue in the $k$\textsuperscript{th} interval,
${\payload}_{k}$.
(iii) The number of new payload bytes from the application stream added
to the buffering queue since the previous interval, which is the sum of
sizes of all packets arriving between $(k-1)$\textsuperscript{th} and
$k$\textsuperscript{th} interval, \ie
$\sum_{\dpintvl_{k-1} \leq t < \dpintvl_{k}} P^S_t$.
%
%
Therefore, the buffering queue length after dequeue in $\dpintvl_{k}$ is:
\begin{equation}\label{equ:queue-state}
\qlent{k} = {\qlent{k-1}} +
{\sum_{\dpintvl_{k-1} \leq t < \dpintvl_{k}} P^S_t}
-
{{\payload}_{k}}
\end{equation}
The difference between the queue lengths of
$\istream, \istream'$ at $\dpintvl_{k}$ is:
{\footnotesize
\begin{equation}\label{equ:queue_state_expansion}
\qlent{k} - \qlent{k}'~=~(\qlent{k-1} - \qlent{k-1}')~+ (\sum_{\dpintvl_{k-1} \leq t < \dpintvl_{k}}P^S_t - \sum_{\dpintvl_{k-1} \leq t < \dpintvl_{k}}P^{S'}_t) - ({\payload}_{k} - {\payload}_{k}')
\end{equation}}
We divide the proof into two steps.
First, we show that the dequeue stage of the shaping mechanism does not increase the difference in queue lengths.
Secondly, we show that under \Cref{assumption:window}, the enqueue stage of incoming streams does not increase the difference in queue lengths beyond $\ssens$.


To show that the dequeue stage never increases the difference between queue
lengths, we show that in any period $k$, we always dequeue more (or equal) data from
the larger queue at the time of the DP query. Formally:
\vspace{-0.2cm}
\begin{equation}\label{equ:queue-dequeue}
        (\qlent{k-1} - \qlent{k-1}') \cdot ({\payload}_{k} - {\payload}_{k}') \geq 0 ,
\end{equation}
where the indexes are because removal amount in period $k$ depends on the
query result for the queue at $k-1$.
Assume without loss of generality that $\qlent{k-1} \geq \qlent{k-1}'$.
Since each DP query receives the same noise, we get $\qlendpt{k-1} \geq
\qlendpt{k-1}'$,
and thus:
\[
  {\payload}_{k} = \min\big\{0, \qlendpt{k-1}, \qlent{k-1} \big\} \geq
\min\big\{0, \qlendpt{k-1}', \qlent{k-1}'\big\} = {\payload}_{k}' .
\]
The case for $\qlent{k-1} \leq \qlent{k-1}'$ is symmetric, and we have
 $\textrm{sign}(\qlent{k} - \qlent{k}') = \textrm{sign}({\payload}_{k} - {\payload}_{k}')$,
which implies \Cref{equ:queue-dequeue}.
Moreover, we show:
\begin{equation}\label{equation:deuque-size}
\vspace{-0.1cm}
  |{\payload}_{k} - {\payload}_{k}'| \leq |\qlent{k} - \qlent{k}'| ,
\end{equation}
using two cases and assuming that $\qlent{k} \geq \qlent{k}'$ (again the
opposite case is
symmetric). Either the DP noise is $\geq -\qlent{k}'$, and $\qlent{k} -
\qlent{k}' = {\payload}_{k} - {\payload}_{k}'$. Or the DP noise is  $< -
\qlent{k}'$, in which case ${\payload}_{k}' = 0$ but ${\payload}_{k} \leq
\qlent{k} - \qlent{k}'$. Either way, \Cref{equation:deuque-size} holds.
We can now study the queue length difference over time:
\begin{align*}
\nonumber
|\qlent{k} - \qlent{k}'| & =  |(\qlent{k-1} - \qlent{k-1}')
+
(\sum_{\dpintvl_{k-1} \leq t < \dpintvl_{k}}P^S_t - P^{S'}_t) - ({\payload}_{k} -
{\payload}_{k}') |
\\
& \leq
|(\qlent{k-1} - \qlent{k-1}') - ({\payload}_{k} - {\payload}_{k}')|
+
\sum_{\dpintvl_{k-1} \leq t < \dpintvl_{k}}|P^S_t - P^{S'}_t|
\\
& \leq
|\qlent{k-1} - \qlent{k-1}'|
+
\sum_{\dpintvl_{k-1} \leq t < \dpintvl_{k}}|P^S_t - P^{S'}_t|
\end{align*}
where the first equality uses \Cref{equ:queue_state_expansion} and the last
inequality uses \Cref{equ:queue-dequeue} and \Cref{equation:deuque-size}.
Intuitively, the dequeue stage never increases the difference between queue
lengths.

Finally, we are ready to bound the difference in queue length due to enqueues.
Let $d_k = |\qlent{k}-\qlent{k}'|$, and~$d_0 = 0$:
\begin{align}
\nonumber
  d_k &\leq d_{k-1}~+~\sum_{\dpintvl_{k-1} \leq t < \dpintvl_{k}}|P^S_t - P^{S'}_t|~=~
\nonumber
0 + \sum_{i=0}^{k}\big({\sum_{\dpintvl_{i-1} \leq t < \dpintvl_{i}}|P^S_t - P^{S'}_t|}\big)\\
&=
\nonumber
\sum_{0 \leq t < \dpintvl_{k}} |P^S_t - P^{S'}_t| =
\nonumber
\sum_{0 \leq t < \dpintvl_k - W} |P^S_t - P^{S'}_t| +
\sum_{\dpintvl_{k} - W \leq t < \dpintvl_{k}} |P^S_t - P^{S'}_t|\\
\nonumber
&\leq \ssens
\end{align}
since $\sum_{0 \leq t < \dpintvl_k - W} |P^S_t - P^{S'}_t| = 0$ by \Cref{assumption:window},
and $\sum_{\dpintvl_{k} - W \leq t < \dpintvl_{k}} |P^S_t - P^{S'}_t| \leq \ssens$ by \Cref{def:neighboring-streams}.
\end{proof}
\vspace{-0.1cm}
\Cref{lemma:sensitivity-time} $\implies$ \Cref{prop:sensitivity}, since
$\qsens = \max_{k}~\max_{\istream, \istream'} | \qlent{k} - \qlent{k}' | \leq
\ssens$.
}


\if 0
\begin{algorithm}[t]
    \DontPrintSemicolon
    \SetNoFillComment
    \SetKwFunction{get}{get\_dp\_size}
    \SetKwFunction{send}{send\_data}
    \SetKwFunction{dequeue}{dequeue}
    \SetKwFunction{doshape}{do\_shape}
    \SetKwFunction{addhdr}{add\_dp\_header}
    \SetKwFunction{encrypt}{encrypt}
    \SetKwFunction{sendpkt}{send\_udp\_pkt}
    \SetKwProg{Fn}{Function}{:}{}

    \Fn{\get{$Q,\; \varepsilon$}}{
        \am{where do $D_N$, $B_{min}$, $B_{max}$ come from?} \;
        $D^N = Q + \mathcal{N}(0, \sigma)$ \;
        $D^{N}_{clipped} =\max (B_{min}, \; \min (B_{max}, \; D_N))$ \;
        \textbf{return} $D^{N}_{clipped}$ \;
    }
    \Fn{\send{$len_R,\; len_D,\; MTU$}}{
        buf[0:$len_R$ -1] = \dequeue($Q_R$, $len_R$)\;
        buf[$len_R$:$len_R$ + $len_D$ -1] = \dequeue($Q_D$, $len_D$)\;
        \While{(buf.size $\neq$ 0)}{
            plen = buf.size > MTU ? MTU : buf.size\;
            pkt.payload = buf.get\_data(plen)\;
            buf.size -= sizeof(pkt.payload)\;
            \tcc{Header to track dummy data.}
            pkt.payload = \addhdr(pkt.payload)\;
            pkt.payload = \encrypt(pkt.payload)\;
            \sendpkt(pkt)\;
        }
    }
    \Fn{\doshape{}}{
        \While{(not at end of data transmission)}{
            {sleep}($T$)\;
            $t$ = {get\_current\_time}()\;
            $Q_t$ = {get\_queue\_size}($t$)\;
            $D^S_t$ = \get($Q_t, \; \varepsilon$)\;
            $D^R_t$ = $\min (Q_t, D^S_t)$ \tcc{payload len}
            $D^P_t$ = $D^S_t - D^R_t$ \tcc{dummy len}
            \send$(D^R_t,\; D^P_t,\; MTU)$ \;
        }
    }
    \caption{Shaper logic}
    \label{alg:middle-box-all}
\end{algorithm}
\fi

\section{Comparison of {\sys} and Tamaraw}
\label{appendix:tamaraw}
{\small
Tamaraw \cite{cai2014tamaraw} provides a
mathematical notion of privacy guarantee of a shaping strategy, called
$\epsilon$-security.
To disambiguate with {\sys}'s notion of
$(\varepsilon, \delta)$-DP, we rename Tamaraw's $\epsilon$
variable with $\gamma$.
We show that {\sys}'s $(\varepsilon, \delta)$-DP definition
is strictly stronger than Tamaraw's $\gamma$-security definition.

First, we explain Tamaraw's $\gamma$-security definition.
In Tamaraw, $W$ is a random variable representing the label of a
traffic trace.
For a trace $w$, the random variables $T_{w}$ and $T_{w}^{D}$ respectively
represent the packet trace of $w$ before and after shaping with a defense $D$.
The distribution of $T_{w}^D$ captures all variations in the observed
shape of $w$ resulting from both the defense mechanism and the
network, while the distribution of $T_{w}$ only contains variations due to
the network.
The attacker can measure the distribution of $W$ and $T_{w}^{D}$ independently.
Upon observing a network trace $t$, an optimal attack~$A$,~selects the
label of the trace with maximum likelihood of observation:
\begin{equation*}
  A(t) = \argmax_{w}{\Pr[W=w]\Pr[T_{w}^{D}=t]}
\end{equation*}
The probability that an attack $A$ outputs label $w_i$ is $\Pr_A[w_i]$.
\begin{definition*}[Tamaraw $\gamma$-privacy]
  A fingerprinting defense $D$ is said to be uniformly $\gamma$-private if for the attack $\mathcal{A}$ if we have:
  \begin{equation*}
    \max_w\big[\Pr[A(T_w^D)=w]\big] \leq \gamma
  \end{equation*}
\end{definition*}
\begin{proposition}
  Tamaraw $\gamma$-privacy is strictly weaker than the notion of
  $\varepsilon$-differential privacy.
  \label{prop:tamaraw-vs-netshaper}
\end{proposition}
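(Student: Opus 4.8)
The plan is to prove the two implications that together amount to ``strictly weaker'': (a) every mechanism satisfying {\sys}'s $(\varepsilon,\delta)$-DP guarantee is also Tamaraw $\gamma$-secure for some $\gamma = g(\varepsilon,\delta,n)$ fixed by the DP parameters and the number $n$ of labels; and (b) there is a defense that is $\gamma$-secure for an arbitrarily small $\gamma$ yet satisfies no useful $(\varepsilon,\delta)$-DP bound. Part (a) says $\gamma$-security is no stronger than DP; part (b) rules out equivalence, so together $\gamma$-security is strictly weaker.

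For (a) I would invoke DP post-processing. The Bayes-optimal attack $A$ is a (randomised-on-ties) post-processing of the observed shaped trace, which for true label $w$ is distributed as the mechanism's output $T_w^D$. So for the event ``$A$ outputs $w$'' and any other label $w'$, $(\varepsilon,\delta)$-DP gives $\Pr[A(T_w^D)=w] \le e^{\varepsilon}\Pr[A(T_{w'}^D)=w]+\delta$ (for non-neighbouring traces at window-distance $k\ssens$ one first applies group privacy to get a $(k\varepsilon,\cdot)$ bound). Equivalently the attack's false-positive rate on every competing label is at least $e^{-\varepsilon}(\Pr[A(T_w^D)=w]-\delta)$; combining this with the argmax structure of $A$ --- at an observation $t$ the label it emits has the largest likelihood, so by DP every other label's likelihood at $t$ is within a factor $e^{-\varepsilon}$ --- and summing over the $n-1$ competing labels (using that per-label accuracies are probabilities $\le 1$) leads, for an absolutely continuous mechanism such as {\sys}'s Gaussian one, to a bound $\max_w\Pr[A(T_w^D)=w]\le g(\varepsilon,\delta,n)$, with $g$ increasing in $\varepsilon,\delta$ and bounded well away from $1$ in the high-privacy regime; for the Gaussian mechanism the hardest two-label case already gives an estimate of the form $g\approx\Phi\!\big(\tfrac{\varepsilon}{2\sqrt{2\ln(1.25/\delta)}}\big)$. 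That establishes $\gamma$-security with $\gamma = g(\varepsilon,\delta,n)$.

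For (b) I would take a witness in the style of Tamaraw's own defense: bucket/segment padding that maps every trace to a fixed padded shape and groups traces into buckets of size $\ge M$. Then $T_w^D$ is a point mass, the Bayes-optimal attack on an observed bucket can only guess uniformly among its $\ge M$ members, so $\max_w\Pr[A(T_w^D)=w]\le 1/M$, i.e.\ it is Tamaraw $\gamma$-secure with $\gamma = 1/M$, which $M$ makes as small as desired. Yet two traces in distinct buckets induce observation distributions with disjoint support, so the event ``the padded shape equals $s$'' has probability $1$ under one trace and $0$ under a neighbouring trace in another bucket, forcing $\delta\ge 1$ in any $(\varepsilon,\delta)$-DP bound relating them; the defense satisfies no $(\varepsilon,\delta)$-DP guarantee with $\delta<1$, and in particular none under {\sys}'s \Cref{def:neighboring-streams} (two streams within $\ssens$ in the window-$L_1$ metric can straddle a bucket boundary). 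Hence $\gamma$-security holds without DP, and the notions are not equivalent; with (a) this gives strictness.

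The hard part is (a). DP is a worst-case statement about likelihood ratios / posteriors that holds against any adversary with arbitrary side information, whereas Tamaraw's $\gamma$ is a frequentist quantity --- the per-label success probability of one specific, Bayes-optimal attack --- and the gap between them must be closed by exploiting the argmax structure of that attack, not DP alone; indeed a ``biased-guesser'' mechanism shows that DP with even a small $\varepsilon$ does not by itself bound $\gamma$ for arbitrary mechanisms. One must also be careful with $\delta$: a generic $(\varepsilon,\delta)$-DP bound permits a probability-$\delta$ ``catastrophe'' event that can inflate a per-label success rate, so the clean implication wants either pure $\varepsilon$-DP or, as here, the fact that {\sys}'s Gaussian noise makes all output distributions mutually absolutely continuous. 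Part (b) is routine once the witness is fixed; the only things to check there are that the witness is a legitimate defense in Tamaraw's framework (it is --- it is essentially Tamaraw) and that the neighbouring relation under which DP fails is a natural one, which \Cref{def:neighboring-streams} makes concrete.
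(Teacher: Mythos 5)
Your proposal follows the same two-pronged decomposition as the paper --- one lemma showing that $\varepsilon$-DP implies Tamaraw $\gamma$-privacy, one exhibiting a $\gamma$-private defense that is not DP --- so the architecture is right. The interesting divergence is in the witness for the second lemma. The paper constructs a \emph{randomized revealing} mechanism: page $w_j$ is collapsed to the constant shape $O_c$ with probability $\beta$ and every other page with probability $\alpha$, otherwise the page is revealed; choosing $(1-\tfrac{n\gamma-1}{n-1}) < \beta < \alpha$ gives $\gamma$-privacy while the likelihood ratio at $O_c$ is $\alpha/\beta$, which can be pushed past $e^\varepsilon$ for any $\varepsilon$. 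Your witness is instead the \emph{deterministic bucketing} defense (Tamaraw itself), whose outputs for non-neighboring pages have disjoint support, forcing $\delta \geq 1$. Both establish strictness, but they probe different failure modes: yours kills $(\varepsilon,\delta)$-DP outright via support mismatch, whereas the paper's shows failure of pure $\varepsilon$-DP even when all observation distributions are mutually absolutely continuous with a bounded but tunable likelihood ratio --- a finer point, and one that doesn't depend on asserting that two streams ``can straddle a bucket boundary'' under Def.~\ref{def:neighboring-streams}, which your sketch leaves unverified.

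For the first lemma your account is substantially vaguer than the paper's. The paper sticks to pure $\varepsilon$-DP, writes down the post-processing inequality $\Pr[A(T_{w_i}^{DP})=w_i] \le e^{\varepsilon}\Pr[A(T_{w_j}^{DP})=w_i]$, sums over $j$, and reads off the explicit threshold $\varepsilon \le \log(n\gamma)$; no Gaussian-specific estimate or absolute-continuity caveat is needed. Your $(\varepsilon,\delta)$-version never produces a closed-form $g$, and the pointwise likelihood-ratio step you invoke (``every other label's likelihood at $t$ is within a factor $e^{-\varepsilon}$'') is a pure-DP density statement that does not hold under $(\varepsilon,\delta)$-DP. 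Your ``biased-guesser'' worry is also a red herring: Tamaraw's $\gamma$ is defined against the Bayes-optimal attack with a uniform prior, and both your argument and the paper's implicitly assume randomized tie-breaking, so the degenerate always-guess-$w^*$ attacker is simply outside the definition. The right thing to say is not that DP ``does not by itself bound $\gamma$ for arbitrary mechanisms'' but that the implication is stated for the specific attack $A(t) = \argmax_w \Pr[W=w]\Pr[T^D_w=t]$, which your own writeup already uses.
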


To prove \Cref{prop:tamaraw-vs-netshaper}, we prove the following two lemmas.
\begin{lemma}
  There exists a Tamaraw $\gamma$-private defense mechanism that fails to
  satisfy $\varepsilon$-DP for any given value of $\varepsilon$.
\end{lemma}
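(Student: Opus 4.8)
The plan is to exhibit a single defense mechanism $D$ over a finite family of traffic traces that is uniformly $\gamma$-private for an arbitrarily small $\gamma$, yet whose output distribution on some neighboring pair of traces has mismatched support, so that the pure-DP inequality of \Cref{eq:dp} (with $\delta=0$) is violated for every finite $\varepsilon$. Since Tamaraw's definition only constrains the worst-case per-trace success probability of the Bayes-optimal attack, a mechanism may be strong by that yardstick while having a single ``leaky'' low-probability event that destroys any multiplicative DP bound.

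Concretely, I would fix a large integer $n$ and take $n+1$ traces $w_0, w_1, \dots, w_n$, with $w_0$ and $w_1$ chosen to be neighbors under the relevant neighboring definition (say they agree everywhere except in one small burst, so their L1 distance over every window is below $\ssens$). Define $D$ as follows: on any input $w_i$ with $i\ge 1$, output a fixed common shape $c$; on input $w_0$, output a distinguished shape $s\ne c$ with probability $p$ and the common shape $c$ with probability $1-p$. Instantiate Tamaraw's model with a uniform prior over the $n+1$ traces and no network noise, so $T_{w_i}=w_i$, and analyze the Bayes-optimal (MAP) attack $A$ of the definition.

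For $\gamma$-privacy: observing $s$ reveals $w_0$, while observing $c$ leaves the posterior essentially uniform over $w_1,\dots,w_n$ (since $\Pr[D(w_0)=c]=1-p<1=\Pr[D(w_i)=c]$), so the MAP attack names one of $w_1,\dots,w_n$ with equal likelihood and names $w_0$ only on input $s$. Thus $\Pr[A(T_{w_0}^{D})=w_0]=p$ and $\Pr[A(T_{w_i}^{D})=w_i]=1/n$ for $i\ge 1$, giving uniform $\gamma$-privacy with $\gamma=\max\{p,\,1/n\}$, which tends to $0$ as $n\to\infty$ with $p\le 1/n$. For the DP failure: $\Pr[D(w_0)=s]=p>0$ while $\Pr[D(w_1)=s]=0$, so with $\mathcal{R}=\{s\}$ the requirement $p\le e^{\varepsilon}\cdot 0 + 0$ fails for every $\varepsilon$; hence $D$ satisfies no $\varepsilon$-DP bound (indeed no $(\varepsilon,\delta)$-DP with $\delta<p$).

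The delicate point, which I expect to be the main obstacle, is the tie-breaking rule of the $\argmax$ attack: on input $c$ the likelihoods of $w_1,\dots,w_n$ are all equal, and a deterministic attack that always resolves this tie to, say, $w_1$ would identify $w_1$ with probability $1$, collapsing $\gamma$ to $1$. This is handled either by the standard convention that the optimal attack randomizes uniformly over the $\argmax$ set (which the ``$\Pr_A[w_i]$'' notation in the definition already anticipates), or, to be convention-independent, by replacing the single common shape $c$ with $n$ nearly-identical shapes $c_1,\dots,c_n$ and letting $w_i$ put an $\eta$-sized excess mass on $c_i$: this makes every relevant $\argmax$ a singleton while keeping each per-trace success probability at $1/n+\eta$, and a short ordering check ($\eta$ small relative to $p$, $p$ small relative to $1/n$) ensures $w_0$ is still never the MAP winner on any $c_j$. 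Either way, $\gamma$ can be made arbitrarily small while $\varepsilon$-DP fails outright, which is exactly the separation the lemma asserts.
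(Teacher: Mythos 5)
Your proposal is correct, and it takes a genuinely different---and in fact stronger---route than the paper's. The paper's construction shapes every page to a common pattern $O_c$, page $w_j$ with probability $\beta$ and all others with probability $\alpha$, and reveals the page unmodified otherwise; the DP violation is the \emph{finite} likelihood ratio $\alpha/\beta$ on observing $O_c$, so one must re-tune $\alpha,\beta$ for each target $\varepsilon$ (and, because $\alpha\le 1$ while $\beta$ must exceed roughly $1-\gamma$ for $\gamma$-privacy, the achievable $\gamma$ necessarily grows toward $1$ as $\varepsilon$ grows). Your construction instead introduces a distinguished output $s$ that $w_0$ emits with small probability $p$ and that no other trace can emit at all; the resulting \emph{support mismatch} makes the likelihood ratio infinite, so a single mechanism fails $\varepsilon$-DP for \emph{every} $\varepsilon$ simultaneously, and you retain $\gamma\approx 1/n$ independent of $\varepsilon$. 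You also explicitly confront a real subtlety that the paper's proof glides over: the Bayes-optimal $\argmax$ attack has ties on the common output, and a deterministic tie-break would drive $\gamma$ to $1$; the paper's bound $\Pr[A(O_c)=w_i]\le 1/n$ silently assumes uniform tie-breaking, whereas your perturbation to $n$ distinct shapes $c_1,\dots,c_n$ (with $w_i$ placing $\eta$ extra mass on $c_i$) makes the argmax a singleton and renders the argument convention-independent. Finally, you explicitly flag that $w_0$ and $w_1$ must be chosen to be \emph{neighbors} under the stream neighboring definition so that the DP inequality is actually supposed to constrain this pair---a hypothesis the paper's proof uses implicitly. Overall the paper's argument is shorter and suffices for the intended ``strictly weaker'' conclusion, but yours separates the two notions more cleanly, is uniform in $\varepsilon$, and is robust to the $\argmax$ tie-breaking convention.
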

\begin{proof}
Consider a web service with a dataset of $n$ web pages.
We propose a defense $D$, which shapes pages to a constant-rate pattern $O_c$
with probabilities $\alpha$ or $\beta$.
$D$ reshapes page $w_j$ to $O_c$ with probability $\beta$ and
all other pages $w_i \neq w_j$ with probability $\alpha$.
If a page is not shaped, it is revealed.


The probability that any attack can correctly identify the label for $w_j$ is
upper-bounded by:  $\Pr[A(T^{D}_{w_{i=j}}) = w_j]$
\begin{align*}
  & = \Pr[A(T^D_{w_{i=j}}) = w_j |
  T^D_{w_{i=j}}=T_{w_{i=j}}]\Pr[T^D_{w_{i=j}}=T_{w_{i=j}}] +
  \\
  &~~~~\Pr[A(T^D_{w_{i=j}}) = w_j | T^D_{w_{i=j}}=O_c]\Pr[T^D_{w_{i=j}}=O_c]
  \\
  & \leq  1.(1-\beta) + \frac{1}{n}\beta = p_c^j
\end{align*}
For $(1- \frac{n\gamma - 1}{n-1}) < \beta$ we have: $p_c^j \leq \gamma$.
Similarly, the probability that any attack can correctly classify $w_{i\neq j}$
is upper-bounded by $p_c^i = 1-\alpha + \frac{\alpha}{n}$, and for $(1-
\frac{n\gamma - 1}{n-1}) < \alpha$ we have: $p_c^j \leq \gamma$.
Therefore, for all values of $\alpha$ and $\beta$ such that $(1- \frac{n\gamma -
1}{n-1}) < \beta < \alpha$, the probability that any attack can successfully
guess~a~page is less than $\gamma$, and
the defense is uniformly $\gamma$-private.

When the output of the algorithm is $O_{c}$, the probability of the page being
$w_j$ is $\beta$ and any other page is $\alpha$. Thus,
$\log(\frac{\Pr[T_{w_{i\neq j}}^{D}=O_{c}]}{\Pr[T_{w_{i=j}}^{D}=O_{c}]})
= \log(\frac{\alpha}{\beta})$.
By setting $\alpha > {e^{\varepsilon}}\beta$, we get a mechanism that is
$\gamma$-private for Tamaraw but not $\varepsilon$-DP for {\sys}.
\vspace{-0.2cm}
\end{proof}
\begin{lemma}
  A $\varepsilon$-DP shaping algorithm is Tamaraw $\gamma$-private for
  $\varepsilon \leq \log(n\gamma)$.
\end{lemma}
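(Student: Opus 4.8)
The plan is to reduce Tamaraw's $\gamma$-privacy to a statement about the Bayes-optimal adversary under a uniform prior over the monitored set of $n$ pages, and then to bound that adversary's success using only the $\varepsilon$-DP guarantee of the shaping mechanism together with closure under post-processing. Throughout I would use that, in Tamaraw's setting, the $n$ monitored pages all lie within the sensitivity $\ssens$ of {\sys}'s neighboring definition (\Cref{def:neighboring-streams}) --- this is exactly the regime in which a single $\ssens$ ``covers all distances'', as in our web case study --- so every pair of pages $w,w'$ is neighboring and $\varepsilon$-DP yields $\Pr[\mathcal{M}(w)\in \mathcal{R}] \le e^{\varepsilon}\,\Pr[\mathcal{M}(w')\in \mathcal{R}]$ for every event $\mathcal{R}$.

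The core of the argument is a single application of post-processing. For an arbitrary attack $A$ --- in particular the optimal one from the definition --- the map $w \mapsto A(T_w^D)$ is a post-processing of the $\varepsilon$-DP mechanism, hence itself $\varepsilon$-DP. Applying this to the event ``the attack reports the label $w$'' gives, for every ordered pair of monitored pages $w$ and $w'$,
\[
  \Pr[A(T_w^D) = w] \;\le\; e^{\varepsilon}\,\Pr[A(T_{w'}^D) = w].
\]
Averaging this inequality over $w' \in \{w_1,\dots,w_n\}$ (the left-hand side does not depend on $w'$) and then summing over $w$, and using that for each fixed $w'$ the attack outputs exactly one label so $\sum_{w}\Pr[A(T_{w'}^D) = w] = 1$, I obtain $\sum_{w}\Pr[A(T_w^D) = w] \le e^{\varepsilon}$. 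Dividing by $n$, the adversary's success probability under the uniform prior is at most $e^{\varepsilon}/n$. Since $\varepsilon \le \log(n\gamma)$ is equivalent to $e^{\varepsilon}/n \le \gamma$, this is at most $\gamma$, which is Tamaraw's $\gamma$-privacy.

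Two points need care, and the second is where I expect to spend the most effort. First, one must state cleanly that every pair of monitored pages is neighboring; if instead two pages were at distance $k\ssens$, the constant on the right becomes $e^{k\varepsilon}$ by group privacy and the bound degrades to $e^{k\varepsilon}/n$ --- worth a remark, but not needed under the standing assumption. Second, and more delicate, is aligning the adversary model: the argument above controls the adversary's \emph{expected} success over a uniform choice of target page, which is precisely what Tamaraw's ``uniform'' $\gamma$-privacy measures; making this correspondence explicit --- and noting that a per-class guarantee for every individual $w$ would require extra symmetry of the mechanism that $\varepsilon$-DP alone does not provide --- is the step I would phrase most carefully. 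The remaining details (handling randomized tie-breaking in the MAP rule, and checking that the post-processing inequality holds verbatim whether the observation space is discrete or has densities) are routine and follow directly from the DP definition in \Cref{eq:dp}.
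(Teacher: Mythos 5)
Your reduction mirrors the paper's: both start from the post-processed DP inequality $\Pr[A(T_{w}^D)=w] \le e^\varepsilon\,\Pr[A(T_{w'}^D)=w]$ for every pair of monitored (neighboring) pages and then sum or average it over candidates. Your presentation is in fact the cleaner of the two: averaging over $w'$ and summing over $w$, using $\sum_w \Pr[A(T_{w'}^D)=w]=1$, gives $\sum_w \Pr[A(T_w^D)=w] \le e^\varepsilon$ in one line, whereas the paper fixes $i$, sums over $j$, and routes the argument through a marginal $\Pr_A[w_i]$. Same computation, different bookkeeping.

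The problem is exactly the one you flag and then argue away. Both computations control only the Bayes-average success, $\tfrac{1}{n}\sum_w \Pr[A(T_w^D)=w]\le e^\varepsilon/n$. But the paper's rendering of Tamaraw $\gamma$-privacy is a worst case over pages, $\max_w \Pr[A(T_w^D)=w] \le \gamma$, which is strictly stronger. You acknowledge that a per-page bound would need more than $\varepsilon$-DP, but then assert the uniform average is "precisely what Tamaraw's uniform $\gamma$-privacy measures" --- that does not match the definition as written in the paper. Notably, the paper's own proof has the same hole: it equates $\sum_{j=1}^n \Pr[A(T_{w_j}^{DP})=w_i]$ with $\Pr_A[w_i]$ and then bounds $\Pr_A[w_i]\le 1$. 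If $\Pr_A[w_i]$ is the marginal output probability under a uniform prior, the identity is off by a factor of $n$ and the chain only yields $e^\varepsilon$, not $e^\varepsilon/n$; if it is literally the displayed sum, it need not be bounded by $1$ (it can reach $n$). And the per-page claim is genuinely false: with $n=3$, output alphabet $\{a,b\}$, and output laws $p_{w_1}=(2/3,1/3)$, $p_{w_2}=p_{w_3}=(1/2,1/2)$, the mechanism is $\varepsilon$-DP with $e^\varepsilon=3/2$, yet the MAP attack maps $a\mapsto w_1$ and achieves $\Pr[A(T_{w_1}^D)=w_1]=2/3 > e^\varepsilon/n = 1/2$. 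So the step you said you would "phrase most carefully" is not merely delicate --- it is the load-bearing gap. What your derivation (and, on close reading, the paper's) actually establishes is the average-case bound; for the lemma to close, either its statement or the quoted Tamaraw definition would have to be phrased in terms of the Bayes-average rather than the maximum over pages.
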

\begin{proof}
For a trace $w$, the random variable $T_{w}^{DP}$ represents the packet
trace of $w$ after a differentially private shaping mechanism is applied.
The classification attack on shaped traffic can
be considered a post-processing of the results of a
differentially private shaping mechanism (i.e. defense) and, hence, is
differentially
private. Therefore, we have:
\begin{align*}
& \frac{\Pr[A(T_{w_{i}}^{DP}) = w_i]}{\Pr[A(T_{w_{j}}^{DP}) = w_i]} \leq e^
{\varepsilon}
\Rightarrow \Pr[A(T_{w_{i}}^{DP}) = w_i] \leq e^
{\varepsilon} .\Pr[A(T_{w_{j}}^{DP}) = w_i]
\end{align*}
Intuitively, this implies that the likelihood of the attacker correctly
classifying the trace with label $i$ compared to incorrectly classifying it with
label $j$ is bounded by $e^{\varepsilon}$.
The above inequality is correct for all $w_j: j\in \{1, 2, \dots, n\}$, and we
can extend the above equation to calculate the summation over $j$:
\vspace{-0.3cm}
\begin{align*}
n\times \Pr[A(T_{w_{i}}^{DP}) = w_i] \leq e^{\varepsilon}\sum_{j=1}^{n}
\Pr[A(T_{w_{j}}^{DP}) = w_i]
= e^{\varepsilon} \operatorname{Pr}_{A}[w_i].
\end{align*}
Thus, for any given trace $w_i$, the probability that any attack $A$,
classifies it correctly is bounded by:
$\Pr[A(T_{w_{i}}^{DP}) = w_i] \leq \frac{e^{\varepsilon} \Pr_{A}[w_i]}{n}$.
The probability that an attacker can guess the victim’s trace is bounded by:
$\max_{w_i}{\Pr[A(T_{w_{i}}^{DP}) = w_i]} \leq \frac{e^{\varepsilon}}{n}
\max_{w_i}{\operatorname{Pr}_{A}[w_i]} \leq \frac{e^{\varepsilon}}{n} \leq
\gamma$.
\end{proof}
}

\end{document}